\documentclass[accepted]{uai2025} 

\newcommand{\removed}[1]{}
\usepackage{times}
\usepackage{soul}
\usepackage{url}
\usepackage[utf8]{inputenc}
\usepackage{graphicx}
\usepackage{amsmath}
\usepackage{amsthm}
\usepackage{booktabs}
\usepackage{algorithm}
\usepackage{algorithmic}
\usepackage{stackengine}
\def\defeq{\mathrel{\ensurestackMath{\stackon[1pt]{=}{\scriptscriptstyle\Delta}}}}

\usepackage{algorithm}
\usepackage{algorithmic}

\usepackage{times}

\usepackage{url}

\usepackage{amsmath}

\usepackage{amssymb}
\usepackage{mathtools}
\usepackage{amsthm}

\usepackage{algorithmic}

\usepackage{lscape}
\usepackage[capitalize,noabbrev]{cleveref}

\theoremstyle{plain}

\usepackage[textsize=tiny]{todonotes}
\usepackage{multirow}

\usepackage{ascmac}
\usepackage{float}
\usepackage{perpage}
\MakeSorted{figure}
\MakeSorted{table}

\usepackage{url}
\usepackage{natbib}
\usepackage{chapterbib}

\usepackage{color}
\usepackage{tikz}
\tikzset{%
mynode/.style={circle,minimum width=.5ex, fill=none,draw}, 
myfillnode/.style={circle,minimum width=.5ex, fill=lightgray,draw}, 
}
\usepackage{amssymb}
\usepackage{natbib}

\newcommand{\indep}{\perp \!\!\! \perp}
\usepackage{amsmath}               
\usepackage{lscape}
\usepackage{algorithm}
\usepackage{color}
\usepackage{tikz}
\usepackage{amsmath,amsthm}
\newtheorem{theorem}{Theorem}
\newtheorem{definition}{Definition}
\newtheorem{assumption}{Assumption}
\newtheorem{lemma}{Lemma}

\usepackage{multirow}
\usepackage{comment}
\usepackage{here}
\allowdisplaybreaks[4]
\usepackage{caption}
\usepackage{bbding}
\usepackage{arydshln}
\usepackage{afterpage}

\usepackage{mathrsfs}

\usepackage{soul}


\usepackage{amsmath}
\usepackage{amssymb}
\usepackage{mathtools}
\usepackage{amsthm}
                 

\usepackage[american]{babel}

\usepackage{natbib} 
    \bibliographystyle{plainnat}
    
\usepackage{mathtools} 
\usepackage{booktabs} 
\usepackage{tikz} 



\title{Moments of Causal Effects}

%
%
\author[1]{\href{mailto:<Yuta.Kawakami@mbzuai.ac.ae}{Yuta Kawakami}}
\author[1]{\href{mailto:<Jin.Tian@mbzuai.ac.ae}{Jin Tian}}
\affil[1]{%
Mohamed bin Zayed University of Artificial Intelligence, UAE
}

\begin{document}
\maketitle

\begin{abstract}
The moments of random variables are fundamental statistical measures for characterizing the shape of a probability distribution, encompassing metrics such as mean, variance, skewness, and kurtosis. 
Additionally, the product moments, including covariance and correlation, reveal the relationships between multiple random variables.
On the other hand, the primary focus of causal inference is the evaluation of causal effects, which are defined as the difference between two potential outcomes.
While traditional causal effect assessment focuses on the average causal effect, this work  provides definitions, identification theorems, and bounds for moments and product moments of causal effects to analyze their distribution and relationships.
We conduct experiments to illustrate the estimation of the moments of causal effects from finite samples and demonstrate their practical application using a real-world medical dataset.
\end{abstract}


\section{Introduction}

The \emph{moments} of random variables have been fundamental statistical measures since their introduction by Pafnuti Lvovich Chebyshev in the mid-nineteenth century \citep{Mackey1980}.
The $m$-th moment of a random variable $Y$ is defined as $\mathbb{E}[Y^m]$, and the $m$-th central moment is defined as $\mathbb{E}[(Y-\mathbb{E}[Y])^m]$.
These moments characterize the shape of a random variable’s probability distribution, encompassing measures such as mean, variance, skewness, and kurtosis \citep{Pearson1896,Joanes1998,Cramer1999,Doane2011,Hippel2005,Westfall2014}.
The (central) moments of random variables also play a fundamental role in various machine learning techniques \citep{Bishop2006,Hastie2009,Murphy2022}.

On the other hand, the primary focus of causal inference is the evaluation of causal effects  $Y_1 - Y_0$, where $Y_x$ denotes the potential outcome under treatment $X = x$, rather than a single random variable $Y$.
Traditionally,  to assess causal effects, researchers  estimate the average causal effect (ACE), i.e., $\mathbb{E}[Y_1 - Y_0]$, which represents the first moment of causal effects \citep{Neyman1923, Rubin1978, Holland1986, Balke1997, Robins1999}.

Recently, there has been increasing interest in exploring aspects of causal effects beyond their average, particularly in the distributional properties of causal effects \citep{Ju2010, Wiedermann2022, Lin2023, Kennedy2023b, Post2023}.
{The shape of the distribution of causal effects uncovers causal effect heterogeneity, which is an actively researched topic in the field of statistics, causal inference, and machine learning \citep{Athey2016, Shalit2017, Athey2019, Kunzel2019, Wager2018, Singh2023, Kawakami2024b}.
Causal effect heterogeneity refers to the variation in causal effects across individuals or subgroups within a population.
Existing works on causal effect heterogeneity mainly examine the conditional average causal effects (CACE), i.e., $\mathbb{E}[Y_1-Y_0|W=w]$, based on subjects’ covariates $W$.
However, CACE captures only the heterogeneity across subpopulations specified by observed covariates $W$, not the heterogeneity across individuals.
In contrast, the shape of the distribution of causal effects reveals the heterogeneity of causal effects across individuals and provides complementary information to CACE.}

Our objective is to address 
the following causal question:
\begin{center}
(\textbf{Question 1}).
``{\it
How are causal effects distributed?
}"
\end{center}
We approach this question by studying the moments of causal effects $\mathbb{E}\Big[(Y_1-Y_0)^m\Big]$. These moments serve as measures that characterize the shape of the distribution of causal effects. Furthermore, we examine the central moments 
$\mathbb{E}\Big[\Big\{(Y_1-Y_0)-(\mathbb{E}[Y_1]-\mathbb{E}[Y_0])\Big\}^m\Big]$.
These moments quantify deviations from the ACE.
They encompass key statistical measures such as variance, standard deviation, skewness, and kurtosis, which are fundamental for characterizing the shape of the distribution of causal effects.
While previous work has examined  the second central moment (variance) of causal effects \citep{Heckman1997,Hernan2024}, this work provides a general analysis by studying arbitrary moments of causal effects.


Several studies \citep{DiNardo1996,Robins2001,Rubin2006,Jung2021,Kennedy2023d,Kim2024} aim to estimate the probability density function (PDF) of $Y_x$. 
However, identifying the moments of causal effects requires the joint distribution of $(Y_0, Y_1)$. 
The joint distribution of potential outcomes has been explored in the framework of probabilities of causation (PoC) \citep{Pearl1999, Tian2000, ALi2024}. 
We identify the (central) moments of causal effects  by leveraging the recent identification results for variants of the PoC established by  \citet{Kawakami2024}. 
Additionally, we derive bounds for the (central) moments of causal effects under relaxed assumptions. 

We further address the following causal question:
\begin{center}
(\textbf{Question 2}).
``{\it
How are two causal effects related?
}"
\end{center}
Researchers often consider more than two treatment options and compare multiple potential outcomes, $\{Y_1,Y_2,\dots,Y_R\}$,  
as discussed in 
\citep{Bartholomew1959, Page1963, Imbens2000, Imai2004}.
We then investigate the \emph{product moments} of causal effects 
$\mathbb{E}\Big[(Y_i-Y_j)(Y_k-Y_h)\Big]$,
as well as the central product moments (covariance and correlation) of causal effects,   
where $Y_i-Y_j$ represents the causal effect of changing $X=j$ to $X=i$, and $Y_k-Y_h$ represents the causal effect of changing $X=h$ to $X=k$. 
The product moments of causal effects reveal the association between two causal effects.
When $\mathbb{E}[(Y_i-Y_j)(Y_k-Y_h)]$ is positive, subjects with larger $Y_i-Y_j$ tend to have larger $Y_k-Y_h$.
When it is negative, subjects with larger $Y_i-Y_j$ tend to have smaller $Y_k-Y_h$.
When it is zero, there is no linear relationship between $Y_k-Y_h$ and $Y_i-Y_j$. 
The product moments of causal effects provide additional insights beyond the ACE $\mathbb{E}[Y_i-Y_j]$ and $\mathbb{E}[Y_k-Y_h]$.

We establish identification theorems for the (central) product moments of causal effects and derive bounds for them under more relaxed assumptions. 
Finally, we conduct experiments estimating the (product) moments of causal effects from finite samples and demonstrate their practical application using a real-world medical dataset. 

\section{Notations and Backgrounds}

We represent each variable with a capital letter $(X)$ and its realized value with a lowercase letter $(x)$. Let $\mathbb{I}(x)$ denote the indicator function, which takes the value $1$ if $x$ is true and $0$ otherwise.
We denote $\Omega_Y$ as the domain of $Y$, $\mathbb{E}[Y]$ as the expectation of $Y$, and $\mathbb{P}(Y < y)$ as the cumulative distribution function (CDF). 
Additionally, we denote the $m$-th Cartesian product of the domain $\Omega$ as $\Omega^m$, i.e., $\Omega^m = \Omega \times \Omega \times \dots \times \Omega$ (repeated $m$ times).

{\bf Moments of random variables.}
Moments are measures related to the shape of the probability distribution of a random variable. 
For each $m \geq 1$,
the $m$-th moment of a random variable $Y$ is defined by the expectation of the $m$-th power of $Y$, i.e., 
$\mathbb{E}[Y^m]$, and the central moment of a random variable is defined by
$C_m\defeq\mathbb{E}[(Y-\mathbb{E}[Y])^m]$.
Common statistics involving moments include mean $\mathbb{E}[Y]$, 
variance $C_2$, standard deviation $\displaystyle\sqrt{C_2}$, skewness $\displaystyle C_3/C_2^{3/2}$, and kurtosis $C_4/C_2^2$. 
They are essential statistics to capture the shape of the probability distribution of $Y$.
The product moments of the random variables $X$ and $Y$ are given by $\mathbb{E}[XY]$, the covariance is $\mathbb{E}[(X-\mathbb{E}[X])(Y-\mathbb{E}[Y])]$, and are used to define Pearson correlation coefficient  $\frac{\mathbb{E}[(X-\mathbb{E}[X])(Y-\mathbb{E}[Y])]}{\sqrt{\mathbb{E}[(X-\mathbb{E}[X])^2]}\sqrt{\mathbb{E}[(Y-\mathbb{E}[Y])^2]}}$.  

{\bf Structural causal models.} 
We use the language of Structural Causal Models (SCM) as our basic semantic and inferential framework \citep{Pearl09}.
An SCM ${\cal M}$ is a tuple $\left<{\boldsymbol U},{\boldsymbol V}, {\cal F}, \mathbb{P}_{\boldsymbol U} \right>$, where ${\boldsymbol U}$ is a set of exogenous (unobserved) variables following a joint distribution $\mathbb{P}_{\boldsymbol U}$, and ${\boldsymbol V}$ is a set of endogenous (observable) variables whose values are determined by structural functions ${\cal F}=\{f_{V_i}\}_{V_i \in {\boldsymbol V}}$ such that $v_i:= f_{V_i}({\mathbf{pa}}_{V_i},{\boldsymbol u}_{V_i})$ where ${\mathbf{PA}}_{V_i} \subseteq {\boldsymbol V}$ and $U_{V_i} \subseteq {\boldsymbol U}$. 
Each SCM ${\cal M}$ induces an observational distribution $\mathbb{P}_{\boldsymbol V}$ over ${\boldsymbol V}$. 
An atomic intervention of setting a set of endogenous variables ${\boldsymbol X}$ to constants ${\boldsymbol x}$, denoted by $do({\boldsymbol x})$, replaces the original equations of ${\boldsymbol X}$ by ${\boldsymbol X} :={\boldsymbol x}$ and induces a \textit{sub-model}  ${\cal M}_{\boldsymbol x}$.
We denote the potential outcome $Y$ under intervention $do({x})$ by $Y_{{x}}({\boldsymbol u})$, which is the solution of $Y$  in the sub-model ${\cal M}_{x}$ given ${\boldsymbol U}={\boldsymbol u}$.

{\bf Causal effects.}
Researchers usually consider the following SCM, denoted as ${\cal M}$:
\begin{equation}
\begin{gathered}
Y:=f_Y(X,U_Y),\ \  X:=f_X(U_X),
\end{gathered}
\end{equation}
where $U_Y$ and $U_X$ are latent exogenous variables.
The individual causal effect (ICE) is defined as $\text{\normalfont ICE}({\boldsymbol u})\defeq Y_1({\boldsymbol u})-Y_0({\boldsymbol u})$.
The average causal effect (ACE) is defined as $\text{\normalfont ACE}\defeq\mathbb{E}[Y_1-Y_0]$. 
\citet{Heckman1997} showed the identification of ICE under the rank invariance assumption stating that 
{``for almost every subject whose potential outcomes are $(y_1,y_0)=(Y_1,Y_0)$,  $\mathbb{P}(Y_0<y_0)=\mathbb{P}(Y_1<y_1)$ holds",} 
which is a strong assumption. 
They identified the variance of causal effects by identifying ICE. 
\citep{Hoshino2020} studied the identification  of the joint PDF of $(Y_0,Y_1)$ under various parametric specifications, whereas this work considers a nonparametric setting.

{\bf Joint distribution of potential outcomes.}
Joint distributions of potential outcomes, fundamental to this work, have been employed in the framework of probabilities of causation (PoC) \citep{Pearl1999,Tian2000,Li2024}. 
PoC are a family of probabilities quantifying whether one event was the real cause of another. 
Recently, \citet{Kawakami2024} defined the probability of necessity and sufficiency (PNS) for continuous treatment and outcome as $\mathbb{P}(Y_{x_0}< y \leq Y_{x_1})$, and {showed that it is identified from $\mathbb{P}(X,Y)$ if there are no unmeasured confounders and the function $f_Y(x,U_Y)$ satisfies a monotonicity assumption:}  
\begin{assumption}[Exogeneity]
\label{ASEXO2}
$Y_x\indep X$ for all $x \in \Omega_X$.
\end{assumption}
\begin{assumption}[Monotonicity over $f_Y$]
\label{MONO2}
{The function $f_Y(x,U_Y)$ is either (i) monotonic increasing on $U_Y$
for all $x \in \Omega_X$ almost surely w.r.t. $\mathbb{P}_{U_Y}$, or (ii) monotonic decreasing on $U_Y$
for all $x \in \Omega_X$
almost surely w.r.t. $\mathbb{P}_{U_Y}$.} 
\end{assumption}
{We will use the above assumptions to  identify moments of causal effects in this paper.}

\section{Moments of Causal Effects}

In this section, we study the (central) moments of causal effects $Y_1 - Y_0$ to address (\textbf{Question 1}). 

\subsection{Definition of the moments of causal effects} 
\label{sec-mce}
We define the moments of causal effects in the same manner as the moments of random variables.
\begin{definition}[The moments of causal effects]
For each $m\geq 1$,
the $m$-th moment of causal effect $Y_1-Y_0$ is defined as 
\begin{equation}
\mu^{(m)}\defeq\mathbb{E}\Big[(Y_1-Y_0)^m\Big].
\end{equation}
\end{definition}
The $m$-th moment of the causal effect is defined as the expectation of the $m$-th power of the causal effect $Y_1 - Y_0$.
The first moment, $\mu^{(1)}=\mathbb{E}[Y_1 - Y_0]$, is the ACE.

We present two examples to illustrate what the moments of causal effects specifically measure in simple SCMs.

{\bf Example 1.} (Homogeneous ICE)
Consider a simple linear SCM given by $Y=X+U_Y$ where $\mathbb{E}[U_Y]=0$.
In this model, the ICE is equal to $1$ for every subject, meaning that the causal effect is homogeneous.
Consequently, all of the $m$-th moments of the causal effect are also equal to $1$.
Note that 
$\mathbb{E}[Y_1^m-Y_0^m]=\mathbb{E}[(1+U_Y)^m-U_Y^m]$ {varies with $m$.} 

{\bf Example 2.} (Heterogeneous ICE)
Consider a linear SCM with an interaction term between $X$ and $U_Y$, $Y=X(U_Y+1)+1$ where $\mathbb{E}[U_Y]=0$.
In this model, $\text{ICE} = Y_1 - Y_0 = U_Y+1$, which varies across subjects, making the causal effect heterogeneous.
The $m$-th moment of the causal effect  is given by $\mathbb{E}[(U_Y+1)^m]$.
In comparison, 
$\mathbb{E}[Y_1^m-Y_0^m]=\mathbb{E}[(U_Y+2)^m-1]$.

The central moments of causal effects are defined as the moments of causal effects measured relative to their mean.
\begin{definition}[The central moment of causal effects]
For each $m\geq 1$,
the $m$-th central moment of causal effect $Y_1-Y_0$ is defined as
\begin{equation}
\overline{\mu}^{(m)}\defeq\mathbb{E}\Big[\Big\{(Y_1-Y_0)-(\mathbb{E}[Y_1]-\mathbb{E}[Y_0])\Big\}^m\Big].
\end{equation}
\end{definition}
The first central moment of causal effects is always $0$ since we have $\overline{\mu}^{(1)}=\mathbb{E}[(Y_1-Y_0)]-\mathbb{E}[(\mathbb{E}[Y_1]-\mathbb{E}[Y_0])]=0$.
When $m\geq 2$, $\overline{\mu}^{(m)}$ is not equal to $\mathbb{E}[(Y_1-\mathbb{E}[Y_1])^m]-\mathbb{E}[(Y_0-\mathbb{E}[Y_0])^m]$ as discussed in \citep{Wiedermann2022}.

We revisit Examples 1 and 2 in Section~\ref{sec-mce}  to illustrate  the central moments of causal effects. 

{\bf Example 1 (continued).} (Homogeneous ICE)
In SCM given by $Y=X+U_Y$ where $\mathbb{E}[U_Y]=0$, the $m$-th central moment is equal to $0$ for any $m \geq 1$, indicating that the causal effect is homogeneous.

{\bf Example 2 (continued).} (Heterogeneous ICE)
In SCM $Y=X(U_Y+1)+1$ where $\mathbb{E}[U_Y]=0$, the $m$-th central moment is given by $\mathbb{E}[U_Y^m]$ for all $m \geq 1$.
The central moments of causal effects correspond to the (central) moments of the random variable $U_Y$.

The higher order of moments of causal effects may  provide useful information on the distribution. The central moments of causal effects can be used to compute important and well-known statistics such as the variance $\overline{\mu}^{(2)}$, standard deviation $\sqrt{\overline{\mu}^{(2)}}$, skewness ${\overline{\mu}^{(3)}}/{{\overline{\mu}^{(2)}}^{3/2}}$, and kurtosis ${\overline{\mu}^{(4)}}/{{\overline{\mu}^{(2)}}^{2}}$ of the causal effects.
Variance and standard deviation quantify the dispersion of a distribution.
If the variance of causal effects is large, the causal effects may deviate significantly from ACE for some subjects. 
When the variance of causal effects is small, ICE is close to ACE for all subjects. 
Skewness is a measure of the asymmetry of a probability distribution. 
If the causal effect is positively skewed, the right tail of the distribution of the causal effect is longer.
If the causal effect is negatively skewed, the left tail of the distribution is longer.
Kurtosis is a measure of the tailedness or peakedness of a distribution.  
High kurtosis values indicate the presence of outliers in causal effects \citep{Westfall2014}.

\subsection{Identification of the moments of causal effects}

Under 
the exogeneity assumption (Assumption \ref{ASEXO2}), the first moment of causal effects is identifiable as $\mathbb{E}[Y|X=1]-\mathbb{E}[Y|X=0]$ \citep{Holland1986}.
In this section, we discuss the identification  of the higher moments of causal effects. 

When $m \geq 2$, $\mu^{(m)}$ is not equal to $\mathbb{E}[Y_1^m - Y_0^m]$, as discussed in \citep{Hernan2024, Kuroki2024}. Under Assumption \ref{ASEXO2}, $\mathbb{E}[Y_1^m - Y_0^m]$ is identifiable as $\mathbb{E}[Y^m | X = 1] - \mathbb{E}[Y^m | X = 0]$. However, $\mu^{(m)}$ remains unidentifiable.  
For example, the second moment of the causal effect, $\mu^{(2)}$, is given by $\mathbb{E}[Y_1^2]-2\mathbb{E}[Y_0 Y_1]+\mathbb{E}[Y_0^2]$, where the term $\mathbb{E}[Y_0 Y_1]$ is not identifiable. 

To prepare  the identification of the moments of causal effect,
 we first decompose $(Y_1-Y_0)^m$ into two parts as follows:
\begin{lemma}
\label{lem1}
Under SCM ${\cal M}$, 
we have
\begin{align}
\label{eq5}
&(Y_1-Y_0)^m=(Y_1-Y_0)^m\mathbb{I}(Y_1>Y_0)\nonumber\\
&\hspace{3cm}+(-1)^m(Y_0-Y_1)^m\mathbb{I}(Y_0>Y_1)\nonumber\\
&=\int_{{\Omega_Y}^m} \mathbb{I}(Y_0<y_1\leq Y_1,Y_0<y_2\leq Y_1,\dots,\nonumber\\
&\hspace{3.5cm}Y_0<y_m\leq Y_1)dy_1\dots dy_m\nonumber\\
&+(-1)^m\int_{{\Omega_Y}^m} \mathbb{I}(Y_1<y_1\leq Y_0,Y_1<y_2\leq Y_0,\dots,\nonumber\\
&\hspace{3.5cm}Y_1<y_m\leq Y_0)dy_1\dots dy_m.
\end{align}
\end{lemma}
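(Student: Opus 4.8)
The plan is to prove the two equalities separately and pointwise in the exogenous state $\boldsymbol{u}$: once $\boldsymbol{u}$ is fixed, $Y_0 = Y_0(\boldsymbol{u})$ and $Y_1 = Y_1(\boldsymbol{u})$ are real constants, so both displayed equalities reduce to deterministic identities in two real numbers that I can verify directly, after which the statement follows for every $\boldsymbol{u}$. The integrals are then over the dummy variables $y_1,\dots,y_m$ only.

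For the first equality I would observe that $(-1)^m (Y_0 - Y_1)^m = \bigl(-(Y_0 - Y_1)\bigr)^m = (Y_1 - Y_0)^m$, so the right-hand side collapses to $(Y_1 - Y_0)^m \bigl[\mathbb{I}(Y_1 > Y_0) + \mathbb{I}(Y_0 > Y_1)\bigr] = (Y_1 - Y_0)^m\, \mathbb{I}(Y_1 \neq Y_0)$. Splitting into the cases $Y_1 \neq Y_0$ and $Y_1 = Y_0$ then finishes it: in the first case the bracketed indicator equals $1$ and both sides agree, while in the second case both sides vanish because $(Y_1 - Y_0)^m = 0$. This step is essentially bookkeeping on signs and indicators.

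The second equality is the substantive one, and I would handle each integral term in turn. Since the integrand $\prod_{i=1}^m \mathbb{I}(Y_0 < y_i \leq Y_1)$ is a nonnegative product of factors depending on distinct variables $y_i$, Tonelli's theorem lets me factor the $m$-fold integral into a product of $m$ identical one-dimensional integrals,
\begin{equation}
\int_{\Omega_Y^m} \prod_{i=1}^m \mathbb{I}(Y_0 < y_i \leq Y_1)\, dy_1 \cdots dy_m = \left( \int_{\Omega_Y} \mathbb{I}(Y_0 < y \leq Y_1)\, dy \right)^m.
\end{equation}
The inner integral is the Lebesgue measure of $(Y_0, Y_1] \cap \Omega_Y$, which equals $Y_1 - Y_0$ when $Y_1 > Y_0$ and $0$ otherwise; hence it equals exactly $(Y_1 - Y_0)\,\mathbb{I}(Y_1 > Y_0)$. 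Raising to the $m$-th power yields $(Y_1 - Y_0)^m\, \mathbb{I}(Y_1 > Y_0)$, the first term of the middle expression. The same argument applied to $\prod_{i=1}^m \mathbb{I}(Y_1 < y_i \leq Y_0)$ gives $(Y_0 - Y_1)^m\, \mathbb{I}(Y_0 > Y_1)$, and multiplying by $(-1)^m$ reproduces the second term; adding the two recovers the middle expression and closes the chain.

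The step I would be most careful about is the evaluation of the one-dimensional integral as an interval length: I must use that $\Omega_Y$ is an interval (so that $(Y_0, Y_1] \subseteq \Omega_Y$ whenever $Y_0 < Y_1$), and keep the strict-versus-weak inequalities consistent so that the half-open interval $(Y_0, Y_1]$ carries the correct measure and the boundary case $Y_1 = Y_0$ yields measure $0$ in agreement with $\mathbb{I}(Y_1 > Y_0)$. Everything else is a routine application of Tonelli together with the case analysis above.
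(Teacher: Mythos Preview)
Your proposal is correct and follows essentially the same approach as the paper: factor the $m$-fold integral into a product of identical one-dimensional integrals, evaluate each as the length of the half-open interval, and then recombine. Your treatment is in fact slightly more explicit than the paper's (you justify the factorisation via Tonelli, verify the first equality by the case split $Y_1 \ne Y_0$ versus $Y_1 = Y_0$, and flag the implicit assumption that $\Omega_Y$ is an interval so that $(Y_0,Y_1]\subseteq\Omega_Y$), but the underlying argument is the same.
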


The first part corresponds to subjects with a positive ICE, where $Y_1 - Y_0 > 0$, and the second part corresponds to subjects with a negative ICE, where $Y_1 - Y_0 < 0$.




We make the following assumption:
\begin{assumption}[Finiteness of integrals]
\label{exi1}
Under SCM ${\cal M}$, for $m\geq 1$, 
$\mu^{(m)}<\infty$ and
$\int_{{\Omega_Y}^m} \mathbb{P}(Y_i<y_1\leq Y_j,Y_i<y_2\leq Y_j,\dots,Y_i<y_m\leq Y_j)dy_1\dots dy_m<\infty$ hold for $(i,j)=\{(0,1),(1,0)\}$.
\end{assumption}

Under SCM ${\cal M}$ and Assumption \ref{exi1},  taking the expectation on both sides of  Eq.~\eqref{eq5}, we have
\begin{align}
\label{eq6}
\mu^{(m)}
&=\int_{{\Omega_Y}^m} \mathbb{P}(Y_0<y_1\leq Y_1,Y_0<y_2\leq Y_1,\dots,\nonumber\\
&\hspace{3cm}Y_0<y_m\leq Y_1)dy_1\dots dy_m\nonumber\\
&+(-1)^m\int_{{\Omega_Y}^m} \mathbb{P}(Y_1<y_1\leq Y_0,Y_1<y_2\leq Y_0,\dots,\nonumber\\
&\hspace{3cm}Y_1<y_m\leq Y_0)dy_1\dots dy_m.
\end{align}
The identification of the moments of causal effects then reduces to the identification  of  $\mathbb{P}(Y_0<y_1\leq Y_1,Y_0<y_2\leq Y_1,\dots,Y_0<y_m\leq Y_1)$ and $\mathbb{P}(Y_1<y_1\leq Y_0,Y_1<y_2\leq Y_0,\dots,Y_1<y_m\leq Y_0)$. The identification  of this type of  joint distributions of potential outcomes was discussed in \citep{Kawakami2024}, based on which we obtain the following result: 
\begin{theorem}[Identification of the moments of causal effect]
\label{theo1}
Under SCM ${\cal M}$ and Assumptions \ref{ASEXO2}, \ref{MONO2}, and \ref{exi1},  the $m$-th moment of causal effect $Y_1-Y_0$ is identifiable by $\mu^{(m)}=\sigma^{(m)}$, where
\begin{align}
\label{eq10}
&\sigma^{(m)}=\int_{{\Omega_Y}^m} \max\Big\{\min_{p=1,\dots,m}\{\mathbb{P}(Y<y_p|X=0)\}\nonumber\\
&\hspace{0.2cm}-\max_{p=1,\dots,m}\{\mathbb{P}(Y<y_p|X=1)\},0\Big\}dy_1\dots dy_m\nonumber\\
&+(-1)^m\int_{{\Omega_Y}^m} \max\Big\{\min_{p=1,\dots,m}\{\mathbb{P}(Y<y_p|X=1)\}\nonumber\\
&\hspace{0.2cm}-\max_{p=1,\dots,m}\{\mathbb{P}(Y<y_p|X=0)\},0\Big\}dy_1\dots dy_m.
\end{align}
\end{theorem}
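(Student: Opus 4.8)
The plan is to start from the representation in Eq.~\eqref{eq6}, which already holds under SCM $\mathcal{M}$ and Assumption~\ref{exi1}, and show that each of its two integrands coincides with the corresponding integrand of $\sigma^{(m)}$. Because the two terms are symmetric under interchanging the roles of $Y_0$ and $Y_1$, it suffices to identify the single joint probability
\[
\mathbb{P}(Y_0<y_1\leq Y_1,\dots,Y_0<y_m\leq Y_1),
\]
and then obtain the second integrand by swapping the indices $0$ and $1$. The first reduction I would make is to collapse the $m$-fold conjunction into a two-point event: writing $a=\min_{p}y_p$ and $b=\max_{p}y_p$, the requirements $Y_0<y_p$ for every $p$ and $y_p\leq Y_1$ for every $p$ are together equivalent to $Y_0<a$ and $b\leq Y_1$. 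Hence the joint probability equals $\mathbb{P}(Y_0<a,\,b\leq Y_1)$, which is exactly the PNS-type quantity whose identification was established in \citep{Kawakami2024}.

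The core of the argument is a comonotonicity observation extracted from Assumption~\ref{MONO2}. In case~(i), $Y_0=f_Y(0,U_Y)$ and $Y_1=f_Y(1,U_Y)$ are both nondecreasing functions of the common variable $U_Y$; in case~(ii) they are both nonincreasing, so setting $Z=-U_Y$ they are again nondecreasing functions of a common $Z$. Thus in either case $(Y_0,Y_1)$ are comonotonic. Consequently the level sets $\{Z:f_Y(0,Z)<a\}$ and $\{Z:f_Y(1,Z)<b\}$ are down-sets of $Z$, hence totally ordered by inclusion, so $\mathbb{P}(Y_0<a,\,Y_1<b)=\min\{\mathbb{P}(Y_0<a),\mathbb{P}(Y_1<b)\}$. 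Subtracting from the marginal then gives
\[
\mathbb{P}(Y_0<a,\,b\leq Y_1)=\mathbb{P}(Y_0<a)-\min\{\mathbb{P}(Y_0<a),\mathbb{P}(Y_1<b)\}=\max\{\mathbb{P}(Y_0<a)-\mathbb{P}(Y_1<b),\,0\}.
\]

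Next I would invoke Assumption~\ref{ASEXO2} to identify the marginals: by exogeneity together with consistency, $\mathbb{P}(Y_0<y)=\mathbb{P}(Y<y\mid X=0)$ and $\mathbb{P}(Y_1<y)=\mathbb{P}(Y<y\mid X=1)$. Since these conditional CDFs are nondecreasing in $y$, the $\min$ and $\max$ pass through: $\mathbb{P}(Y_0<a)=\min_{p}\mathbb{P}(Y<y_p\mid X=0)$ and $\mathbb{P}(Y_1<b)=\max_{p}\mathbb{P}(Y<y_p\mid X=1)$. Substituting yields precisely the first integrand of $\sigma^{(m)}$; the symmetric computation (interchanging $0$ and $1$) yields the second. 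Plugging both back into Eq.~\eqref{eq6} gives $\mu^{(m)}=\sigma^{(m)}$, with Assumption~\ref{exi1} ensuring all integrals are finite so that the interchange of expectation and integration used to obtain Eq.~\eqref{eq6} is valid.

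The main obstacle I anticipate is making the comonotonicity step fully rigorous while respecting the strict versus nonstrict inequalities built into the events and the $\mathbb{P}(Y<\cdot)$ CDF convention. I would address this by arguing at the level of the down-sets $\{Z:f_Y(0,Z)<a\}$ and $\{Z:f_Y(1,Z)<b\}$ directly, since these are totally ordered by inclusion irrespective of whether their boundary points are included, so the intersection probability is genuinely the minimum of the two marginals; the at-most-countably-many discontinuities of the monotone CDFs contribute measure zero to the final integrals. The decreasing case~(ii) requires the extra bookkeeping of the sign flip $Z=-U_Y$, which I would spell out explicitly to confirm that both monotonicity directions lead to the same identification formula.
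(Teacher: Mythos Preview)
Your proof is correct and follows the same route as the paper: start from Eq.~\eqref{eq6} and identify each joint probability $\mathbb{P}(Y_0<y_1\leq Y_1,\dots,Y_0<y_m\leq Y_1)$ as $\max\{\min_p\mathbb{P}(Y<y_p\mid X=0)-\max_p\mathbb{P}(Y<y_p\mid X=1),0\}$, then substitute. The only difference is presentational: the paper simply invokes Theorem~5.2 of \citet{Kawakami2024} for this identification, whereas you unpack that result by making the comonotonicity argument (from Assumption~\ref{MONO2}) and the $a=\min_p y_p$, $b=\max_p y_p$ reduction explicit---which is precisely the content of the cited theorem.
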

Theorem~\ref{theo1} says that the moments of causal effects can be expressed in terms of  conditional CDFs. 
For $m=1$, Eq.~\eqref{eq10} reduces to ACE$=\mathbb{E}[Y_1-Y_0]=\int_{\Omega_Y}\{\mathbb{P}(Y<y_1|X=0)-\mathbb{P}(Y<y_1|X=1)\}dy_1$   
\citep{Ju2010}, 
which  does not require Assumption \ref{MONO2} to hold.

For $m=2$, the second moment of causal effects (variance) $\mu^{(2)}$ is given by $\int_{\Omega_Y} \int_{\Omega_Y} \max\{\min\{\mathbb{P}(Y<y_1|X=0),\mathbb{P}(Y<y_2|X=0)\}-\max\{\mathbb{P}(Y<y_1|X=1),\mathbb{P}(Y<y_1|X=1)\},0\}dy_1dy_2
+\int_{\Omega_Y} \int_{\Omega_Y} 
 \max\{\min\{\mathbb{P}(Y<y_1|X=1),\mathbb{P}(Y<y_2|X=1)\}-\max\{\mathbb{P}(Y<y_1|X=0),\mathbb{P}(Y<y_2|X=0),0\}dy_1dy_2$.

\subsection{Bounding the moments of causal effects}

The monotonicity Assumption \ref{MONO2} may sometimes be considered implausible by researchers.
Therefore, we derive bounds for the moments of causal effects that do not  rely on Assumption \ref{MONO2}.

We first provide bounds of the joint distribution of the potential outcomes $\mathbb{P}(Y_i<y_1\leq Y_j,Y_i<y_2\leq Y_j,\dots,Y_i<y_m\leq Y_j)$ using Fr\'{e}chet inequalities \citep{Frechet1935,Frechet1960}.
\begin{lemma}
\label{lem3}
Under SCM ${\cal M}$ and Assumptions \ref{ASEXO2} and \ref{exi1}, 
we have $l(y_1,\dots,y_m;i,j)\leq \mathbb{P}(Y_j<y_1\leq Y_i,Y_j<y_2\leq Y_i,\dots,Y_j<y_m\leq Y_i) \leq u(y_1,\dots,y_m;i,j)$, 
where 
\begin{align}
&l(y_1,\dots,y_m;i,j)=\max\Big\{\sum_{p=1,\dots,m}\mathbb{P}(Y<y_p|X=j)\nonumber\\
&\hspace{0.5cm}-\sum_{p=1,\dots,m}\mathbb{P}(Y<y_p|X=i)-m+1,0\Big\},\\
&u(y_1,\dots,y_m;i,j)=\min\Big\{\min_{p=1,\dots,m}\{\mathbb{P}(Y<y_p|X=j)\},\nonumber\\
&\hspace{1.5cm}1-\max_{p=1,\dots,m}\{\mathbb{P}(Y<y_p|X=i)\}\Big\}
\end{align}
for $(i,j) \in \{(1,0),(0,1)\}$ and any $y_1, \dots, y_m \in \Omega_Y$.
\end{lemma}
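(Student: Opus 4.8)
The plan is to recognize the $m$-fold event as a single intersection of $2m$ one-dimensional events and then apply the generalized Fr\'{e}chet inequalities. First I would rewrite each constraint $\{Y_j < y_p \leq Y_i\}$ as the conjunction $\{Y_j < y_p\} \cap \{Y_i \geq y_p\}$, so that
\[
\{Y_j<y_1\leq Y_i,\dots,Y_j<y_m\leq Y_i\} = \Big(\bigcap_{p=1}^m \{Y_j<y_p\}\Big)\cap\Big(\bigcap_{p=1}^m \{Y_i\geq y_p\}\Big).
\]
This expresses the joint event of interest as an intersection of the $2m$ events $\{Y_j<y_p\}$ and $\{Y_i\geq y_p\}$ for $p=1,\dots,m$, whose probabilities depend only on the marginal laws of $Y_i$ and $Y_j$.

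Second, I would identify those marginal probabilities from the observational distribution. Under exogeneity (Assumption~\ref{ASEXO2}), $Y_j\indep X$, and since $Y=Y_j$ on $\{X=j\}$ in the SCM, we obtain $\mathbb{P}(Y_j<y_p)=\mathbb{P}(Y<y_p\mid X=j)$ and, because $\{Y_i\geq y_p\}$ is the exact complement of $\{Y_i<y_p\}$, $\mathbb{P}(Y_i\geq y_p)=1-\mathbb{P}(Y<y_p\mid X=i)$. This is the only place exogeneity enters; Assumption~\ref{exi1} merely guarantees the relevant quantities are well defined.

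Third, apply the two generalized Fr\'{e}chet inequalities to the intersection of these $2m$ events. For the upper bound I use that an intersection is contained in each of its constituents, so its probability is at most the minimum of the $2m$ marginal probabilities; minimizing over the $\{Y_j<y_p\}$ terms gives $\min_{p}\mathbb{P}(Y<y_p\mid X=j)$ and over the $\{Y_i\geq y_p\}$ terms gives $1-\max_{p}\mathbb{P}(Y<y_p\mid X=i)$, which is exactly $u(y_1,\dots,y_m;i,j)$. For the lower bound I use the complement form $\mathbb{P}(\bigcap_k E_k)=1-\mathbb{P}(\bigcup_k E_k^c)\geq 1-\sum_k \mathbb{P}(E_k^c)=\sum_k \mathbb{P}(E_k)-(2m-1)$; substituting the $2m$ marginals and collapsing the constant $m-(2m-1)=-m+1$ produces $\sum_{p}\mathbb{P}(Y<y_p\mid X=j)-\sum_{p}\mathbb{P}(Y<y_p\mid X=i)-m+1$, and combining with the trivial bound $\mathbb{P}(\cdot)\geq 0$ gives $l(y_1,\dots,y_m;i,j)$.

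The steps are individually routine, so there is no deep obstacle; the main thing to get right is the bookkeeping when translating potential-outcome events into observational conditional CDFs. In particular I would keep the strict versus non-strict inequalities consistent with the paper's convention $\mathbb{P}(Y<y)$ for the CDF, so that $\{Y_i\geq y_p\}$ is handled as the precise complement of $\{Y_i<y_p\}$, and I would track the additive constant in the $2m$-event Bonferroni lower bound so that it collapses to $-m+1$ rather than $-(m-1)$. Finally, the symmetry in $(i,j)\in\{(1,0),(0,1)\}$ means that a single derivation covers both cases, with the roles of the two conditional CDFs simply exchanged.
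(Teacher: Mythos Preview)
Your proposal is correct and follows essentially the same route as the paper: decompose the event $\{Y_j<y_p\leq Y_i\}_{p=1}^m$ into the $2m$ one-dimensional events $\{Y_j<y_p\}$ and $\{Y_i\geq y_p\}$, apply the Fr\'{e}chet (Bonferroni) lower and upper bounds to their intersection, and then invoke exogeneity to replace $\mathbb{P}(Y_j<y_p)$ and $\mathbb{P}(Y_i<y_p)$ by the observational conditional CDFs. The only cosmetic difference is that the paper applies exogeneity after the Fr\'{e}chet step rather than before, and writes the constant as $-(m-1)$ rather than $-m+1$.
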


Then, we have the following theorem.
\begin{theorem}[Bounds of the moments of causal effect]
\label{theo2}
Under SCM ${\cal M}$ and Assumptions \ref{ASEXO2} and \ref{exi1},  we have $\sigma_L^{(m)} \leq \mu^{(m)} \leq \sigma_U^{(m)}$, where

(A). When $m$ is an even number,
\begin{align}
\label{eq13}
&\sigma_L^{(m)}=\int_{{\Omega_Y}^m} l(y_1,\dots,y_m;1,0)dy_1\dots dy_m\nonumber\\
&\hspace{1.5cm}+\int_{{\Omega_Y}^m} l(y_1,\dots,y_m;0,1)dy_1\dots dy_m,\\
&\sigma_U^{(m)}=\int_{{\Omega_Y}^m} u(y_1,\dots,y_m;1,0)dy_1\dots dy_m\nonumber\\
&\hspace{1.5cm}+\int_{{\Omega_Y}^m} u(y_1,\dots,y_m;0,1)dy_1\dots dy_m.
\end{align}

(B). When $m$ is an odd number,
\begin{align}
&\sigma_L^{(m)}=\int_{{\Omega_Y}^m} l(y_1,\dots,y_m;1,0)dy_1\dots dy_m\nonumber\\
&\hspace{1.5cm}-\int_{{\Omega_Y}^m} u(y_1,\dots,y_m;0,1)dy_1\dots dy_m,\\
\label{eq16}
&\sigma_U^{(m)}=\int_{{\Omega_Y}^m} u(y_1,\dots,y_m;1,0)dy_1\dots dy_m\nonumber\\
&\hspace{1.5cm}-\int_{{\Omega_Y}^m} l(y_1,\dots,y_m;0,1)dy_1\dots dy_m.
\end{align}
\end{theorem}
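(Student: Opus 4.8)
The plan is to combine the exact representation of $\mu^{(m)}$ from Eq.~\eqref{eq6} with the Fr\'echet-type bounds of Lemma~\ref{lem3}. Eq.~\eqref{eq6} already expresses the $m$-th moment as
\begin{equation*}
\mu^{(m)} = A + (-1)^m B,
\end{equation*}
where $A=\int_{\Omega_Y^m}\mathbb{P}(Y_0<y_1\le Y_1,\dots,Y_0<y_m\le Y_1)\,dy_1\cdots dy_m$ and $B=\int_{\Omega_Y^m}\mathbb{P}(Y_1<y_1\le Y_0,\dots,Y_1<y_m\le Y_0)\,dy_1\cdots dy_m$. The integrand of $A$ is exactly the joint probability bounded in Lemma~\ref{lem3} with $(i,j)=(1,0)$, and the integrand of $B$ is the one with $(i,j)=(0,1)$.

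First I would bound the two integrands pointwise. For every fixed $(y_1,\dots,y_m)\in\Omega_Y^m$, Lemma~\ref{lem3} sandwiches the integrand of $A$ between $l(y_1,\dots,y_m;1,0)$ and $u(y_1,\dots,y_m;1,0)$, and the integrand of $B$ between $l(y_1,\dots,y_m;0,1)$ and $u(y_1,\dots,y_m;0,1)$. Since Assumption~\ref{exi1} guarantees that all the relevant integrals are finite, integrating these pointwise inequalities over $\Omega_Y^m$ and using monotonicity of the integral gives $L_{10}\le A\le U_{10}$ and $L_{01}\le B\le U_{01}$, where $L_{ij}$ and $U_{ij}$ denote $\int_{\Omega_Y^m} l(y_1,\dots,y_m;i,j)\,dy_1\cdots dy_m$ and $\int_{\Omega_Y^m} u(y_1,\dots,y_m;i,j)\,dy_1\cdots dy_m$, respectively.

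The last step is to assemble these two-sided bounds according to the parity of $m$, which is the only place that requires care. When $m$ is even, $(-1)^m=1$ and $\mu^{(m)}=A+B$; adding the lower bounds and the upper bounds termwise gives $\sigma_L^{(m)}=L_{10}+L_{01}$ and $\sigma_U^{(m)}=U_{10}+U_{01}$, which are the expressions in Eq.~\eqref{eq13} and the display following it. When $m$ is odd, $(-1)^m=-1$ and $\mu^{(m)}=A-B$, so the coefficient of $B$ is negative and the inequality directions for the $B$-term reverse: the lower bound on $\mu^{(m)}$ now pairs the lower bound $L_{10}$ of $A$ with the upper bound $U_{01}$ of $B$, yielding $\sigma_L^{(m)}=L_{10}-U_{01}$, while the upper bound pairs $U_{10}$ with $L_{01}$, yielding $\sigma_U^{(m)}=U_{10}-L_{01}$ as in Eq.~\eqref{eq16}. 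I expect this sign reversal in the odd case to be the main subtlety; the remaining content is just monotonicity of integration applied to the already-established Lemma~\ref{lem3}.
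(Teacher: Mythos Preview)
Your proposal is correct and follows essentially the same approach as the paper: start from the representation in Eq.~\eqref{eq6}, apply the pointwise Fr\'echet bounds of Lemma~\ref{lem3} to each integrand, integrate, and then split into the even and odd cases according to the sign of $(-1)^m$. The paper's proof is exactly this argument written out in full.
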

If $\sigma_U^{(m)}=\infty$ and $\sigma_L^{(m)}=-\infty$, then $\mu^{(m)}$ is unbounded.

{The upper bound of the Fr\'{e}chet inequalities is always sharp for all $m\geq 1$ \citep{Nelsen2007}; thus, the function $u(y_1,\dots,y_m;i,j)$ in Lemma 2 is sharp for all $m\geq 1$.
In contrast, the lower bound of the Fr\'{e}chet inequalities is not always sharp  except when $m=1$; hence, the function $l(y_1,\dots,y_m;i,j)$ in Lemma 2 is not sharp. 
As a result, only the upper bounds of the moments of causal effects are sharp when $m$ is even.
In all other cases, our bounds of the moments of causal effects are not sharp.}

\textbf{Remark.} We present a similar identification theorem and bounds on the central moments of causal effects in Appendix \ref{appB}.
Additionally, the skewness and kurtosis of causal effects are also bounded, as shown in Appendix \ref{appB}.

\section{Product Moments of Causal Effects}

In this section, we study the (central) product moments of causal effects to address (\textbf{Question 2}).

Let $\Omega_X=\{1,\dots,R\}$. 
The causal effect of changing $X=j$ to $X=i$ is given by $Y_i-Y_j$ and the causal effect of changing $X=k$ to $X=h$ is given by $Y_h-Y_k$.
We study the 
association of the two causal effects $Y_i-Y_j$ and $Y_h-Y_k$.

\subsection{Definition of the product moment of causal effects}

We define the product moment of two causal effects analogously to the product moment of two random variables.
\begin{definition}[The product moment of causal effects]
The product moment of causal effects is defined by 
\begin{equation}
\rho_{i,j;k,h}\defeq\mathbb{E}\Big[(Y_i-Y_j)(Y_k-Y_h)\Big].
\end{equation}
\end{definition}

We present three examples to illustrate the product moments of causal effects 
in simple SCMs.

{\bf Example 1 (continued).} (Homogeneous ICE)
In SCM given by $Y=X+U_Y$ where $\mathbb{E}[U_Y]=0$, 
the product moment of $Y_1-Y_0$ and $Y_0-Y_{-1}$ is equal to $1$.

{\bf Example 2 (continued).} (Heterogeneous ICE)
In SCM $Y=X(U_Y+1)+1$ where $\mathbb{E}[U_Y]=0$, 
the product moment of $Y_1-Y_0$ and $Y_0-Y_{-1}$ is equal to $\mathbb{E}[(U_Y+1)^2]>0$.

{\bf Example 3.} (Heterogeneous and nonlinear ICE)
We consider a nonlinear SCM with an interaction term between $X^2$ and $U_Y$:  $Y=X^2(U_Y+1)+1$ where $\mathbb{E}[U_Y]=0$. We have  $Y_1-Y_0=U_Y+1$ and $Y_0-Y_{-1}=-(U_Y+1)$ and are heterogeneous.
The product moment of $Y_1-Y_0$ and $Y_0-Y_{-1}$ is equal to $\mathbb{E}[-(U_Y+1)^2]<0$.




We examine the covariance and correlation of two causal effects. 
\begin{definition}[Covariance of causal effects]
We define the covariance (central product moment of causal effects) as 
\begin{equation}
\begin{aligned}
\overline{\rho}_{i,j;k,h}\defeq&\mathbb{E}\Big[\Big\{(Y_i-Y_j)-(\mathbb{E}[Y_i]-\mathbb{E}[Y_j])\Big\}\\
&\hspace{0.5cm}\times\Big\{(Y_k-Y_h)-(\mathbb{E}[Y_k]-\mathbb{E}[Y_h])\Big\}\Big].
\end{aligned}
\end{equation}
\end{definition}
\begin{definition}[Correlation of causal effects]
We define the correlation of causal effects as 
\begin{equation}
\begin{aligned}
&\overline{\tau}_{i,j;k,h}=\overline{\rho}_{i,j;k,h} \\
&\hspace{0cm}\Bigg/\bigg\{\sqrt{\mathbb{E}\Big[\Big\{(Y_i-Y_j)-(\mathbb{E}[Y_i]-\mathbb{E}[Y_j])\Big\}^2\Big]}\\
&\hspace{0.5cm}\times\sqrt{\mathbb{E}\Big[\Big\{(Y_k-Y_h)-(\mathbb{E}[Y_k]-\mathbb{E}[Y_h])\Big\}^2\Big]}
\bigg\}.
\end{aligned}
\end{equation}
\end{definition}
Correlation is a measure of association between two variables \citep{Pearson1905}. Similarly, the correlation of causal effects quantifies the association between two causal effects.


For instance, 
when comparing three treatments $X = 0,1,2$, researchers often evaluate their respective averages $\mathbb{E}[Y_0], \mathbb{E}[Y_1],$ and $\mathbb{E}[Y_2]$.
When $\mathbb{E}[Y_1 - Y_0] > 0$ and $\mathbb{E}[Y_2 - Y_1] > 0$, it is concluded that, on average, both changes, from $X = 0$ to $X = 1$ and from $X = 1$ to $X = 2$, have positive effects.
The correlation of causal effects provides more detailed insights.
When the correlation between $Y_1 - Y_0$ and $Y_2 - Y_1$ is negative, patients with larger causal effects $Y_1 - Y_0$ than the average tend to have smaller causal effects $Y_2 - Y_1$ than the average.
Conversely, when patients have smaller causal effects $Y_1 - Y_0$ than the average, they tend to have larger causal effects $Y_2 - Y_1$ than the average.

We present three examples to illustrate  the covariance and correlation of causal effects. 

{\bf Example 1 (continued).} (Homogeneous ICE)
In SCM given by $Y=X+U_Y$ where $\mathbb{E}[U_Y]=0$, 
the covariance of $Y_1-Y_0$ and $Y_0-Y_{-1}$ is equal to $0$.

{\bf Example 2 (continued).} (Heterogeneous ICE)
In SCM $Y=X(U_Y+1)+1$ where $\mathbb{E}[U_Y]=0$, 
the covariance of $Y_1-Y_0$ and $Y_0-Y_{-1}$ is equal to $\mathbb{E}[U_Y^2]>0$, 
 the correlation of $Y_1-Y_0$ and $Y_0-Y_{-1}$ is $1$, and they have a positive correlation.

{\bf Example 3 (continued).} (Heterogeneous and nonlinear ICE)
In SCM $Y=X^2(U_Y+1)+1$ where $\mathbb{E}[U_Y]=0$, the covariance of $Y_1-Y_0$ and $Y_0-Y_{-1}$ is equal to $\mathbb{E}[-U_Y^2]<0$, 
 the correlation of $Y_1-Y_0$ and $Y_0-Y_{-1}$ is $-1$,  and they have a negative correlation.

\subsection{Identification of the product moment of causal effects}

To prepare the discussion on the identification  of the product moment of causal effects, we decompose the product of causal effects $(Y_i-Y_j)$ and $(Y_k-Y_h)$ into four parts.
\begin{lemma}
\label{lem4}
Under SCM ${\cal M}$, 
we have
\begin{align}
\label{eq20}
&(Y_i-Y_j)(Y_k-Y_h)\nonumber\\
&=(Y_i-Y_j)(Y_k-Y_h)\mathbb{I}(Y_i>Y_j,Y_k>Y_h)\nonumber\\
&-(Y_j-Y_i)(Y_k-Y_h)\mathbb{I}(Y_j>Y_i,Y_k>Y_h)\nonumber\\
&-(Y_i-Y_j)(Y_h-Y_k)\mathbb{I}(Y_i>Y_j,Y_h>Y_k)\nonumber\\
&+(Y_j-Y_i)(Y_h-Y_k)\mathbb{I}(Y_j>Y_i,Y_h>Y_k)\nonumber\\
&=\int_{\Omega_Y}\int_{\Omega_Y} \mathbb{I}(Y_j<y_1\leq Y_i,Y_h<y_2\leq Y_k)dy_1 dy_2\nonumber\\
&-\int_{\Omega_Y}\int_{\Omega_Y} \mathbb{I}(Y_i<y_1\leq Y_j,Y_h<y_2\leq Y_k)dy_1 dy_2\nonumber\\
&-\int_{\Omega_Y}\int_{\Omega_Y} \mathbb{I}(Y_j<y_1\leq Y_i,Y_k<y_2\leq Y_h)dy_1 dy_2\nonumber\\
&+\int_{\Omega_Y}\int_{\Omega_Y} \mathbb{I}(Y_i<y_1\leq Y_j,Y_k<y_2\leq Y_h)dy_1 dy_2.
\end{align}
\end{lemma}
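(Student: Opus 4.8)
The plan is to prove the two equalities in \eqref{eq20} separately, treating the statement as a pointwise identity for each realization ${\boldsymbol u}$ of the exogenous variables, so that $Y_i({\boldsymbol u}), Y_j({\boldsymbol u}), Y_k({\boldsymbol u}), Y_h({\boldsymbol u})$ are fixed real numbers; this mirrors the reasoning behind Lemma~\ref{lem1}. First I would establish the first equality (the four-term decomposition carrying indicators) by a case analysis on the signs of the two causal effects $Y_i - Y_j$ and $Y_k - Y_h$.

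For the sign analysis there are four mutually exclusive cases, determined by whether $Y_i > Y_j$ or $Y_j > Y_i$ and whether $Y_k > Y_h$ or $Y_h > Y_k$. In each case exactly one of the four indicators $\mathbb{I}(Y_i>Y_j,Y_k>Y_h)$, $\mathbb{I}(Y_j>Y_i,Y_k>Y_h)$, $\mathbb{I}(Y_i>Y_j,Y_h>Y_k)$, $\mathbb{I}(Y_j>Y_i,Y_h>Y_k)$ equals $1$ and the other three vanish; one then checks that the surviving term reproduces $(Y_i-Y_j)(Y_k-Y_h)$ with the correct sign. For instance, when $Y_j>Y_i$ and $Y_k>Y_h$, the product $(Y_i-Y_j)(Y_k-Y_h)$ is negative and equals $-(Y_j-Y_i)(Y_k-Y_h)$, which is precisely the second term. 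Ties need no separate treatment: if $Y_i=Y_j$ or $Y_k=Y_h$, the left-hand side is $0$ and all four strict-inequality indicators vanish, so both sides equal $0$.

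Next I would convert each product-with-indicator term into a double integral using the layer-cake identity $\,b-a=\int_{\Omega_Y}\mathbb{I}(a<y\le b)\,dy\,$, valid for $a\le b$. On the event $\{Y_i>Y_j\}$ one writes $Y_i-Y_j=\int_{\Omega_Y}\mathbb{I}(Y_j<y_1\le Y_i)\,dy_1$, and likewise $Y_k-Y_h=\int_{\Omega_Y}\mathbb{I}(Y_h<y_2\le Y_k)\,dy_2$ on $\{Y_k>Y_h\}$; multiplying the two integrals and applying Fubini yields a double integral of the product of the two indicators. The key simplification is that the sign indicator $\mathbb{I}(Y_i>Y_j,Y_k>Y_h)$ is absorbed, since $\mathbb{I}(Y_j<y_1\le Y_i)$ is nonzero only when $Y_i>Y_j$ (and similarly for the second factor), so the joint indicator $\mathbb{I}(Y_j<y_1\le Y_i,\,Y_h<y_2\le Y_k)$ already enforces the sign constraint. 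Repeating this for the remaining three terms produces the four integral terms in \eqref{eq20}.

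I expect the only delicate bookkeeping—hence the main thing to get right rather than a genuine obstacle—to be tracking signs across the four cases and confirming that each absorbed sign indicator matches the orientation of the corresponding integration limits (e.g. $Y_j<y_1\le Y_i$ versus $Y_i<y_1\le Y_j$). Finiteness of the integrals is not needed for this pointwise identity; it becomes relevant only when taking expectations, where an assumption analogous to Assumption~\ref{exi1} would justify exchanging expectation and integration, exactly as in the passage from Lemma~\ref{lem1} to \eqref{eq6}.
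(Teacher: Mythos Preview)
Your proposal is correct and takes essentially the same approach as the paper: both arguments rest on the sign-based case split into four mutually exclusive events and the layer-cake identity $b-a=\int_{\Omega_Y}\mathbb{I}(a<y\le b)\,dy$, with the observation that the interval indicator already enforces the sign constraint. The only cosmetic difference is direction---the paper starts from the four double integrals and collapses them down to $(Y_i-Y_j)(Y_k-Y_h)$, while you start from the product and expand outward---but the content is identical.
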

The above decomposition consists of four parts based on the signs of ICE. 
We make the following assumption:
\begin{assumption}[Finiteness of integrals]
\label{exi2}
Under SCM ${\cal M}$,
$\rho_{i,j;k,h}<\infty$ and 
$\int_{\Omega_Y}\int_{\Omega_Y} \mathbb{P}(Y_j<y_1\leq Y_i,Y_h<y_2\leq Y_k)dy_1 dy_2<\infty$ hold for any $i,j,k,h \in \{1,\dots,R\}$.
\end{assumption}

Under SCM ${\cal M}$ and Assumption \ref{exi2}, 
taking the expectation on both sides of  Eq.~\eqref{eq20}, we have
\begin{align}
\label{eq21}
&\rho_{i,j;k,h}\nonumber\\
&=\int_{\Omega_Y}\int_{\Omega_Y} \mathbb{P}(Y_j<y_1\leq Y_i,Y_h<y_2\leq Y_k)dy_1 dy_2\nonumber\\
&-\int_{\Omega_Y}\int_{\Omega_Y} \mathbb{P}(Y_i<y_1\leq Y_j,Y_h<y_2\leq Y_k)dy_1 dy_2\nonumber\\
&-\int_{\Omega_Y}\int_{\Omega_Y} \mathbb{P}(Y_j<y_1\leq Y_i,Y_k<y_2\leq Y_h)dy_1 dy_2\nonumber\\
&+\int_{\Omega_Y}\int_{\Omega_Y} \mathbb{P}(Y_i<y_1\leq Y_j,Y_k<y_2\leq Y_h)dy_1 dy_2.
\end{align}

The identification of joint distributions of  potential outcomes in the form of  $\mathbb{P}(Y_j<y_1\leq Y_i,Y_h<y_2\leq Y_k)$ was discussed in \citep{Kawakami2024}, based on which we obtain the following result:
\begin{theorem}[Identification of the product moments of causal effects]
\label{theo3}
Under SCM ${\cal M}$ and Assumptions \ref{ASEXO2}, \ref{MONO2}, and \ref{exi2}, 
the product moment of $(Y_i-Y_j)$ and $(Y_k-Y_h)$ is identifiable by $\rho_{i,j;k,h}=\sigma(i,j;k,h)$, where
\begin{align}
\label{eq22}
&\sigma(i,j;k,h)\nonumber\\
&=\int_{{\Omega_Y}^2}\max\Big\{\min\{\mathbb{P}(Y<y_1|X=j),\mathbb{P}(Y<y_2|X=h)\}\nonumber\\
&\hspace{0.2cm}-\max\{\mathbb{P}(Y<y_1|X=i),\mathbb{P}(Y<y_2|X=k)\},0\Big\}dy_1dy_2\nonumber\\
&-\int_{{\Omega_Y}^2}\max\Big\{\min\{\mathbb{P}(Y<y_1|X=i),\mathbb{P}(Y<y_2|X=h)\}\nonumber\\
&\hspace{0.2cm}-\max\{\mathbb{P}(Y<y_1|X=j),\mathbb{P}(Y<y_2|X=k)\},0\Big\}dy_1dy_2\nonumber\\
&-\int_{{\Omega_Y}^2}\max\Big\{\min\{\mathbb{P}(Y<y_1|X=j),\mathbb{P}(Y<y_2|X=k)\}\nonumber\\
&\hspace{0.2cm}-\max\{\mathbb{P}(Y<y_1|X=i),\mathbb{P}(Y<y_2|X=h)\},0\Big\}dy_1dy_2\nonumber\\
&+\int_{{\Omega_Y}^2}\max\Big\{\min\{\mathbb{P}(Y<y_1|X=i),\mathbb{P}(Y<y_2|X=k)\}\nonumber\\
&\hspace{0.2cm}-\max\{\mathbb{P}(Y<y_1|X=j),\mathbb{P}(Y<y_2|X=h)\},0\Big\}dy_1dy_2.
\end{align}
\end{theorem}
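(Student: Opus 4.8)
The plan is to build directly on the decomposition already in place. Lemma \ref{lem4} writes the product $(Y_i-Y_j)(Y_k-Y_h)$ as a signed sum of four double integrals of indicator functions, and Assumption \ref{exi2} guarantees finiteness, so that Tonelli's theorem applied to each of the four nonnegative indicator integrals lets me interchange expectation and integration and arrive at Eq.~\eqref{eq21}. The identification of $\rho_{i,j;k,h}$ thereby reduces to identifying the four bivariate joint probabilities of potential outcomes appearing there, of which $\mathbb{P}(Y_j<y_1\leq Y_i,\,Y_h<y_2\leq Y_k)$ is representative; the remaining three follow by swapping indices.

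The core step is to identify this joint probability under exogeneity (Assumption \ref{ASEXO2}) and monotonicity (Assumption \ref{MONO2}). First I would treat case (i) of Assumption \ref{MONO2}, where $f_Y(x,\cdot)$ is increasing in $U_Y$ for every $x$. Then each $Y_x=f_Y(x,U_Y)$ is an increasing transform of the single exogenous variable $U_Y$, so the potential outcomes are comonotone. Writing $F_x(y)\defeq\mathbb{P}(Y_x<y)$ and letting $c_x(y)$ be the corresponding threshold in $U_Y$-space (so that $\{Y_x<y\}=\{U_Y<c_x(y)\}$ up to a $\mathbb{P}_{U_Y}$-null set and $\mathbb{P}(U_Y<c_x(y))=F_x(y)$), each of the four constraints becomes a one-sided event on $U_Y$: the ``$<$'' constraints $\{Y_j<y_1\}$ and $\{Y_h<y_2\}$ impose the upper cutoff $\min\{c_j(y_1),c_h(y_2)\}$, while the ``$\geq$'' constraints $\{Y_i\geq y_1\}$ and $\{Y_k\geq y_2\}$ impose the lower cutoff $\max\{c_i(y_1),c_k(y_2)\}$. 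Since $u\mapsto\mathbb{P}(U_Y<u)$ is nondecreasing, $\min$ and $\max$ commute with it, and the joint probability equals $\max\{\min\{F_j(y_1),F_h(y_2)\}-\max\{F_i(y_1),F_k(y_2)\},\,0\}$. Exogeneity then replaces every marginal $F_x(y)$ by $\mathbb{P}(Y<y\mid X=x)$, which is exactly the first integrand of Eq.~\eqref{eq22} and the identification recorded in \citep{Kawakami2024}. Case (ii), where $f_Y(x,\cdot)$ is decreasing for every $x$, is handled identically after noting that the potential outcomes remain comonotone (all increasing in $-U_Y$, with the same marginals), so the same formula holds.

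Applying this identification to each of the four joint probabilities in Eq.~\eqref{eq21}---tracking which index plays the role of $j$ versus $i$ in the first coordinate and $h$ versus $k$ in the second---produces the four matching integrands in Eq.~\eqref{eq22}, with the signs $+,-,-,+$ inherited directly from the decomposition in Lemma \ref{lem4}. Summing the four identified integrals gives $\rho_{i,j;k,h}=\sigma(i,j;k,h)$.

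I expect the main obstacle to be the careful bookkeeping in the core step: verifying that the comonotone reduction is legitimate despite the mixed strict/non-strict inequalities, so that the threshold events $\{U_Y<c_x(y)\}$ and $\{U_Y\geq c_x(y)\}$ partition correctly up to $\mathbb{P}_{U_Y}$-null sets (including at atoms of $U_Y$ and on flat regions of a merely non-strict $f_Y(x,\cdot)$), and confirming that pushing $\min$ and $\max$ through $\mathbb{P}(U_Y<\cdot)$ is valid at the cutoffs. Once the representative joint probability is identified, the recombination of the four terms and the sign alignment with Eq.~\eqref{eq21} are routine.
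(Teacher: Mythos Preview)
Your proposal is correct and follows essentially the same approach as the paper: decompose via Lemma~\ref{lem4}, pass to expectations using Assumption~\ref{exi2} to obtain Eq.~\eqref{eq21}, identify each of the four joint probabilities $\mathbb{P}(Y_j<y_1\leq Y_i,\,Y_h<y_2\leq Y_k)$ under Assumptions~\ref{ASEXO2} and~\ref{MONO2}, and recombine. The only difference is cosmetic: the paper outsources the identification of each joint probability to Theorem~5.2 of \citet{Kawakami2024}, whereas you sketch that argument explicitly via the comonotone threshold reduction in $U_Y$; your sketch is in fact the content of that cited result.
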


\subsection{Bounding the product moment of causal effects}

The monotonicity Assumption \ref{MONO2} may sometimes be considered implausible by researchers. 
Therefore, we derive bounds for the product moment of causal effects that do not rely on Assumption \ref{MONO2}.


We first derive  bounds for the joint distribution of  potential outcomes $\mathbb{P}(Y_j<y_1\leq Y_i,Y_h<y_2\leq Y_k)$.
\begin{lemma}
\label{lem6}
Under SCM ${\cal M}$ and Assumptions \ref{ASEXO2} and \ref{exi2}, for any $i,j,k,h \in \{1,\dots,R\}$, $y_1, y_2 \in \Omega_Y$
we have $l(y_1,y_2;i,j,k,h)\leq \mathbb{P}(Y_j<y_1\leq Y_i,Y_h<y_2\leq Y_k)\leq u(y_1,y_2;i,j,k,h)$, where
\begin{align}
&l(y_1,y_2;i,j,k,h)=\nonumber\\
&\hspace{0cm}\max\Big\{\mathbb{P}(Y<y_1|X=j)-\mathbb{P}(Y <y_1|X=i)\nonumber\\
&\hspace{0cm}+\mathbb{P}(Y<y_2|X=h)-\mathbb{P}(Y<y_2|X=k)-1,0\Big\},\\
&u(y_1,y_2;i,j,k,h)=\nonumber\\
&\hspace{0cm}\min\Big\{\min\{\mathbb{P}(Y<y_1|X=j),\mathbb{P}(Y<y_2|X=h)\},\nonumber\\
&\hspace{0cm}1-\max\{\mathbb{P}(Y<y_1|X=i),\mathbb{P}(Y<y_2|X=k)\}\Big\}
\end{align}
\end{lemma}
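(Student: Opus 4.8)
The plan is to recognize that $\mathbb{P}(Y_j<y_1\leq Y_i,\,Y_h<y_2\leq Y_k)$ is the probability of an intersection of four events and to apply the Fr\'{e}chet--Hoeffding inequalities \citep{Frechet1935,Frechet1960} to that intersection. Concretely, I would fix $y_1,y_2\in\Omega_Y$ and define
\begin{align}
A&=\{Y_j<y_1\}, & B&=\{Y_i\geq y_1\},\nonumber\\
C&=\{Y_h<y_2\}, & D&=\{Y_k\geq y_2\},\nonumber
\end{align}
so that $\{Y_j<y_1\leq Y_i,\,Y_h<y_2\leq Y_k\}=A\cap B\cap C\cap D$. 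The essential point is that although the joint law of $(Y_i,Y_j,Y_k,Y_h)$ over potential outcomes is not identified, the marginal probability of each of $A,B,C,D$ is, so the Fr\'{e}chet bounds give the sharpest enclosure available from the marginals alone.

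Next I would invoke Assumption \ref{ASEXO2}: since $Y_x\indep X$, we have $\mathbb{P}(Y_x<y)=\mathbb{P}(Y<y|X=x)$, and hence
\begin{align}
\mathbb{P}(A)&=\mathbb{P}(Y<y_1|X=j),\nonumber\\
\mathbb{P}(C)&=\mathbb{P}(Y<y_2|X=h),\nonumber\\
\mathbb{P}(B)&=1-\mathbb{P}(Y<y_1|X=i),\nonumber\\
\mathbb{P}(D)&=1-\mathbb{P}(Y<y_2|X=k),\nonumber
\end{align}
where the last two identities use that $\{Y_i\geq y_1\}$ and $\{Y_k\geq y_2\}$ are the complements of $\{Y_i<y_1\}$ and $\{Y_k<y_2\}$. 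I would then apply the Fr\'{e}chet--Hoeffding bounds for an intersection of $n=4$ events, namely $\max\{\sum_{E}\mathbb{P}(E)-(n-1),\,0\}\leq \mathbb{P}(A\cap B\cap C\cap D)\leq \min_{E}\mathbb{P}(E)$.

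Finally I would substitute the conditional-CDF expressions into both bounds and simplify. The upper bound $\min\{\mathbb{P}(A),\mathbb{P}(B),\mathbb{P}(C),\mathbb{P}(D)\}$ collapses to $\min\{\min\{\mathbb{P}(Y<y_1|X=j),\mathbb{P}(Y<y_2|X=h)\},\,1-\max\{\mathbb{P}(Y<y_1|X=i),\mathbb{P}(Y<y_2|X=k)\}\}$ once the two ``$\mathbb{P}(\cdot)$'' terms and the two ``$1-\mathbb{P}(\cdot)$'' terms are grouped, matching $u(y_1,y_2;i,j,k,h)$. For the lower bound, substitution gives $\mathbb{P}(A)+\mathbb{P}(B)+\mathbb{P}(C)+\mathbb{P}(D)-3=\mathbb{P}(Y<y_1|X=j)-\mathbb{P}(Y<y_1|X=i)+\mathbb{P}(Y<y_2|X=h)-\mathbb{P}(Y<y_2|X=k)-1$, which recovers $l(y_1,y_2;i,j,k,h)$. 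I do not expect a genuine obstacle here; the only care required is the bookkeeping that converts the two ``$\geq$'' events into complements (so that their probabilities enter as $1-\text{CDF}$) and the algebra rewriting $\sum_E\mathbb{P}(E)-(n-1)$ with $n=4$ into the stated form. Note that Assumption \ref{exi2} is not used in the pointwise inequality itself; it serves only to guarantee that the integrals built from these bounds in the subsequent theorem are finite.
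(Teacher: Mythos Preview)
Your proposal is correct and follows essentially the same approach as the paper: both write the target probability as the intersection of the four events $\{Y_j<y_1\}$, $\{y_1\leq Y_i\}$, $\{Y_h<y_2\}$, $\{y_2\leq Y_k\}$, apply the Fr\'{e}chet inequalities with $n=4$, and then use Assumption~\ref{ASEXO2} to replace each marginal by a conditional CDF. Your observation that Assumption~\ref{exi2} plays no role in the pointwise bound itself is also accurate.
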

Then, we have the following theorem.
\begin{theorem}[Bounds of the product moments of causal effects]
\label{theo4}
Under SCM ${\cal M}$ and Assumptions \ref{ASEXO2} and \ref{exi2}, for any $i,j,k,h \in \{1,\dots,R\}$, we have $\sigma_L(i,j;k,h) \leq \rho_{i,j;k,h} \leq \sigma_U(i,j;k,h)$, where
\begin{align}
\label{eq25}
\sigma_L(i,j;k,h)\nonumber
&=\int_{{\Omega_Y}^2}l(y_1,y_2;i,j,k,h)dy_1dy_2\nonumber\\
&-\int_{{\Omega_Y}^2}u(y_1,y_2;j,i,k,h)dy_1dy_2\nonumber\\
&-\int_{{\Omega_Y}^2}u(y_1,y_2;i,j,h,k)dy_1dy_2\nonumber\\
&+\int_{{\Omega_Y}^2}l(y_1,y_2;j,i,h,k)dy_1dy_2,\\
\label{eq26}
\sigma_U(i,j;k,h)\nonumber
&=\int_{{\Omega_Y}^2}u(y_1,y_2;i,j,k,h)dy_1dy_2\nonumber\\
&-\int_{{\Omega_Y}^2}l(y_1,y_2;j,i,k,h)dy_1dy_2\nonumber\\
&-\int_{{\Omega_Y}^2}l(y_1,y_2;i,j,h,k)dy_1dy_2\nonumber\\
&+\int_{{\Omega_Y}^2}u(y_1,y_2;j,i,h,k)dy_1dy_2.
\end{align}
\end{theorem}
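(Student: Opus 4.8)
The plan is to turn the \emph{exact} four-term integral representation of $\rho_{i,j;k,h}$ in Eq.~\eqref{eq21} into an inequality by replacing each of the four joint-probability integrands with the pointwise Fr\'{e}chet-type bounds supplied by Lemma~\ref{lem6}. Concretely, Eq.~\eqref{eq21} writes $\rho_{i,j;k,h}=T_1-T_2-T_3+T_4$, where $T_1=\int_{\Omega_Y^2}\mathbb{P}(Y_j<y_1\leq Y_i,Y_h<y_2\leq Y_k)\,dy_1dy_2$ and $T_2,T_3,T_4$ are the analogous double integrals appearing with signs $-,-,+$. Each integrand is a joint distribution of potential outcomes of exactly the shape bounded by Lemma~\ref{lem6}, so the whole argument reduces to applying that lemma four times with the correct index relabelings and then integrating the resulting pointwise inequalities over $\Omega_Y^2$.

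The key step is the bookkeeping of which bound attaches to which term. Reading off Lemma~\ref{lem6}, the template $\mathbb{P}(Y_j<y_1\leq Y_i,Y_h<y_2\leq Y_k)$ carries signature $(i,j,k,h)$, so $T_1$ uses signature $(i,j,k,h)$, $T_2$ (which swaps the roles of $i,j$ in the first coordinate) uses $(j,i,k,h)$, $T_3$ (which swaps $k,h$ in the second coordinate) uses $(i,j,h,k)$, and $T_4$ (which swaps both) uses $(j,i,h,k)$. Because the sign pattern is $+,-,-,+$, the inequality directions must be chosen term by term: to produce the \emph{lower} bound $\sigma_L$ I would bound the positively-signed terms $T_1,T_4$ below by $l$ and the negatively-signed terms $T_2,T_3$ above by $u$ (subtracting a larger quantity yields a smaller value); to produce the \emph{upper} bound $\sigma_U$ I would reverse every choice. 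Collecting these exactly reproduces Eqs.~\eqref{eq25} and~\eqref{eq26}.

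The legitimacy of integrating the pointwise inequalities and then splitting the integral into four finite pieces rests on Assumption~\ref{exi2}, which guarantees that $\rho_{i,j;k,h}$ and each constituent integral are finite, so there is no $\infty-\infty$ ambiguity when the four bounded terms are recombined with their signs; Assumption~\ref{ASEXO2} (exogeneity) is what lets Lemma~\ref{lem6} express the bounds through the conditional CDFs $\mathbb{P}(Y<y_p\mid X=x)$. The main obstacle is not analytic depth but disciplined index and sign tracking across the four terms, since a single transposition of $l$ and $u$, or a mislabeled signature, silently produces an invalid bound. A secondary point worth flagging is that these bounds are generally \emph{not} sharp: the four terms are governed by one common joint law of $(Y_i,Y_j,Y_k,Y_h)$, whereas bounding each term in isolation optimizes four marginal couplings that need not be realizable simultaneously. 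Validity of the stated bounds, however, follows purely from monotonicity and linearity of the integral and does not require such simultaneous realizability, so the term-by-term argument suffices for Theorem~\ref{theo4}.
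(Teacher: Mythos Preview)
Your proposal is correct and follows essentially the same approach as the paper: start from the four-term integral representation of $\rho_{i,j;k,h}$ in Eq.~\eqref{eq21}, apply the pointwise Fr\'{e}chet bounds of Lemma~\ref{lem6} to each integrand with the appropriate index signature, and choose $l$ versus $u$ according to the sign of each term to obtain $\sigma_L$ and $\sigma_U$. Your additional remarks on the role of Assumption~\ref{exi2} and on non-sharpness are accurate and go slightly beyond the paper's own proof, which simply writes out the two chains of inequalities without further commentary.
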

If $\sigma_U=\infty$ and $\sigma_L=-\infty$, then $\rho_{i,j;k,h}$ is unbounded.
{The bounds for product moments are not sharp.}

\textbf{Remark.}  We present a similar identification theorem and bounds for the central product moment (covariance) of causal effects in Appendix \ref{appC}. 
Additionally, the correlation of causal effects is also bounded, as shown in Appendix \ref{appC}.

\section{Numerical Experiments}
\label{sec-exp}
In this section, we perform experiments to illustrate the finite-sample performance of the estimators of the moments of causal effects.


{\bf Estimation.}
The family of moments of causal effects $\sigma^{(m)}, \sigma_L^{(m)}, \sigma_U^{(m)}, \sigma(i,j;k,h), \sigma_L(i,j;k,h), \sigma_U(i,j;k,h)$ in Theorems~\ref{theo1}-\ref{theo4} (and the central moments of causal effects $\bar{\sigma}^{(m)},\bar{\sigma}_L^{(m)}, \bar{\sigma}_U^{(m)}, \bar{\sigma}(i,j,k,h), \bar{\sigma}_U(i,j,k,h), \bar{\sigma}_L(i,j,k,h)$ in Appendices \ref{appB} and \ref{appC}) are estimable by plugging in the empirical CDFs and expectations \citep{Vaart1998} and calculating the integrals using the Monte Carlo integration method \citep{Press2007}.
Let $N$ be the sample size of the dataset, and let $N_1$, $N_2$, $N_3$, and $N_4$ be the numbers of points for Monte Carlo integration on $y_1$, $y_2$, $y_3$, and $y_4$.
We assume that
the domains of $Y$ and $Y-\mathbb{E}[Y|X=x]$ for any $x \in \Omega_X$ are bounded by $[a,b]$, 
which is required for Monte Carlo integration. 
The details of all estimators are shown in Appendix \ref{appCon}.
They are all consistent estimators as discussed in Appendix \ref{appCon}.


{\bf Simulation for the moments of causal effects.}
We assume the following SCM (A):
\begin{gather}
Y:=-(X+1)U\mathbb{I}(XU\geq 0),\\
X \sim \text{Bern}({0.8}), U\sim \text{Unif}(-1,1),
\end{gather}
where $\text{Bern}(p)$ is a Bernoulli distribution with probability $p$, and $\text{Unif}(-1,1)$ is a uniform distribution over $[-1,1]$. 
This setting satisfies Assumptions \ref{ASEXO2} and \ref{MONO2}.
We simulate 1000 times with the sample size $N=20,100,1000$, respectively.
We let $N_1$, $N_2$, $N_3$, and $N_4$ all be 1000.


\begin{table*}[tb]
\centering
\caption{Results of numerical experiments for SCM (A).
{We present the estimates of the second moments $\sigma^{(2)}$, third moments $\sigma^{(3)}$, and fourth moments $\sigma^{(4)}$ of causal effects along with their respective upper and lower bounds.
Additionally, we report the means of each estimator accompanied by their 95\% confidence intervals.}}
\label{tab:a2}
\vspace{-0.25cm}
\scalebox{1}{
\begin{tabular}{c|cccc}
\hline
Estimators & $N=20$ & $N=100$ & $N=1000$ &  Ground Truth \\
\hline
\hline
$\sigma^{(2)}$  & $0.405 ([0.138,0.841])$ &  $0.373 ([0.215,0.659])$ & $0.335([0.289,0.418])$ &$0.333$ \\
$\sigma_U^{(2)}$  & $1.5478 ([0.804,2.623])$ &$1.582 ([1.127,2.030])$ &  $1.647 ([1.485,1.769])$ &- \\
$\sigma_L^{(2)}$   & $0.108 ([0.000,0.679])$ &$0.005 ([0.000,0.018])$ &  $0.000 ([0.000,0.000])$ &- \\
\hline
$\sigma^{(3)}$   &  $-0.572 ([-1.679,0.093])$ & $-0.293 ([-0.750,-0.065])$ & $-0.245 ([-0.305,-0.186])$ &$-0.250$ \\
$\sigma_U^{(3)}$   & $0.087 ([-0.107,0.292])$ &$0.120 ([0.037,0.234])$ &  $0.126 ([0.074,0.177])$ &- \\
$\sigma_L^{(3)}$   & $-2.479 ([-5.381,-0.612])$ &$-3.175 ([-3.999,-1.978])$ &  $-3.412 ([-3.877,-3.113])$ &- \\
\hline
$\sigma^{(4)}$   &  $0.963 ([0.066,6.922])$ & $0.194 ([0.061,0.384])$& $0.205 ([0.112,0.283])$ &  $0.200$ \\
$\sigma_U^{(4)}$  & $6.837 ([1.157,11.497])$ &$7.712 ([4.909,10.848])$ &  $8.093 ([7.214,8.878])$ &- \\
$\sigma_L^{(4)}$   & $0.008 ([0.000,0.085])$ &$0.000 ([0.000,0.000])$ &  $0.000 ([0.000,0.000])$ &- \\
\hline
\end{tabular}
}
\end{table*}

\begin{table*}[tb]
\centering
\caption{Results of numerical experiments for SCM (B).
{We present the estimates of the product moments of causal effects $\sigma(1,0;0,-1)$ along with their respective upper and lower bounds.
Additionally, we report the means of each estimator accompanied by their 95\% confidence intervals.}}
\label{tab:a}
\vspace{-0.25cm}
\scalebox{1}{
\begin{tabular}{c|cccc}
\hline
Estimators & $N=20$ & $N=100$ & $N=1000$ &  Ground Truth \\
\hline
\hline
$\sigma(1,0;0,-1)$  & $-0.300 ([-0.437,-0.131])$ & $-0.323 ([-0.420,-0.239])$ & $-0.327 ([-0.419,-0.260])$ &$-0.333$ \\
$\sigma_U(1,0;0,-1)$   & $-0.154 ([-0.521,0.000])$ &$-0.105 ([-0.217,-0.029])$ &  $-0.168 ([-0.222,-0.112])$ &- \\
$\sigma_L(1,0;0,-1)$   & $-0.352 ([-0.559,-0.100])$ &$-0.390 ([-0.583,-0.278])$  &  $-0.338 ([-0.409,-0.260])$ &- \\
\hline
\end{tabular}
}
\end{table*}

{\bf Results.}
We present the estimates obtained using our proposed methods
in Table \ref{tab:a2}.
All means of the estimators are close to the ground truth for $N=1000$. 
All ground truth values lie within the computed bounds. 
However, estimators for small sample sizes have large 95 $\%$ CIs, 
especially for high-order moments.

{\bf Simulation for the product moments of causal effects.}
We assume the following SCM (B):
\begin{equation}
Y:=X^2U, U\sim \text{Unif}(0,1),
\end{equation}
where $X$ takes values in $\{-1, 0, 1\}$ with the probabilities $\mathbb{P}(X=-1)=\mathbb{P}(X=0)=\mathbb{P}(X=1)=1/3$.  
The domain of $Y$ is bounded within $[0,1]$.
This setting satisfies Assumptions \ref{ASEXO2} and \ref{MONO2}.
We simulate 1000 times with the sample size $N=20,100,1000$, respectively.
We let $N_1$ and $N_2$ both be 1000.

{\bf Results.}
We present the estimates for $\mathbb{E}[(Y_1-Y_0)(Y_0-Y_{-1})]$ 
in Table \ref{tab:a}.
All means of the estimators are close to the ground truth. 
The ground truth value lies within the computed bounds.
However, estimators for small sample sizes have large 95 $\%$ CIs.

Overall, the results show that, as the sample size increases, the estimates are close to the ground truths.



\section{Application to Real-World}
We present an application to a real-world medical dataset.

{\bf Dataset.}
We take up a dataset ``Cholesterol Reduction'' \citep{Westfall2011} ({https://search.r-project.org/CRAN/refmans/multcomp/html/cholesterol.html}). 
This clinical study was conducted to assess the effect of three formulations of the same drug ($X$) on cholesterol reduction ($Y$).
The three formulations consisted of 20 mg taken once daily (“1 time”, $X=1$), 10 mg taken twice daily (“2 times”, $X=2$), and 5 mg taken four times daily (“4 times”, $X=4$).
The dataset 
has a sample size of 10 for each treatment group.
The purpose of the study was to determine which of the formulations is efficacious.
Previous studies examined multiple comparisons, specifically whether the conditions $\mathbb{E}[Y_4]>\mathbb{E}[Y_2]$, $\mathbb{E}[Y_2]>\mathbb{E}[Y_1]$, and $\mathbb{E}[Y_4]>\mathbb{E}[Y_1]$ simultaneously hold or not \citep{Westfall2011}.
The domain of $Y$ is bounded within $[5,19]$.

{We assume the exogeneity assumption holds, as \citet{Westfall2011} did not report any potential confounding factors.
We assume that the monotonicity assumption holds, meaning that increasing the formulation from 1 time to 2 times always leads to greater cholesterol reduction for all subjects, and similarly, increasing it from 2 times to 4 times consistently results in further cholesterol reduction for all subjects.}
We used the estimators described in Section~\ref{sec-exp}. 
We conduct the bootstrapping \citep{Efron1979} to 
provide the means and 95\% confidence intervals (CI) for each estimator.

{\bf Results on the moments of causal effects.}
First, we study the causal effects of changing the formulation from $X=1$ to $X=2$, 
i.e., $Y_2-Y_1$. The results are:
\begin{center}
\textbf{Mean}: $3.432$ (95\%CI: $[0.914,6.104]$).\\\vspace{0.1cm}
\textbf{Variance}: $3.072$ (95\%CI: $[0.297,8.610]$),\\\vspace{0.1cm}
\textbf{Standard deviation}: $1.753$ (95\%CI: $[0.545,2.934]$).\\\vspace{0.1cm}
\textbf{Skewness}: $21.027$ (95\%CI: $[-6.747,34.504]$),\\\vspace{0.1cm}
\textbf{Kurtosis}: $21.312$ (95\%CI: $[0.000,203.885]$).
\end{center}
Changing the treatment formulation from $X=1$ to $X=2$ increases the amount of cholesterol reduction.
Relatively large standard deviation  suggests that the causal effects exhibit some degree of heterogeneity.
The positive skewness suggests that the distribution of the causal effect may be positively skewed.
The kurtosis estimate has a large CI, and a larger sample size appears to be necessary to obtain more reliable estimates.

{Assume that we have enough samples such that these estimates are reliable. We have the average causal effect $3.432$. 
The relatively large standard deviation suggests that the causal effect exhibits a fair degree of heterogeneity across individuals. 
Our results exhibit substantially greater positive skewness than that of an exponential distribution, which has a skewness of $2$, and significantly higher kurtosis than that of a Gaussian distribution, which has a kurtosis of $3$. 
The large positive  skewness suggests that  there may be a larger number of individuals having effects smaller than the average $3.432$, rather than larger than $3.432$; and it suggests the existence of a small number of individuals who have effects that are significantly larger than the average, making the average greater than the median.
Finally, the large positive kurtosis value indicates a high number of outliers, which, given the large positive skewness, suggests a high number of individuals with causal effects that are significantly  higher than the average.}

The results for $Y_4-Y_1$ and $Y_4-Y_2$ are presented in Appendix \ref{appE},   along with the estimated bounds. 
The bounds 
are all relatively wide.

{\bf Results on the product moments of causal effects.}
Next, we study the covariance and correlation of causal effects $Y_2-Y_1$ and $Y_4-Y_2$, where the estimated mean of $Y_4-Y_2$ is  
$3.201$ (95\%CI: $[0.471,5.811]$). 
The estimates 
are
\begin{center}
\textbf{Covariance}: $-2.076$ (95\%CI: $[-6.235,0.297]$),\\\vspace{0.1cm}
\textbf{Correlation}: $-0.594$ (95\%CI: $[-1.000,0.750]$).
\end{center}
The results indicate that the causal effects $Y_2-Y_1$ and $Y_4-Y_2$ may be negatively correlated.

Consider the change in treatment from $X=1$ to 2 to 4. 
The mean of $Y_2-Y_1$ and $Y_4-Y_2$ is $3.432$ and $3.201$, respectively, indicating that the treatment changes from $X=1$ to $X=2$ and from $X=2$ to $X=4$ increase cholesterol reduction on average.
The correlation result offers a more detailed insight.
Patients who have larger causal effects $Y_2-Y_1$ than the average tend to have  causal effects $Y_4-Y_2$ smaller than the average.
Conversely, patients who have smaller causal effects $Y_2-Y_1$ than the average tend to have causal effects $Y_4-Y_2$ larger than the average.
This means that if one of the changes significantly increases cholesterol reduction, then the other change has a smaller effect. 
However, the estimate of the correlation has a large CI.
A larger sample size appears to be necessary to obtain a more reliable estimate.

Additionally, we show the estimated bounds of the covariance and correlation of causal effects $Y_2-Y_1$ and $Y_4-Y_2$ in Appendix \ref{appE}. 
The bounds 
are relatively wide.

\section{Conclusion}
This work moves beyond the  traditional focus on ACE to investigate the family of moments of causal effects. 
They could serve as fundamental tools for understanding the shape of the distribution of causal effects and their heterogeneity.
They can reveal asymmetries  in the distribution, suggesting the presence of distinct response groups that might require different treatment strategies.
This understanding is crucial for tailoring interventions and designing more effective policies. 

We have also studied 
conditional moments of causal effects,  
which  characterize the shape of the distribution of causal effects within a  subpopulation defined by subjects' covariates $W$. They provide complementary information to conditional ACE (CACE), i.e., $\mathbb{E}[Y_1-Y_0|W=w]$, by capturing higher-order properties beyond the mean.
The definition, identification, and bounds of the conditional moments of causal effects are discussed in Appendix~\ref{appD}.


{The exogeneity and monotonicity assumptions, while common in the causal inference literature, can  restrict the practical applicability of the identification results. The assumption of exogeneity, requiring the absence of unmeasured confounders, might be plausible in some applications, especially in certain controlled settings.The monotonicity assumption can be challenging to verify in practice. In such scenarios, the bounding results that relax the monotonicity assumption and depend primarily on exogeneity provide a range of plausible causal effects, still offering valuable information for guiding decisions. In general, these assumptions require a cautious interpretation of the findings in the practical use of the results.}

Future research will include sensitivity analysis and instrumental variable analysis for the  moments of causal effects under unmeasured confounders, extending methods developed for the ACE \citep{Wright1928, Cornfield1959, Imbens1994, VanderWeele2011, Ding2016}. 
Another direction will involve deriving narrower bounds.
{Deriving tighter bounds than those provided by the Fr\'{e}chet inequalities remains a highly challenging open mathematical problem when $m \geq 2$. 
In the finance area, some studies \citep{Lux2017,Bartl2017} provide improved Fr\'{e}chet–Hoeffding bounds by incorporating additional information to on the joint distribution.}
In the PoC literature, some studies \citep{Kuroki2011, Dawid2017} provide narrower bounds by incorporating additional information about third variables (e.g., covariates or mediators).



\section*{Acknowledgements}
The authors thank the anonymous reviewers for their time
and thoughtful comments.

\bibliography{aaai25}

\begin{thebibliography}{66}
\providecommand{\natexlab}[1]{#1}
\providecommand{\url}[1]{\texttt{#1}}
\expandafter\ifx\csname urlstyle\endcsname\relax
  \providecommand{\doi}[1]{doi: #1}\else
  \providecommand{\doi}{doi: \begingroup \urlstyle{rm}\Url}\fi

\bibitem[Athey and Imbens(2016)]{Athey2016}
Susan Athey and Guido Imbens.
\newblock Recursive partitioning for heterogeneous causal effects.
\newblock \emph{Proceedings of the National Academy of Sciences}, 113\penalty0 (27):\penalty0 7353--7360, 2016.

\bibitem[Athey and Imbens(2019)]{Athey2019}
Susan Athey and Guido~W. Imbens.
\newblock Machine learning methods that economists should know about.
\newblock \emph{Annual Review of Economics}, 11\penalty0 (1):\penalty0 685--725, 2019.

\bibitem[Balke and Pearl(1997)]{Balke1997}
Alexander Balke and Judea Pearl.
\newblock Bounds on treatment effects from studies with imperfect compliance.
\newblock \emph{Journal of the American Statistical Association}, 92\penalty0 (439):\penalty0 1171--1176, 1997.

\bibitem[Bartholomew(1959)]{Bartholomew1959}
D.~J. Bartholomew.
\newblock A test of homogeneity for ordered alternatives.
\newblock \emph{Biometrika}, 46\penalty0 (1/2):\penalty0 36--48, 1959.

\bibitem[Bartl et~al.(2017)Bartl, Kupper, Lux, Papapantoleon, and Eckstein]{Bartl2017}
Daniel Bartl, Michael Kupper, Thibaut Lux, Antonis Papapantoleon, and Stephan Eckstein.
\newblock Marginal and dependence uncertainty: bounds, optimal transport, and sharpness.
\newblock \emph{arXiv preprint arXiv:1709.00641}, 2017.

\bibitem[Bishop(2006)]{Bishop2006}
Christopher Bishop.
\newblock \emph{Pattern Recognition and Machine Learning}.
\newblock Springer, January 2006.

\bibitem[Cornfield et~al.(1959)Cornfield, Haenszel, Ec, Lilienfeld, Shimkin, and El]{Cornfield1959}
Jerome Cornfield, William Haenszel, Hammond Ec, Abraham~M. Lilienfeld, Michael~Boris Shimkin, and Wynder El.
\newblock Smoking and lung cancer: recent evidence and a discussion of some questions.
\newblock \emph{Journal of the National Cancer Institute}, 22 1:\penalty0 173--203, 1959.

\bibitem[Cram{\'e}r(1999)]{Cramer1999}
Harald Cram{\'e}r.
\newblock \emph{Mathematical methods of statistics}, volume~26.
\newblock Princeton university press, 1999.

\bibitem[Dawid et~al.(2017)Dawid, Musio, and Murtas]{Dawid2017}
A~Philip Dawid, Monica Musio, and Rossella Murtas.
\newblock {The probability of causation1}.
\newblock \emph{Law, Probability and Risk}, 16\penalty0 (4):\penalty0 163--179, 11 2017.

\bibitem[DiNardo et~al.(1996)DiNardo, Fortin, and Lemieux]{DiNardo1996}
John DiNardo, Nicole~M. Fortin, and Thomas Lemieux.
\newblock Labor market institutions and the distribution of wages, 1973-1992: A semiparametric approach.
\newblock \emph{Econometrica}, 64\penalty0 (5):\penalty0 1001--1044, 1996.

\bibitem[Ding and VanderWeele(2016)]{Ding2016}
Peng Ding and Tyler~J. VanderWeele.
\newblock Sensitivity analysis without assumptions.
\newblock \emph{Epidemiology}, 27\penalty0 (3), 2016.

\bibitem[Doane and Seward(2011)]{Doane2011}
David~P Doane and Lori~E Seward.
\newblock Measuring skewness: a forgotten statistic?
\newblock \emph{Journal of statistics education}, 19\penalty0 (2), 2011.

\bibitem[Doob(1935)]{Doob1935}
J.~L. Doob.
\newblock {The Limiting Distributions of Certain Statistics}.
\newblock \emph{The Annals of Mathematical Statistics}, 6\penalty0 (3):\penalty0 160 -- 169, 1935.

\bibitem[Efron(1979)]{Efron1979}
B.~Efron.
\newblock {Bootstrap Methods: Another Look at the Jackknife}.
\newblock \emph{The Annals of Statistics}, 7\penalty0 (1):\penalty0 1 -- 26, 1979.

\bibitem[Fang and Santos(2018)]{Fang2018}
Zheng Fang and Andres Santos.
\newblock {Inference on Directionally Differentiable Functions}.
\newblock \emph{The Review of Economic Studies}, 86\penalty0 (1):\penalty0 377--412, 09 2018.

\bibitem[Fr\'{e}chet(1960)]{Frechet1960}
M.~Fr\'{e}chet.
\newblock Sur les tableaux dont les marges et des bornes sont données.
\newblock \emph{Revue de l'Institut International de Statistique / Review of the International Statistical Institute}, 28\penalty0 (1/2):\penalty0 10--32, 1960.

\bibitem[Fr\'{e}chet(1935)]{Frechet1935}
Maurice Fr\'{e}chet.
\newblock Généralisation du théorème des probabilités totales.
\newblock \emph{Fundamenta Mathematicae}, 25\penalty0 (1):\penalty0 379--387, 1935.

\bibitem[Hastie et~al.(2009)Hastie, Tibshirani, and Friedman]{Hastie2009}
T.~Hastie, R.~Tibshirani, and J.~Friedman.
\newblock \emph{The Elements of Statistical Learning: Data Mining, Inference, and Prediction, Second Edition}.
\newblock Springer Series in Statistics. Springer New York, 2009.

\bibitem[Heckman et~al.(1997)Heckman, Smith, and Clements]{Heckman1997}
James~J. Heckman, Jeffrey Smith, and Nancy Clements.
\newblock Making the most out of programme evaluations and social experiments: Accounting for heterogeneity in programme impacts.
\newblock \emph{The Review of Economic Studies}, 64\penalty0 (4):\penalty0 487--535, 10 1997.

\bibitem[Hernan and Robins(2024)]{Hernan2024}
M.A. Hernan and J.M. Robins.
\newblock \emph{Causal Inference: What If}.
\newblock Chapman \& Hall/CRC Monographs on Statistics \& Applied Probab. CRC Press, 2024.

\bibitem[Holland(1986)]{Holland1986}
Paul~W Holland.
\newblock Statistics and causal inference.
\newblock \emph{Journal of the American statistical Association}, 81\penalty0 (396):\penalty0 945--960, 1986.

\bibitem[Hoshino and Takahata(2020)]{Hoshino2020}
Takahiro Hoshino and Keisuke Takahata.
\newblock Parametric identification of the joint distribution of the potential outcomes.
\newblock \emph{Stat}, 9\penalty0 (1):\penalty0 e254, 2020.

\bibitem[Imai and van Dyk(2004)]{Imai2004}
Kosuke Imai and David~A van Dyk.
\newblock Causal inference with general treatment regimes.
\newblock \emph{Journal of the American Statistical Association}, 99\penalty0 (467):\penalty0 854--866, 2004.

\bibitem[Imbens(2000)]{Imbens2000}
Guido~W. Imbens.
\newblock The role of the propensity score in estimating dose-response functions.
\newblock \emph{Biometrika}, 87\penalty0 (3):\penalty0 706--710, 2000.

\bibitem[Imbens and Angrist(1994)]{Imbens1994}
Guido~W. Imbens and Joshua~D. Angrist.
\newblock Identification and estimation of local average treatment effects.
\newblock \emph{Econometrica}, 62\penalty0 (2):\penalty0 467--475, 1994.

\bibitem[Joanes and Gill(1998)]{Joanes1998}
D.~N. Joanes and C.~A. Gill.
\newblock Comparing measures of sample skewness and kurtosis.
\newblock \emph{Journal of the Royal Statistical Society: Series D (The Statistician)}, 47\penalty0 (1):\penalty0 183--189, 1998.

\bibitem[Ju and Geng(2010)]{Ju2010}
Chuan Ju and Zhi Geng.
\newblock Criteria for surrogate end points based on causal distributions.
\newblock \emph{Journal of the Royal Statistical Society. Series B (Statistical Methodology)}, 72\penalty0 (1):\penalty0 129--142, 2010.

\bibitem[Jung et~al.(2021)Jung, Tian, and Bareinboim]{Jung2021}
Yonghan Jung, Jin Tian, and Elias Bareinboim.
\newblock Double machine learning density estimation for local treatment effects with instruments.
\newblock In \emph{Advances in Neural Information Processing Systems}, volume~34, pages 21821--21833. Curran Associates, Inc., 2021.

\bibitem[Kawakami et~al.(2024{\natexlab{a}})Kawakami, Kuroki, and Tian]{Kawakami2024}
Yuta Kawakami, Manabu Kuroki, and Jin Tian.
\newblock Probabilities of causation for continuous and vector variables.
\newblock \emph{Proceedings of the 40th Conference on Uncertainty in Artificial Intelligence (UAI-2024)}, 2024{\natexlab{a}}.

\bibitem[Kawakami et~al.(2024{\natexlab{b}})Kawakami, Kuroki, and Tian]{Kawakami2024b}
Yuta Kawakami, Manabu Kuroki, and Jin Tian.
\newblock Identification and estimation of conditional average partial causal effects via instrumental variable.
\newblock \emph{Proceedings of the 40th Conference on Uncertainty in Artificial Intelligence (UAI-2024)}, 2024{\natexlab{b}}.

\bibitem[Kennedy et~al.(2023{\natexlab{a}})Kennedy, Balakrishnan, and Wasserman]{Kennedy2023d}
E~H Kennedy, S~Balakrishnan, and L~A Wasserman.
\newblock Semiparametric counterfactual density estimation.
\newblock \emph{Biometrika}, 110\penalty0 (4):\penalty0 875--896, 03 2023{\natexlab{a}}.

\bibitem[Kennedy et~al.(2023{\natexlab{b}})Kennedy, Balakrishnan, and Wasserman]{Kennedy2023b}
Edward~H Kennedy, Sivaraman Balakrishnan, and LA~Wasserman.
\newblock Semiparametric counterfactual density estimation.
\newblock \emph{Biometrika}, 110\penalty0 (4):\penalty0 875--896, 2023{\natexlab{b}}.

\bibitem[Kim et~al.(2024)Kim, Kim, and Kennedy]{Kim2024}
Kwangho Kim, Jisu Kim, and Edward~H. Kennedy.
\newblock Causal effects based on distributional distances, 2024.
\newblock URL \url{https://arxiv.org/abs/1806.02935}.

\bibitem[K{\"u}nzel et~al.(2019)K{\"u}nzel, Sekhon, Bickel, and Yu]{Kunzel2019}
S{\"o}ren~R. K{\"u}nzel, Jasjeet~S. Sekhon, Peter~J. Bickel, and Bin Yu.
\newblock Metalearners for estimating heterogeneous treatment effects using machine learning.
\newblock \emph{Proceedings of the National Academy of Sciences}, 116\penalty0 (10):\penalty0 4156--4165, 2019.

\bibitem[Kuroki and Cai(2011)]{Kuroki2011}
Manabu Kuroki and Zhihong Cai.
\newblock Statistical analysis of 'probabilities of causation' using co-variate information.
\newblock \emph{Scandinavian Journal of Statistics}, 38\penalty0 (3):\penalty0 564--577, 2011.

\bibitem[Kuroki and Tezuka(2024)]{Kuroki2024}
Manabu Kuroki and Taiki Tezuka.
\newblock The estimated causal effect on the variance based on the front-door criterion in gaussian linear structural equation models: an unbiased estimator with the exact variance.
\newblock \emph{Statistical Papers}, 65\penalty0 (3):\penalty0 1285--1308, 2024.

\bibitem[Li and Pearl(2024)]{ALi2024}
Ang Li and Judea Pearl.
\newblock Probabilities of causation with nonbinary treatment and effect.
\newblock \emph{Proceedings of the AAAI Conference on Artificial Intelligence}, 38\penalty0 (18):\penalty0 20465--20472, Mar. 2024.

\bibitem[Li et~al.(2024)Li, Shi, Wang, Zhou, and Zhu]{Li2024}
Ting Li, Chengchun Shi, Jianing Wang, Fan Zhou, and Hongtu Zhu.
\newblock Optimal treatment allocation for efficient policy evaluation in sequential decision making.
\newblock In \emph{Proceedings of the 37th International Conference on Neural Information Processing Systems}, NIPS '23, Red Hook, NY, USA, 2024. Curran Associates Inc.

\bibitem[Lin et~al.(2023)Lin, Kong, and Wang]{Lin2023}
Zhenhua Lin, Dehan Kong, and Linbo Wang.
\newblock {Causal inference on distribution functions}.
\newblock \emph{Journal of the Royal Statistical Society Series B: Statistical Methodology}, 85\penalty0 (2):\penalty0 378--398, 03 2023.

\bibitem[Lux and Papapantoleon(2017)]{Lux2017}
Thibaut Lux and Antonis Papapantoleon.
\newblock Improved fr{\'e}chet--hoeffding bounds on d-copulas and applications in model-free finance.
\newblock 2017.

\bibitem[Mackey(1980)]{Mackey1980}
George~W. Mackey.
\newblock {Harmonic analysis as the exploitation of symmetry--a historical survey}.
\newblock \emph{Bulletin (New Series) of the American Mathematical Society}, 3\penalty0 (1.P1):\penalty0 543 -- 698, 1980.

\bibitem[Murphy(2022)]{Murphy2022}
Kevin~P. Murphy.
\newblock \emph{Probabilistic Machine Learning: An introduction}.
\newblock MIT Press, 2022.

\bibitem[Nelsen(2007)]{Nelsen2007}
R.B. Nelsen.
\newblock \emph{An Introduction to Copulas}.
\newblock Springer Series in Statistics. Springer New York, 2007.

\bibitem[Neyman(1923)]{Neyman1923}
Jerzy Neyman.
\newblock Sur les applications de la theorie des probabilites aux experiences agricoles: Essai des principes (in polish). english translation by dm dabrowska and tp speed (1990).
\newblock \emph{Statistical Science}, 5:\penalty0 465--480, 1923.

\bibitem[Page(1963)]{Page1963}
Ellis~Batten Page.
\newblock Ordered hypotheses for multiple treatments: a significance test for linear ranks.
\newblock \emph{Journal of the American Statistical Association}, 58\penalty0 (301):\penalty0 216--230, 1963.

\bibitem[Pearl(1999)]{Pearl1999}
Judea Pearl.
\newblock Probabilities of causation: Three counterfactual interpretations and their identification.
\newblock \emph{Synthese}, 121\penalty0 (1):\penalty0 93--149, 1999.

\bibitem[Pearl(2009)]{Pearl09}
Judea Pearl.
\newblock \emph{Causality: Models, Reasoning and Inference}.
\newblock Cambridge University Press, 2nd edition, 2009.

\bibitem[Pearson(1896)]{Pearson1896}
Karl Pearson.
\newblock Vii. mathematical contributions to the theory of evolution.—iii. regression, heredity, and panmixia.
\newblock \emph{Philosophical Transactions of the Royal Society of London. Series A, containing papers of a mathematical or physical character}, 1896.

\bibitem[Pearson(1905)]{Pearson1905}
Karl Pearson.
\newblock "das fehlergesetz und seine verallgemeinerungen durch fechner und pearson." a rejoinder.
\newblock \emph{Biometrika}, 4\penalty0 (1/2):\penalty0 169--212, 1905.

\bibitem[Post(2023)]{Post2023}
R.~A.~J. Post.
\newblock \emph{Causal Effect Heterogeneity: Statistical formalization and analysis of the individual causal effect}.
\newblock {Phd Thesis 1 (Research TU/e / Graduation TU/e), Mathematics and Computer Science}, Eindhoven University of Technology, 2023.

\bibitem[Press(2007)]{Press2007}
W.H. Press.
\newblock \emph{Numerical Recipes 3rd Edition: The Art of Scientific Computing}.
\newblock Numerical Recipes: The Art of Scientific Computing. Cambridge University Press, 2007.

\bibitem[Robins and Rotnitzky(2001)]{Robins2001}
James Robins and Andrea Rotnitzky.
\newblock Comment on the bickel and kwon article, 'inference for semiparametric models: Some questions and an answer'.
\newblock \emph{Statistica Sinica}, 11:\penalty0 920--936, 01 2001.

\bibitem[Robins(1999)]{Robins1999}
James~M. Robins.
\newblock Association, causation, and marginal structural models.
\newblock \emph{Synthese}, 121\penalty0 (1):\penalty0 151--179, 1999.

\bibitem[Rubin and van~der Laan(2006)]{Rubin2006}
Daniel Rubin and Mark~J van~der Laan.
\newblock Extending marginal structural models through local, penalized, and additive learning.
\newblock 2006.

\bibitem[Rubin(1978)]{Rubin1978}
Donald~B. Rubin.
\newblock Bayesian inference for causal effects: The role of randomization.
\newblock \emph{The Annals of Statistics}, 6\penalty0 (1):\penalty0 34--58, 1978.

\bibitem[Shalit et~al.(2017)Shalit, Johansson, and Sontag]{Shalit2017}
Uri Shalit, Fredrik~D Johansson, and David Sontag.
\newblock Estimating individual treatment effect: generalization bounds and algorithms.
\newblock In \emph{International conference on machine learning}, pages 3076--3085. PMLR, 2017.

\bibitem[Singh et~al.(2023)Singh, Xu, and Gretton]{Singh2023}
R~Singh, L~Xu, and A~Gretton.
\newblock {Kernel methods for causal functions: dose, heterogeneous and incremental response curves}.
\newblock \emph{Biometrika}, page asad042, July 2023.

\bibitem[Tian and Pearl(2000)]{Tian2000}
Jin Tian and Judea Pearl.
\newblock Probabilities of causation: Bounds and identification.
\newblock \emph{Annals of Mathematics and Artificial Intelligence}, 28\penalty0 (1):\penalty0 287--313, 2000.

\bibitem[Vaart(1998)]{Vaart1998}
A.~W. van~der Vaart.
\newblock \emph{Asymptotic Statistics}.
\newblock Cambridge Series in Statistical and Probabilistic Mathematics. Cambridge University Press, 1998.

\bibitem[Vanderweele and Arah(2011)]{VanderWeele2011}
Tyler~J Vanderweele and Onyebuchi~A Arah.
\newblock Bias formulas for sensitivity analysis of unmeasured confounding for general outcomes, treatments, and confounders.
\newblock \emph{Epidemiology}, 22\penalty0 (1):\penalty0 42--52, Jan 2011.

\bibitem[von Hippel(2005)]{Hippel2005}
Paul~T. von Hippel.
\newblock Mean, median, and skew: Correcting a textbook rule.
\newblock \emph{Journal of Statistics Education}, 13\penalty0 (2), 2005.

\bibitem[Wager and Athey(2018)]{Wager2018}
Stefan Wager and Susan Athey.
\newblock Estimation and inference of heterogeneous treatment effects using random forests.
\newblock \emph{Journal of the American Statistical Association}, 113\penalty0 (523):\penalty0 1228--1242, 2018.

\bibitem[Westfall(2014)]{Westfall2014}
Peter~H Westfall.
\newblock Kurtosis as peakedness, 1905 - 2014. r.i.p.
\newblock \emph{Am Stat}, 68\penalty0 (3):\penalty0 191--195, 2014.

\bibitem[Westfall et~al.(2011)Westfall, Tobias, and Wolfinger]{Westfall2011}
Peter~H Westfall, Randall~D Tobias, and Russell~D Wolfinger.
\newblock \emph{Multiple comparisons and multiple tests using SAS}.
\newblock SAS Institute, 2011.

\bibitem[Wiedermann et~al.(2022)Wiedermann, Zhang, Reinke, Herman, and von Eye]{Wiedermann2022}
Wolfgang Wiedermann, Bixi Zhang, Wendy Reinke, Keith~C Herman, and Alexander von Eye.
\newblock Distributional causal effects: Beyond an “averagarian” view of intervention effects.
\newblock \emph{Psychological Methods}, 2022.

\bibitem[Wright(1928)]{Wright1928}
P.G. Wright.
\newblock \emph{The Tariff on Animal and Vegetable Oils}.
\newblock Investigations in international commercial policies. Macmillan, 1928.

\end{thebibliography}


\newpage
\appendix
\onecolumn

\section*{Appendix to ``Moments of Causal Effects"}

We provide the proofs of the lemmas and theorems in the body of the paper in Appendix \ref{appA}, the discussion of the central moments of causal effects in Appendix \ref{appB}, the discussion of the central product moments of causal effects in Appendix \ref{appC}, the discussion of the conditional moments of causal effects in Appendix \ref{appD}, the details and consistency of all estimators in the body of the paper in Appendix \ref{appCon}, and additional information about the application in the body of our paper in Appendix \ref{appE}.

\section{Proofs}
\label{appA}

In this appendix, we provide the proofs of the lemmas and theorems in the body of the paper.
We first show Fr\'{e}chet inequalities \citep{Frechet1935,Frechet1960}.
If $A_i$ are logical propositions or events, the Fréchet inequalities are
\begin{align}
\max\left\{\sum_{i=1}^n\mathbb{P}(A_i)-(n-1),0\right\} \leq \mathbb{P}\left(\bigwedge_{i=1}^n A_i\right) \leq \min_{i=1,\dots,n}\left\{\mathbb{P}(A_i)\right\},
\end{align}
where $\displaystyle \land$ is a logical conjunction.
Especially,
\begin{align}
\max\{\mathbb{P}(A)+\mathbb{P}(B)-1,0\}\leq \mathbb{P}(A \wedge B) \leq \min\{\mathbb{P}(A),\mathbb{P}(B)\}
\end{align}
holds.

{\bf Lemma \ref{lem1}.}
{\it
Under SCM ${\cal M}$, given $m\geq 1$, we have
\begin{align}
(Y_1-Y_0)^m&=(Y_1-Y_0)^m\mathbb{I}(Y_1>Y_0)+(-1)^m(Y_0-Y_1)^m\mathbb{I}(Y_0>Y_1)\nonumber\\
&=\int_{{\Omega_Y}^m} \mathbb{I}(Y_0<y_1\leq Y_1,Y_0<y_2\leq Y_1,\dots,Y_0<y_m\leq Y_1)dy_1\dots dy_m\nonumber\\
&+(-1)^m\int_{{\Omega_Y}^m} \mathbb{I}(Y_1<y_1\leq Y_0,Y_1<y_2\leq Y_0,\dots,Y_1<y_m\leq Y_0)dy_1\dots dy_m.
\end{align}
}

\begin{proof}
Given $m\geq 1$, we have
\begin{align}
&\int_{{\Omega_Y}^m} \mathbb{I}(Y_0<y_1\leq Y_1,Y_0<y_2\leq Y_1,\dots,Y_0<y_m\leq Y_1)dy_1\dots dy_m\nonumber\\
&=\int_{\Omega_Y}\mathbb{I}(Y_0<y_1\leq Y_1)dy_1\dots \int_{\Omega_Y}\mathbb{I}(Y_0<y_m\leq Y_1)dy_m\nonumber\\
&=(Y_1-Y_0)^m\mathbb{I}(Y_1>Y_0).
\end{align}
Similarly, given $m\geq 1$, we have
\begin{align}
&\int_{{\Omega_Y}^m} \mathbb{I}(Y_1<y_1\leq Y_0,Y_1<y_2\leq Y_0,\dots,Y_1<y_m\leq Y_0)dy_1\dots dy_m\nonumber\\
&=\int_{\Omega_Y}\mathbb{I}(Y_1<y_1\leq Y_0)dy_1\dots \int_{\Omega_Y}\mathbb{I}(Y_1<y_m\leq Y_0)dy_m\nonumber\\
&=(Y_0-Y_1)^m\mathbb{I}(Y_1>Y_0)^m\nonumber\\
&=(-1)^m(Y_1-Y_0)^m\mathbb{I}(Y_0>Y_1).
\end{align}
Then, given $m\geq 1$, we have
\begin{align}
(Y_1-Y_0)^m&=\int_{{\Omega_Y}^m} \mathbb{I}(Y_0<y_1\leq Y_1,Y_0<y_2\leq Y_1,\dots,Y_0<y_m\leq Y_1)dy_1\dots dy_m\nonumber\\
&+(-1)^m\int_{{\Omega_Y}^m} \mathbb{I}(Y_1<y_1\leq Y_0,Y_1<y_2\leq Y_0,\dots,Y_1<y_m\leq Y_0)dy_1\dots dy_m.
\end{align}
\end{proof}

{\bf Theorem \ref{theo1}.}
{\it
Under SCM ${\cal M}$ and Assumptions \ref{ASEXO2}, \ref{MONO2}, and \ref{exi1}, given $m\geq 1$, the $m$-th moment of causal effect $Y_1-Y_0$ is identifiable by $\sigma^{(m)}$, where
\begin{align}
\sigma^{(m)}&=\int_{{\Omega_Y}^m} \max\Big\{\min_{p=1,\dots,m}\{\mathbb{P}(Y<y_p|X=0)\}-\max_{p=1,\dots,m}\{\mathbb{P}(Y<y_p|X=1)\},0\Big\}dy_1\dots dy_m\nonumber\\
&+(-1)^m\int_{{\Omega_Y}^m} \max\Big\{\min_{p=1,\dots,m}\{\mathbb{P}(Y<y_p|X=1)\}-\max_{p=1,\dots,m}\{\mathbb{P}(Y<y_p|X=0)\},0\Big\}dy_1\dots dy_m.
\end{align}
}

\begin{proof}
$\mathbb{P}(Y_0<y_1\leq Y_1,Y_0<y_2\leq Y_1,\dots,Y_0<y_m\leq Y_1)$ and $\mathbb{P}(Y_1<y_1\leq Y_0,Y_1<y_2\leq Y_0,\dots,Y_1<y_m\leq Y_0)$ are identifiable by $\max\{\min_{p=1,\dots,m}\{\mathbb{P}(Y<y_p|X=0)\}-\max_{p=1,\dots,m}\{\mathbb{P}(Y<y_p|X=1),0\}$ and $\max\{\min_{p=1,\dots,m}\{\mathbb{P}(Y<y_p|X=1)\}-\max_{p=1,\dots,m}\{\mathbb{P}(Y<y_p|X=0),0\}$ respectively by Theorem 5.2 in \citep{Kawakami2024} under Assumptions \ref{ASEXO2} and \ref{MONO2}.
Then we have Eq.~\eqref{eq10}.
\end{proof}

{\bf Lemma \ref{lem3}.}
{\it
Under SCM ${\cal M}$ and Assumptions \ref{ASEXO2} and \ref{exi1}, given $m\geq 1$,
we have $l(y_1,\dots,y_m;i,j)\leq \mathbb{P}(Y_j<y_1\leq Y_i,Y_j<y_2\leq Y_i,\dots,Y_j<y_m\leq Y_i) \leq u(y_1,\dots,y_m;i,j)$
where 
\begin{align}
&l(y_1,\dots,y_m;i,j)=\max\left\{\sum_{p=1,\dots,m}\mathbb{P}(Y<y_p|X=j)-\sum_{p=1,\dots,m}\mathbb{P}(Y<y_p|X=i)-m+1,0\right\},
\end{align}
\begin{align}
&u(y_1,\dots,y_m;i,j)=\min\left\{\min_{p=1,\dots,m}\{\mathbb{P}(Y<y_p|X=j)\},1-\max_{p=1,\dots,m}\{\mathbb{P}(Y<y_p|X=i)\}\right\}
\end{align}
for $(i,j) \in \{(1,0),(0,1)\}$ and each $y_1, \dots, y_m \in \Omega_Y$.
}

\begin{proof}
From Fr\'{e}chet inequalities \citep{Frechet1935,Frechet1960}, given $m\geq 1$,
we have 
\begin{align}
&\mathbb{P}(Y_j<y_1\leq Y_i,Y_j<y_2\leq Y_i,\dots,Y_j<y_m\leq Y_i)\nonumber\\
&=\mathbb{P}(Y_j<y_1,y_1\leq Y_i,Y_j<y_2,y_2\leq Y_i,\dots,Y_j<y_m,y_m\leq Y_i)\nonumber\\
&\geq \max\left\{\sum_{p=1,\dots,m}\mathbb{P}(Y_j<y_p)+\sum_{p=1,\dots,m}\{1-\mathbb{P}(Y_i<y_p)\}-(2m-1),0\right\}\nonumber\\
&= \max\left\{\sum_{p=1,\dots,m}\mathbb{P}(Y_j<y_p)-\sum_{p=1,\dots,m}\mathbb{P}(Y_i<y_p)-(m-1),0\right\}\nonumber\\
&= \max\left\{\sum_{p=1,\dots,m}\mathbb{P}(Y<y_p|X=j)-\sum_{p=1,\dots,m}\mathbb{P}(Y<y_p|X=i)-(m-1),0\right\}
\end{align}
and
\begin{align}
&\mathbb{P}(Y_j<y_1\leq Y_i,Y_j<y_2\leq Y_i,\dots,Y_j<y_m\leq Y_i)\nonumber\\
&\leq\min\left\{\mathbb{P}(Y_j<y_1),\dots,\mathbb{P}(Y_j < y_m),\mathbb{P}(Y_i \leq y_1),\dots,\mathbb{P}(Y_i \leq y_m) \right\}\nonumber\\
&=\min\left\{\min_{p=1,\dots,m}\{\mathbb{P}(Y<y_p|X=j)\},1-\max_{p=1,\dots,m}\{\mathbb{P}(Y<y_p|X=i)\}\right\}
\end{align}
for $(i,j) \in \{(1,0),(0,0)\}$ and each $y_1, \dots,y_p \in \Omega_Y$.
\end{proof}

{\bf Theorem \ref{theo2}.}
{\it
Under SCM ${\cal M}$ and Assumptions \ref{ASEXO2} and \ref{exi1}, given $m\geq 1$, we have $\sigma_L^{(m)} \leq \mu^{(m)} \leq \sigma_U^{(m)}$, where

(A). When $m$ is an even number,
\begin{align}
&\sigma_L^{(m)}=\int_{{\Omega_Y}^m} l(y_1,\dots,y_m;1,0)dy_1\dots dy_m+\int_{{\Omega_Y}^m} l(y_1,\dots,y_m;0,1)dy_1\dots dy_m,
\end{align}
\begin{align}
&\sigma_U^{(m)}=\int_{{\Omega_Y}^m} u(y_1,\dots,y_m;1,0)dy_1\dots dy_m+\int_{{\Omega_Y}^m} u(y_1,\dots,y_m;0,1)dy_1\dots dy_m.
\end{align}

(B). When $m$ is an odd number,
\begin{align}
&\sigma_L^{(m)}=\int_{{\Omega_Y}^m} l(y_1,\dots,y_m;1,0)dy_1\dots dy_m-\int_{{\Omega_Y}^m} u(y_1,\dots,y_m;0,1)dy_1\dots dy_m,
\end{align}
\begin{align}
&\sigma_U^{(m)}=\int_{{\Omega_Y}^m} u(y_1,\dots,y_m;1,0)dy_1\dots dy_m-\int_{{\Omega_Y}^m} l(y_1,\dots,y_m;0,1)dy_1\dots dy_m.
\end{align}
}

\begin{proof}
When $m$ is an even number, we have
\begin{align}
\mu^{(m)}&=\int_{{\Omega_Y}^m} \mathbb{P}(Y_0<y_1\leq Y_1,Y_0<y_2\leq Y_1,\dots,Y_0<y_m\leq Y_1)dy_1\dots dy_m\nonumber\\
&+(-1)^m\int_{{\Omega_Y}^m} \mathbb{P}(Y_1<y_1\leq Y_0,Y_1<y_2\leq Y_0,\dots,Y_1<y_m\leq Y_0)dy_1\dots dy_m\nonumber\\
&=\int_{{\Omega_Y}^m} \mathbb{P}(Y_0<y_1\leq Y_1,Y_0<y_2\leq Y_1,\dots,Y_0<y_m\leq Y_1)dy_1\dots dy_m\nonumber\\
&+\int_{{\Omega_Y}^m} \mathbb{P}(Y_1<y_1\leq Y_0,Y_1<y_2\leq Y_0,\dots,Y_1<y_m\leq Y_0)dy_1\dots dy_m\nonumber\\
&\geq \int_{{\Omega_Y}^m} l(y_1,\dots,y_m;1,0)dy_1\dots dy_m+\int_{{\Omega_Y}^m} l(y_1,\dots,y_m;0,1)dy_1\dots dy_m,
\end{align}
and
\begin{align}
\mu^{(m)}&=\int_{{\Omega_Y}^m} \mathbb{P}(Y_0<y_1\leq Y_1,Y_0<y_2\leq Y_1,\dots,Y_0<y_m\leq Y_1)dy_1\dots dy_m\nonumber\\
&+(-1)^m\int_{{\Omega_Y}^m} \mathbb{P}(Y_1<y_1\leq Y_0,Y_1<y_2\leq Y_0,\dots,Y_1<y_m\leq Y_0)dy_1\dots dy_m\nonumber\\
&=\int_{{\Omega_Y}^m} \mathbb{P}(Y_0<y_1\leq Y_1,Y_0<y_2\leq Y_1,\dots,Y_0<y_m\leq Y_1)dy_1\dots dy_m\nonumber\\
&+\int_{{\Omega_Y}^m} \mathbb{P}(Y_1<y_1\leq Y_0,Y_1<y_2\leq Y_0,\dots,Y_1<y_m\leq Y_0)dy_1\dots dy_m\nonumber\\
&\leq \int_{{\Omega_Y}^m} u(y_1,\dots,y_m;1,0)dy_1\dots dy_m+\int_{{\Omega_Y}^m} u(y_1,\dots,y_m;0,1)dy_1\dots dy_m.
\end{align}

When $m$ is an even number, we have
\begin{align}
\mu^{(m)}&=\int_{{\Omega_Y}^m} \mathbb{P}(Y_0<y_1\leq Y_1,Y_0<y_2\leq Y_1,\dots,Y_0<y_m\leq Y_1)dy_1\dots dy_m\nonumber\\
&+(-1)^m\int_{{\Omega_Y}^m} \mathbb{P}(Y_1<y_1\leq Y_0,Y_1<y_2\leq Y_0,\dots,Y_1<y_m\leq Y_0)dy_1\dots dy_m\nonumber\\
&=\int_{{\Omega_Y}^m} \mathbb{P}(Y_0<y_1\leq Y_1,Y_0<y_2\leq Y_1,\dots,Y_0<y_m\leq Y_1)dy_1\dots dy_m\nonumber\\
&-\int_{{\Omega_Y}^m} \mathbb{P}(Y_1<y_1\leq Y_0,Y_1<y_2\leq Y_0,\dots,Y_1<y_m\leq Y_0)dy_1\dots dy_m\nonumber\\
&\geq \int_{{\Omega_Y}^m} l(y_1,\dots,y_m;1,0)dy_1\dots dy_m-\int_{{\Omega_Y}^m} u(y_1,\dots,y_m;0,1)dy_1\dots dy_m,
\end{align}
and
\begin{align}
\mu^{(m)}&=\int_{{\Omega_Y}^m} \mathbb{P}(Y_0<y_1\leq Y_1,Y_0<y_2\leq Y_1,\dots,Y_0<y_m\leq Y_1)dy_1\dots dy_m\nonumber\\
&+(-1)^m\int_{{\Omega_Y}^m} \mathbb{P}(Y_1<y_1\leq Y_0,Y_1<y_2\leq Y_0,\dots,Y_1<y_m\leq Y_0)dy_1\dots dy_m\nonumber\\
&=\int_{{\Omega_Y}^m} \mathbb{P}(Y_0<y_1\leq Y_1,Y_0<y_2\leq Y_1,\dots,Y_0<y_m\leq Y_1)dy_1\dots dy_m\nonumber\\
&-\int_{{\Omega_Y}^m} \mathbb{P}(Y_1<y_1\leq Y_0,Y_1<y_2\leq Y_0,\dots,Y_1<y_m\leq Y_0)dy_1\dots dy_m\nonumber\\
&\leq \int_{{\Omega_Y}^m} u(y_1,\dots,y_m;1,0)dy_1\dots dy_m-\int_{{\Omega_Y}^m} l(y_1,\dots,y_m;0,1)dy_1\dots dy_m.
\end{align}
\end{proof}

{\bf Lemma \ref{lem4}.}
{\it
Under SCM ${\cal M}$, for any $i,j,k,h \in \{1,\dots,R\}$, we have
\begin{align}
&(Y_i-Y_j)(Y_k-Y_h)\nonumber\\
&=(Y_i-Y_j)(Y_k-Y_h)\mathbb{I}(Y_i>Y_j,Y_k>Y_h)-(Y_j-Y_i)(Y_k-Y_h)\mathbb{I}(Y_j>Y_i,Y_k>Y_h)\nonumber\\
&-(Y_i-Y_j)(Y_h-Y_k)\mathbb{I}(Y_i>Y_j,Y_h>Y_k)+(Y_j-Y_i)(Y_h-Y_k)\mathbb{I}(Y_j>Y_i,Y_h>Y_k)\nonumber\\
&=\int_{\Omega_Y}\int_{\Omega_Y} \mathbb{I}(Y_j<y_1\leq Y_i,Y_h<y_2\leq Y_k)dy_1 dy_2-\int_{\Omega_Y}\int_{\Omega_Y} \mathbb{I}(Y_i<y_1\leq Y_j,Y_h<y_2\leq Y_k)dy_1 dy_2\nonumber\\
&-\int_{\Omega_Y}\int_{\Omega_Y} \mathbb{I}(Y_j<y_1\leq Y_i,Y_k<y_2\leq Y_h)dy_1 dy_2+\int_{\Omega_Y}\int_{\Omega_Y} \mathbb{I}(Y_i<y_1\leq Y_j,Y_k<y_2\leq Y_h)dy_1 dy_2.
\end{align}
}

\begin{proof}
For any $i,j,k,h \in \{1,\dots,R\}$, we have
\begin{align}
&\int_{\Omega_Y}\int_{\Omega_Y} \mathbb{I}(Y_j<y_1\leq Y_i,Y_h<y_2\leq Y_k)dy_1 dy_2-\int_{\Omega_Y}\int_{\Omega_Y} \mathbb{I}(Y_i<y_1\leq Y_j,Y_h<y_2\leq Y_k)dy_1 dy_2\nonumber\\
&-\int_{\Omega_Y}\int_{\Omega_Y} \mathbb{I}(Y_j<y_1\leq Y_i,Y_k<y_2\leq Y_h)dy_1 dy_2+\int_{\Omega_Y}\int_{\Omega_Y} \mathbb{I}(Y_i<y_1\leq Y_j,Y_k<y_2\leq Y_h)dy_1 dy_2\nonumber\\
&=\int_{\Omega_Y}\mathbb{I}(Y_j<y_1\leq Y_i)dy_1\int_{\Omega_Y} \mathbb{I}(Y_h<y_2\leq Y_k)dy_2-\int_{\Omega_Y}\mathbb{I}(Y_i<y_1\leq Y_j)dy_1\int_{\Omega_Y} \mathbb{I}(Y_h<y_2\leq Y_k)dy_2\nonumber\\
&-\int_{\Omega_Y}\mathbb{I}(Y_j<y_1\leq Y_i)dy_1\int_{\Omega_Y} \mathbb{I}(Y_k<y_2\leq Y_h)dy_2+\int_{\Omega_Y}\mathbb{I}(Y_i<y_1\leq Y_j)dy_1 \int_{\Omega_Y}\mathbb{I}(Y_k<y_2\leq Y_h)dy_2\nonumber\\
&=(Y_i-Y_j)(Y_k-Y_h)\mathbb{I}(Y_i>Y_j,Y_k>Y_h)-(Y_j-Y_i)(Y_k-Y_h)\mathbb{I}(Y_j>Y_i,Y_k>Y_h)\nonumber\\
&-(Y_i-Y_j)(Y_h-Y_k)\mathbb{I}(Y_i>Y_j,Y_h>Y_k)+(Y_j-Y_i)(Y_h-Y_k)\mathbb{I}(Y_j>Y_i,Y_h>Y_k)\nonumber\\
&=(Y_i-Y_j)(Y_k-Y_h)\mathbb{I}(Y_i>Y_j,Y_k>Y_h)+(Y_i-Y_j)(Y_k-Y_h)\mathbb{I}(Y_j>Y_i,Y_k>Y_h)\nonumber\\
&+(Y_i-Y_j)(Y_k-Y_h)\mathbb{I}(Y_i>Y_j,Y_h>Y_k)+(Y_i-Y_j)(Y_k-Y_h)\mathbb{I}(Y_j>Y_i,Y_h>Y_k)\nonumber\\
&=(Y_i-Y_j)(Y_k-Y_h).
\end{align}
\end{proof}

{\bf Theorem \ref{theo3}.}
{\it
Under SCM ${\cal M}$ and Assumptions \ref{ASEXO2}, \ref{MONO2}, and \ref{exi2}, for any $i,j,k,h \in \{1,\dots,R\}$, the product moment of $(Y_i-Y_j)$ and $(Y_k-Y_h)$ is identifiable by $\sigma(i,j;k,h)$, where
\begin{align}
&\sigma(i,j;k,h)\nonumber\\
&=\int_{{\Omega_Y}^2}\max\Big\{\min\{\mathbb{P}(Y<y_1|X=j),\mathbb{P}(Y<y_2|X=h)\}-\max\{\mathbb{P}(Y<y_1|X=i),\mathbb{P}(Y<y_2|X=k)\},0\Big\}dy_1dy_2\nonumber\\
&-\int_{{\Omega_Y}^2}\max\Big\{\min\{\mathbb{P}(Y<y_1|X=i),\mathbb{P}(Y<y_2|X=h)\}-\max\{\mathbb{P}(Y<y_1|X=j),\mathbb{P}(Y<y_2|X=k)\},0\Big\}dy_1dy_2\nonumber\\
&-\int_{{\Omega_Y}^2}\max\Big\{\min\{\mathbb{P}(Y<y_1|X=j),\mathbb{P}(Y<y_2|X=k)\}-\max\{\mathbb{P}(Y<y_1|X=i),\mathbb{P}(Y<y_2|X=h)\},0\Big\}dy_1dy_2\nonumber\\
&+\int_{{\Omega_Y}^2}\max\Big\{\min\{\mathbb{P}(Y<y_1|X=i),\mathbb{P}(Y<y_2|X=k)\}-\max\{\mathbb{P}(Y<y_1|X=j),\mathbb{P}(Y<y_2|X=h)\},0\Big\}dy_1dy_2.
\end{align}
}

\begin{proof}
$\mathbb{P}(Y_j<y_1\leq Y_i,Y_h<y_2\leq Y_k)$ are identifiable by $\max\{\min\{\mathbb{P}(Y<y_1|X=j),\mathbb{P}(Y<y_2|X=h)\}-\max\{\mathbb{P}(Y<y_1|X=i),\mathbb{P}(Y<y_2|X=k)\},0\}$ ($i,j,k,h \in \{1,\dots,R\}$) respectively by Theorem 5.2 in \citep{Kawakami2024} under Assumptions \ref{ASEXO2} and \ref{MONO2}.
Then, we have Eq.~\eqref{eq21}.
\end{proof}

{\bf Lemma \ref{lem6}.}
{\it
Under SCM ${\cal M}$ and Assumptions \ref{ASEXO2} and \ref{exi2}, for any $i,j,k,h \in \{1,\dots,R\}$,
we have $l(y_1,y_3;i,j,k,h)\leq \mathbb{P}(Y_j<y_1\leq Y_i,Y_h<y_2\leq Y_k)\leq u(y_1,y_3;i,j,k,h)$, where
\begin{align}
&l(y_1,y_2;i,j,k,h)\nonumber\\
&=\max\Big\{\mathbb{P}(Y<y_1|X=j)-\mathbb{P}(Y <y_1|X=i)+\mathbb{P}(Y<y_2|X=h)-\mathbb{P}(Y<y_2|X=k)-1,0\Big\},
\end{align}
\begin{align}
&u(y_1,y_2;i,j,k,h)\nonumber\\
&=\min\Big\{\min\{\mathbb{P}(Y<y_1|X=j),\mathbb{P}(Y<y_2|X=h)\},1-\max\{\mathbb{P}(Y<y_1|X=i),\mathbb{P}(Y<y_2|X=k)\}\Big\}
\end{align}
for each $i,j,k,h \in \{1,\dots,R\}$ and $y_1, y_2 \in \Omega_Y$.
}

\begin{proof}
From Fr\'{e}chet inequalities \citep{Frechet1935,Frechet1960},
for any $i,j,k,h \in \{1,\dots,R\}$,
we have 
\begin{align}
&\mathbb{P}(Y_j<y_1\leq Y_i,Y_h<y_2\leq Y_k)=\mathbb{P}(Y_j<y_1,y_1\leq Y_i,Y_h<y_2,y_2\leq Y_k)\nonumber\\
&\geq \max\Big\{\mathbb{P}(Y_j<y_1)+\mathbb{P}(y_1\leq Y_i)+\mathbb{P}(Y_h<y_2)+\mathbb{P}(y_2\leq Y_k)-3,0\Big\}\nonumber\\
&= \max\Big\{\mathbb{P}(Y_j<y_1)-\mathbb{P}(Y_i <y_1)+\mathbb{P}(Y_h<y_2)-\mathbb{P}(Y_k<y_2)-1,0\Big\}\nonumber\\
&= \max\Big\{\mathbb{P}(Y<y_1|X=j)-\mathbb{P}(Y <y_1|X=i)+\mathbb{P}(Y<y_2|X=h)-\mathbb{P}(Y<y_2|X=k)-1,0\Big\}
\end{align}
and 
\begin{align}
&\mathbb{P}(Y_j<y_1\leq Y_i,Y_h<y_2\leq Y_k)=\mathbb{P}(Y_j<y_1,y_1\leq Y_i,Y_h<y_2,y_2\leq Y_k)\nonumber\\
&\leq \min\Big\{\min\{\mathbb{P}(Y_j<y_1),\mathbb{P}(Y_h<y_2)\},1-\max\{\mathbb{P}(Y_i <y_1),\mathbb{P}(Y_k<y_2)\}\Big\}\nonumber\\
&=\min\Big\{\min\{\mathbb{P}(Y<y_1|X=j),\mathbb{P}(Y<y_2|X=h)\},1-\max\{\mathbb{P}(Y<y_1|X=i),\mathbb{P}(Y<y_2|X=k)\}\Big\}.
\end{align}
\end{proof}

{\bf Theorem \ref{theo4}.}
{\it
Under SCM ${\cal M}$ and Assumptions \ref{ASEXO2} and \ref{exi2}, for any $i,j,k,h \in \{1,\dots,R\}$, we have $\sigma_L(i,j;k,h) \leq \rho_{i,j;k,h}\leq \sigma_U(i,j;k,h)$, where
\begin{align}
\sigma_L(i,j;k,h)
&=\int_{{\Omega_Y}^2}l(y_1,y_2;i,j,k,h)dy_1dy_2-\int_{{\Omega_Y}^2}u(y_1,y_2;j,i,k,h)dy_1dy_2\nonumber\\
&\hspace{2cm}-\int_{{\Omega_Y}^2}u(y_1,y_2;i,j,h,k)dy_1dy_2+\int_{{\Omega_Y}^2}l(y_1,y_2;j,i,h,k)dy_1dy_2,
\end{align}
\begin{align}
\sigma_U(i,j;k,h)
&=\int_{{\Omega_Y}^2}u(y_1,y_2;i,j,k,h)dy_1dy_2-\int_{{\Omega_Y}^2}l(y_1,y_2;j,i,k,h)dy_1dy_2\nonumber\\
&\hspace{2cm}-\int_{{\Omega_Y}^2}l(y_1,y_2;i,j,h,k)dy_1dy_2+\int_{{\Omega_Y}^2}u(y_1,y_2;j,i,h,k)dy_1dy_2.
\end{align}}

\begin{proof}
For any $i,j,k,h \in \{1,\dots,R\}$,
we have
\begin{align}
\rho_{i,j;k,h}
&=\int_{\Omega_Y}\int_{\Omega_Y} \mathbb{P}(Y_j<y_1\leq Y_i,Y_h<y_2\leq Y_k)dy_1 dy_2-\int_{\Omega_Y}\int_{\Omega_Y} \mathbb{P}(Y_i<y_1\leq Y_j,Y_h<y_2\leq Y_k)dy_1 dy_2\nonumber\\
&-\int_{\Omega_Y}\int_{\Omega_Y} \mathbb{P}(Y_j<y_1\leq Y_i,Y_k<y_2\leq Y_h)dy_1 dy_2+\int_{\Omega_Y}\int_{\Omega_Y} \mathbb{P}(Y_i<y_1\leq Y_j,Y_k<y_2\leq Y_h)dy_1 dy_2\nonumber\\
&\geq \int_{{\Omega_Y}^2}l(y_1,y_2;i,j,k,h)dy_1dy_2-\int_{{\Omega_Y}^2}u(y_1,y_2;j,i,k,h)dy_1dy_2\nonumber\\
&\hspace{4cm}-\int_{{\Omega_Y}^2}u(y_1,y_2;i,j,h,k)dy_1dy_2+\int_{{\Omega_Y}^2}l(y_1,y_2;j,i,h,k)dy_1dy_2,
\end{align}
and
\begin{align}
\rho_{i,j;k,h}
&=\int_{\Omega_Y}\int_{\Omega_Y} \mathbb{P}(Y_j<y_1\leq Y_i,Y_h<y_2\leq Y_k)dy_1 dy_2-\int_{\Omega_Y}\int_{\Omega_Y} \mathbb{P}(Y_i<y_1\leq Y_j,Y_h<y_2\leq Y_k)dy_1 dy_2\nonumber\\
&-\int_{\Omega_Y}\int_{\Omega_Y} \mathbb{P}(Y_j<y_1\leq Y_i,Y_k<y_2\leq Y_h)dy_1 dy_2+\int_{\Omega_Y}\int_{\Omega_Y} \mathbb{P}(Y_i<y_1\leq Y_j,Y_k<y_2\leq Y_h)dy_1 dy_2\nonumber\\
&\leq \int_{{\Omega_Y}^2}u(y_1,y_2;i,j,k,h)dy_1dy_2-\int_{{\Omega_Y}^2}l(y_1,y_2;j,i,k,h)dy_1dy_2\nonumber\\
&\hspace{4cm}-\int_{{\Omega_Y}^2}l(y_1,y_2;i,j,h,k)dy_1dy_2+\int_{{\Omega_Y}^2}u(y_1,y_2;j,i,h,k)dy_1dy_2.
\end{align}
\end{proof}

\section{Identification and Bounds of the Central Moments of Causal Effects}
\label{appB}

In this section, we discuss the central moments of causal effects.

We make the following assumption:
\begin{assumption}[Existence of integrals]
\label{exi3}
Under SCM ${\cal M}$, given $m\geq 1$, 
$\overline{\mu}^{(m)}<\infty$ and
$\int_{{\Omega_Y}^m} \mathbb{P}(Y_i-\mathbb{E}[Y_i]<y_1\leq Y_j-\mathbb{E}[Y_j],Y_i-\mathbb{E}[Y_i]<y_2\leq Y_j-\mathbb{E}[Y_j],\dots,Y_i-\mathbb{E}[Y_i]<y_m\leq Y_j-\mathbb{E}[Y_j])dy_1\dots dy_m<\infty$ hold for $(i,j)=\{(0,1),(1,0)\}$.
\end{assumption}

{\bf Identification of the central moment of causal effects.}

Then, we have the following identification theorem.
\begin{theorem}[Identification of the central moment of causal effects]
Under SCM ${\cal M}$ and Assumptions \ref{ASEXO2}, \ref{MONO2}, and \ref{exi3}, given $m\geq 1$, $\overline{\mu}^{(m)}$ is identifiable by $\bar{\sigma}^{(m)}$, where
\begin{align}
\label{eq91}
\bar{\sigma}^{(m)}&=\int_{{\Omega_Y}^m} \max\Big\{\min_{p=1,\dots,m}\{\mathbb{P}(Y-\mathbb{E}[Y|X=0]<y_p|X=0)\}\nonumber\\
&\hspace{5cm}-\max_{p=1,\dots,m}\{\mathbb{P}(Y-\mathbb{E}[Y|X=1]<y_p|X=1)\},0\Big\}dy_1\dots dy_m\nonumber\\
&+(-1)^m\int_{{\Omega_Y}^m}\max\Big\{\min_{p=1,\dots,m}\{\mathbb{P}(Y-\mathbb{E}[Y|X=1]<y_p|X=1)\}\nonumber\\
&\hspace{5cm}-\max_{p=1,\dots,m}\{\mathbb{P}(Y-\mathbb{E}[Y|X=0]<y_p|X=0)\},0\Big\}dy_1\dots dy_m.
\end{align}
\end{theorem}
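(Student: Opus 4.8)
The plan is to reduce the statement to Theorem~\ref{theo1} by applying that result to the \emph{centered} potential outcomes rather than proving everything from scratch. First I would define $\tilde{Y}_x := Y_x - \mathbb{E}[Y_x]$ for $x \in \{0,1\}$ and observe that $(Y_1-Y_0) - (\mathbb{E}[Y_1]-\mathbb{E}[Y_0]) = (Y_1-\mathbb{E}[Y_1]) - (Y_0-\mathbb{E}[Y_0]) = \tilde{Y}_1 - \tilde{Y}_0$, so that $\overline{\mu}^{(m)} = \mathbb{E}[(\tilde{Y}_1 - \tilde{Y}_0)^m]$. This recasts the central moment as an ordinary $m$-th moment of the causal effect built from the shifted outcome variables, bringing it into the exact form handled by Theorem~\ref{theo1}.

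The crucial step is to verify that $\tilde{Y}_x$ inherits Assumptions~\ref{ASEXO2} and \ref{MONO2}. Writing $\tilde{Y}_x = Y_x - c_x$ with $c_x = \mathbb{E}[Y_x]$ a constant (identifiable as $\mathbb{E}[Y\mid X=x]$ under Assumption~\ref{ASEXO2}), the map $u \mapsto f_Y(x,u) - c_x$ has the \emph{same} monotonicity direction in $U_Y$ as $f_Y(x,u)$ for every $x$; since subtracting a constant cannot flip the uniform increasing/decreasing direction, Assumption~\ref{MONO2} continues to hold for the structural function of $\tilde{Y}$. Likewise, $\tilde{Y}_x$ is a deterministic shift of $Y_x$, so $Y_x \indep X$ immediately gives $\tilde{Y}_x \indep X$, yielding Assumption~\ref{ASEXO2}. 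Assumption~\ref{exi3} supplies exactly the finiteness needed to justify interchanging expectation and integration in what follows.

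Next I would apply Lemma~\ref{lem1} with $(Y_1,Y_0)$ replaced by $(\tilde{Y}_1,\tilde{Y}_0)$ to obtain the pointwise decomposition of $(\tilde{Y}_1-\tilde{Y}_0)^m$ into a positive-ICE integral and a negative-ICE integral, then take expectations (valid by Assumption~\ref{exi3}) to reach the analogue of Eq.~\eqref{eq6}: $\overline{\mu}^{(m)}$ expressed through the joint probabilities $\mathbb{P}(\tilde{Y}_0 < y_1 \leq \tilde{Y}_1, \dots, \tilde{Y}_0 < y_m \leq \tilde{Y}_1)$ and its counterpart with $0$ and $1$ interchanged. Finally I would invoke Theorem~5.2 of \citep{Kawakami2024}, exactly as in the proof of Theorem~\ref{theo1}, to identify each such joint probability by $\max\{\min_{p}\mathbb{P}(\tilde{Y} < y_p \mid X=0) - \max_{p}\mathbb{P}(\tilde{Y} < y_p \mid X=1), 0\}$ and its swap. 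Using $\mathbb{P}(\tilde{Y}_x < y \mid X=x) = \mathbb{P}(Y - \mathbb{E}[Y\mid X=x] < y \mid X=x)$ then turns these into the conditional CDFs appearing in Eq.~\eqref{eq91}.

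The main obstacle is conceptual rather than computational: confirming that centering by $\mathbb{E}[Y_x]$ genuinely preserves the monotonicity structure. One must check that subtracting a shift $c_x$ that itself depends on $x$ does not disturb the \emph{uniform} monotonicity direction demanded by Assumption~\ref{MONO2}, and, relatedly, that $c_x = \mathbb{E}[Y_x]$ is finite and identifiable from observables so that $\tilde{Y}_x$ is well defined. Once this reduction is secured, the remainder is a verbatim replay of the argument for Theorem~\ref{theo1} applied to the shifted outcomes.
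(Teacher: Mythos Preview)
Your proposal is correct and follows essentially the same route as the paper: both arguments pass to the centered outcomes $\tilde{Y}_x = Y_x - \mathbb{E}[Y_x]$, derive the integral decomposition of $(\tilde{Y}_1-\tilde{Y}_0)^m$ (the paper re-derives it explicitly, you invoke Lemma~\ref{lem1}), take expectations under Assumption~\ref{exi3}, and then apply Theorem~5.2 of \citep{Kawakami2024} to the shifted structural function $f(x,U_Y)=Y_x-\mathbb{E}[Y_x]$. Your explicit check that the $x$-dependent constant shift preserves the uniform monotonicity direction in Assumption~\ref{MONO2} is a point the paper leaves implicit.
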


\begin{proof}
Given $m\geq 1$, we have
\begin{align}
&\int_{{\Omega_Y}^m} \mathbb{I}(Y_0-\mathbb{E}[Y_0]<y_1\leq Y_1-\mathbb{E}[Y_1],Y_0-\mathbb{E}[Y_0]<y_2\leq Y_1-\mathbb{E}[Y_1],\dots,\nonumber\\
&\hspace{7cm}Y_0-\mathbb{E}[Y_0]<y_m\leq Y_1-\mathbb{E}[Y_1])dy_1\dots dy_m\nonumber\\
&=\int_{\Omega_Y}\mathbb{I}(Y_0-\mathbb{E}[Y_0]<y_1\leq Y_1-\mathbb{E}[Y_1])dy_1\dots \int_{\Omega_Y}\mathbb{I}(Y_0-\mathbb{E}[Y_0]<y_m\leq Y_1-\mathbb{E}[Y_1])dy_m\nonumber\\
&=\{(Y_1-Y_0)-(\mathbb{E}[Y_1]-\mathbb{E}[Y_0])\}^m\mathbb{I}(Y_1-\mathbb{E}[Y_1]>Y_0-\mathbb{E}[Y_0]).
\end{align}
Similarly, given $m\geq 1$, we have
\begin{align}
&\int_{{\Omega_Y}^m} \mathbb{I}(Y_1-\mathbb{E}[Y_1]<y_1\leq Y_0-\mathbb{E}[Y_1],Y_1-\mathbb{E}[Y_1]<y_2\leq Y_0-\mathbb{E}[Y_0],\dots,\nonumber\\
&\hspace{7cm}Y_1-\mathbb{E}[Y_1]<y_m\leq Y_0-\mathbb{E}[Y_0])dy_1\dots dy_m\nonumber\\
&=\int_{\Omega_Y}\mathbb{I}(Y_1-\mathbb{E}[Y_1]<y_1\leq Y_0-\mathbb{E}[Y_0])dy_1\dots \int_{\Omega_Y}\mathbb{I}(Y_1-\mathbb{E}[Y_1]<y_m\leq Y_0-\mathbb{E}[Y_0])dy_m\nonumber\\
&=\{(Y_0-Y_1)-(\mathbb{E}[Y_0]-\mathbb{E}[Y_1])\}^m\mathbb{I}(Y_0-\mathbb{E}[Y_0]>Y_1-\mathbb{E}[Y_1])^m\nonumber\\
&=(-1)^m\{(Y_1-Y_0)-(\mathbb{E}[Y_1]-\mathbb{E}[Y_0])\}^m\mathbb{I}(Y_0-\mathbb{E}[Y_0]>Y_1-\mathbb{E}[Y_1]).
\end{align}
Then, given $m\geq 1$, we have
\begin{align}
&\{(Y_1-Y_0)-(\mathbb{E}[Y_1]-\mathbb{E}[Y_0])\}^m\nonumber\\
&=\int_{{\Omega_Y}^m} \mathbb{I}(Y_0-\mathbb{E}[Y_0]<y_1\leq Y_1-\mathbb{E}[Y_1],Y_0-\mathbb{E}[Y_0]<y_2\leq Y_1-\mathbb{E}[Y_1],\dots,\nonumber\\
&\hspace{5cm}Y_0-\mathbb{E}[Y_0]<y_m\leq Y_1-\mathbb{E}[Y_1])dy_1\dots dy_m\nonumber\\
&+(-1)^m\int_{{\Omega_Y}^m} \mathbb{I}(Y_1-\mathbb{E}[Y_1]<y_1\leq Y_0-\mathbb{E}[Y_0],Y_1-\mathbb{E}[Y_1]<y_2\leq Y_0-\mathbb{E}[Y_0],\dots,\nonumber\\
&\hspace{5cm}Y_1-\mathbb{E}[Y_1]<y_m\leq Y_0-\mathbb{E}[Y_0])dy_1\dots dy_m.
\end{align}
Taking expectations, given $m\geq 1$, we obtain
\begin{align}
&\overline{\mu}^{(m)}=\int_{{\Omega_Y}^m} \mathbb{P}(Y_0-\mathbb{E}[Y_0]<y_1\leq Y_1-\mathbb{E}[Y_1],Y_0-\mathbb{E}[Y_0]<y_2\leq Y_1-\mathbb{E}[Y_1],\dots,\nonumber\\
&\hspace{7cm}Y_0-\mathbb{E}[Y_0]<y_m\leq Y_1-\mathbb{E}[Y_1])dy_1\dots dy_m\nonumber\\
&+(-1)^m\int_{{\Omega_Y}^m} \mathbb{P}(Y_1-\mathbb{E}[Y_1]<y_1\leq Y_0-\mathbb{E}[Y_0],Y_1-\mathbb{E}[Y_1]<y_2\leq Y_0-\mathbb{E}[Y_0],\dots,\nonumber\\
&\hspace{7cm}Y_1-\mathbb{E}[Y_1]<y_m\leq Y_0-\mathbb{E}[Y_0])dy_1\dots dy_m.
\end{align}
Since $\mathbb{P}(Y_0-\mathbb{E}[Y_0]<y_1\leq Y_1-\mathbb{E}[Y_1],Y_0-\mathbb{E}[Y_0]<y_2\leq Y_1-\mathbb{E}[Y_1],\dots,Y_0-\mathbb{E}[Y_0]<y_m\leq Y_1-\mathbb{E}[Y_1])$ and $\mathbb{P}(Y_1-\mathbb{E}[Y_1]<y_1\leq Y_0-\mathbb{E}[Y_0],Y_1-\mathbb{E}[Y_1]<y_2\leq Y_0-\mathbb{E}[Y_0],\dots,Y_1-\mathbb{E}[Y_1]<y_m\leq Y_0-\mathbb{E}[Y_0])$ are identifiable by re-writing $f_Y$ as $f(x,U_Y)=Y_x-\mathbb{E}[Y_x]$ in Theorem 5.2 in \citep{Kawakami2024}, we have, given $m\geq 1$,
\begin{align}
&\overline{\mu}^{(m)}=\int_{{\Omega_Y}^m} \max\Big\{\min_{p=1,\dots,m}\{\mathbb{P}(Y-\mathbb{E}[Y|X=0]<y_p|X=0)\}\nonumber\\
&\hspace{5cm}-\max_{p=1,\dots,m}\{\mathbb{P}(Y-\mathbb{E}[Y|X=1]<y_p|X=1)\},0\Big\}dy_1\dots dy_m\nonumber\\
&+(-1)^m\int_{{\Omega_Y}^m}\max\Big\{\min_{p=1,\dots,m}\{\mathbb{P}(Y-\mathbb{E}[Y|X=1]<y_p|X=1)\}\nonumber\\
&\hspace{5cm}-\max_{p=1,\dots,m}\{\mathbb{P}(Y-\mathbb{E}[Y|X=0]<y_p|X=0)\},0\Big\}dy_1\dots dy_m.
\end{align}
\end{proof}
Moments of causal effects are expressed as the combination of conditional CDFs and expectations.

{\bf Bounding the central moment of causal effects.}

Assumption \ref{MONO2} may sometimes be considered implausible by researchers. Therefore, we derive bounds for the central moments of causal effects without relying on Assumption \ref{MONO2}.

\begin{lemma}

Under SCM ${\cal M}$ and Assumptions \ref{ASEXO2} and \ref{exi3}, given $m\geq 1$,
we have $\bar{l}(y_1,\dots,y_m;i,j)\leq \mathbb{P}(Y_j-\mathbb{E}[Y_j]<y_1\leq Y_i-\mathbb{E}[Y_i],Y_j-\mathbb{E}[Y_j]<y_2\leq Y_i-\mathbb{E}[Y_i],\dots,Y_j-\mathbb{E}[Y_j]<y_m\leq Y_i-\mathbb{E}[Y_i]) \leq \bar{u}(y_1,\dots,y_m;i,j)$,
where 
\begin{align}
&\bar{l}(y_1,\dots,y_m;i,j)=\nonumber\\
&\max\left\{\sum_{p=1,\dots,m}\mathbb{P}(Y-\mathbb{E}[Y|X=j]<y_p|X=j)-\sum_{p=1,\dots,m}\mathbb{P}(Y-\mathbb{E}[Y|X=i]<y_p|X=i)-(m-1),0\right\}    
\end{align}
and 
\begin{align}
&\bar{u}(y_1,\dots,y_m;i,j)=\nonumber\\
&\min\left\{\min_{p=1,\dots,m}\{\mathbb{P}(Y-\mathbb{E}[Y|X=j]<y_p|X=j)\},1-\max_{p=1,\dots,m}\{\mathbb{P}(Y-\mathbb{E}[Y|X=i]<y_p|X=i)\}\right\}
\end{align}
for $(i,j) \in \{(1,0),(0,1)\}$ and each $y_1, \dots, y_m \in \Omega_Y$.
\end{lemma}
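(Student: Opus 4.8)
The plan is to mirror the proof of Lemma~\ref{lem3} essentially verbatim, replacing each potential outcome $Y_x$ by its centered version $Y_x - \mathbb{E}[Y_x]$. The key observation that makes this possible is that centering by a fixed constant neither alters the event/independence structure nor the form of the Fréchet bounds, so the same two-step argument (decompose into simple events, then apply Fréchet) goes through.

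First I would rewrite the joint event as a conjunction of $2m$ simple events: for each threshold $y_p$, the compound inequality $Y_j - \mathbb{E}[Y_j] < y_p \leq Y_i - \mathbb{E}[Y_i]$ splits into $\{Y_j - \mathbb{E}[Y_j] < y_p\}$ and $\{y_p \leq Y_i - \mathbb{E}[Y_i]\}$, giving $2m$ events in total. Then I would apply the Fréchet inequalities stated at the start of this appendix to these $2m$ events. The lower Fréchet bound yields $\max\{\sum_p \mathbb{P}(Y_j - \mathbb{E}[Y_j] < y_p) + \sum_p \mathbb{P}(y_p \leq Y_i - \mathbb{E}[Y_i]) - (2m-1), 0\}$; substituting $\mathbb{P}(y_p \leq Y_i - \mathbb{E}[Y_i]) = 1 - \mathbb{P}(Y_i - \mathbb{E}[Y_i] < y_p)$ lets the $m$ resulting ones cancel $m$ of the $2m-1$, collapsing the expression to $\max\{\sum_p \mathbb{P}(Y_j - \mathbb{E}[Y_j] < y_p) - \sum_p \mathbb{P}(Y_i - \mathbb{E}[Y_i] < y_p) - (m-1), 0\}$. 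The upper Fréchet bound gives the minimum over all $2m$ marginals; grouping the first $m$ into $\min_{p}\{\mathbb{P}(Y_j - \mathbb{E}[Y_j] < y_p)\}$ and the last $m$ into $1 - \max_{p}\{\mathbb{P}(Y_i - \mathbb{E}[Y_i] < y_p)\}$ produces the claimed form of $\bar{u}$.

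The one genuinely new point — which needs care rather than ingenuity — is converting these probabilities over potential outcomes into conditional observational quantities. Under exogeneity (Assumption~\ref{ASEXO2}), $Y_x \indep X$, which together with consistency gives both $\mathbb{E}[Y_x] = \mathbb{E}[Y \mid X = x]$ and $\mathbb{P}(Y_x - \mathbb{E}[Y_x] < y_p) = \mathbb{P}(Y - \mathbb{E}[Y \mid X = x] < y_p \mid X = x)$. Substituting these identities turns the two Fréchet bounds into exactly $\bar{l}$ and $\bar{u}$. Assumption~\ref{exi3} enters only to guarantee that the relevant probabilities and their integrals are finite and well defined; it plays no role in the inequalities themselves.

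I do not anticipate any real obstacle, since this is a routine adaptation of Lemma~\ref{lem3}. The subtlety worth double-checking is that the centering constants $\mathbb{E}[Y_i]$ and $\mathbb{E}[Y_j]$ are deterministic, so they commute through the indicator/event manipulations and through the independence argument without disturbing the Fréchet step; verifying this carefully is the only place where the centered case differs in spirit from the uncentered one.
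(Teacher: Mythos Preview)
Your proposal is correct and matches the paper's own proof essentially step for step: decompose the joint event into $2m$ simple events, apply the Fr\'{e}chet inequalities, simplify the lower bound via $\mathbb{P}(y_p \leq Y_i - \mathbb{E}[Y_i]) = 1 - \mathbb{P}(Y_i - \mathbb{E}[Y_i] < y_p)$, and then invoke exogeneity to rewrite $\mathbb{P}(Y_x - \mathbb{E}[Y_x] < y_p)$ as $\mathbb{P}(Y - \mathbb{E}[Y \mid X = x] < y_p \mid X = x)$. The only thing you add beyond the paper is the explicit remark that the centering constants are deterministic and hence harmless, which is a welcome clarification.
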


\begin{proof}
From Fr\'{e}chet inequalities \citep{Frechet1935,Frechet1960}, given $m\geq 1$,
we have 
\begin{align}
&\mathbb{P}(Y_j-\mathbb{E}[Y_j]<y_1\leq Y_i-\mathbb{E}[Y_i],Y_j-\mathbb{E}[Y_j]<y_2\leq Y_i-\mathbb{E}[Y_i],\dots,Y_j-\mathbb{E}[Y_j]<y_m\leq Y_i-\mathbb{E}[Y_i])\nonumber\\
&=\mathbb{P}(Y_j-\mathbb{E}[Y_j]<y_1,y_1\leq Y_i-\mathbb{E}[Y_i],Y_j-\mathbb{E}[Y_j]<y_2,y_2\leq Y_i-\mathbb{E}[Y_i],\dots,Y_j-\mathbb{E}[Y_j]<y_m,y_m\leq Y_i-\mathbb{E}[Y_i])\nonumber\\
&\geq \max\left\{\sum_{p=1,\dots,m}\mathbb{P}(Y_j-\mathbb{E}[Y_j]<y_p)+\sum_{p=1,\dots,m}\{1-\mathbb{P}(Y_i-\mathbb{E}[Y_i]<y_p)\}-(2m-1),0\right\}\nonumber\\
&= \max\left\{\sum_{p=1,\dots,m}\mathbb{P}(Y_j-\mathbb{E}[Y_j]<y_p)-\sum_{p=1,\dots,m}\mathbb{P}(Y_i-\mathbb{E}[Y_i]<y_p)-(m-1),0\right\}\nonumber\\
&= \max\Bigg\{\sum_{p=1,\dots,m}\mathbb{P}(Y-\mathbb{E}[Y|X=j]<y_p|X=j)-\sum_{p=1,\dots,m}\mathbb{P}(Y-\mathbb{E}[Y|X=i]<y_p|X=i)-(m-1),0\Bigg\}
\end{align}
and
\begin{align}
&\mathbb{P}(Y_j-\mathbb{E}[Y_j]<y_1\leq Y_i-\mathbb{E}[Y_i],Y_j-\mathbb{E}[Y_j]<y_2\leq Y_i-\mathbb{E}[Y_i],\dots,Y_j-\mathbb{E}[Y_j]<y_m\leq Y_i-\mathbb{E}[Y_i])\nonumber\\
&\leq\min\left\{\mathbb{P}(Y_j-\mathbb{E}[Y_j]<y_1),\dots,\mathbb{P}(Y_j-\mathbb{E}[Y_j] < y_m),\mathbb{P}(Y_i-\mathbb{E}[Y_i] \leq y_1),\dots,\mathbb{P}(Y_i -\mathbb{E}[Y_i]\leq y_m) \right\}\nonumber\\
&=\min\left\{\min_{p=1,\dots,m}\{\mathbb{P}(Y-\mathbb{E}[Y|X=j]<y_p|X=j)\},1-\max_{p=1,\dots,m}\{\mathbb{P}(Y-\mathbb{E}[Y|X=i]<y_p|X=i)\}\right\}
\end{align}
for $(i,j) \in \{(1,0),(0,0)\}$ and each $y_1, \dots,y_p \in \Omega_Y$.    
\end{proof}

\begin{theorem}[Bounds of the central moment of causal effects]
Under SCM ${\cal M}$ and Assumptions \ref{ASEXO2} and \ref{exi3}, given $m\geq 1$, we have $\bar{\sigma}_L^{(m)} \leq \overline{\mu}^{(m)} \leq \bar{\sigma}_U^{(m)}$, where

(A). When $m$ is an even number, 
\begin{align}
\label{eq132}
&\bar{\sigma}_L^{(m)}=\int_{{\Omega_Y}^m} \bar{l}(y_1,\dots,y_m;1,0)dy_1\dots dy_m+\int_{{\Omega_Y}^m} \bar{l}(y_1,\dots,y_m;0,1)dy_1\dots dy_m,
\end{align}
\begin{align}
&\bar{\sigma}_U^{(m)}=\int_{{\Omega_Y}^m} \bar{u}(y_1,\dots,y_m;1,0)dy_1\dots dy_m+\int_{{\Omega_Y}^m} \bar{u}(y_1,\dots,y_m;0,1)dy_1\dots dy_m.
\end{align}

(B). When $m$ is an odd number,
\begin{align}
&\bar{\sigma}_L^{(m)}=\int_{{\Omega_Y}^m} \bar{l}(y_1,\dots,y_m;1,0)dy_1\dots dy_m-\int_{{\Omega_Y}^m} \bar{u}(y_1,\dots,y_m;0,1)dy_1\dots dy_m,
\end{align}
\begin{align}
\label{eq135}
&\bar{\sigma}_U^{(m)}=\int_{{\Omega_Y}^m} \bar{u}(y_1,\dots,y_m;1,0)dy_1\dots dy_m-\int_{{\Omega_Y}^m} \bar{l}(y_1,\dots,y_m;0,1)dy_1\dots dy_m.
\end{align}
\end{theorem}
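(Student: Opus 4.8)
The plan is to mirror the proof of Theorem~\ref{theo2} verbatim, replacing each potential outcome $Y_x$ by its centered counterpart $Y_x-\mathbb{E}[Y_x]$. The starting point is the centered expectation identity
$$\overline{\mu}^{(m)} = \int_{\Omega_Y^m} \mathbb{P}\big(Y_0-\mathbb{E}[Y_0]<y_p\leq Y_1-\mathbb{E}[Y_1],\ p=1,\dots,m\big)\,dy_1\dots dy_m + (-1)^m\!\int_{\Omega_Y^m} \mathbb{P}\big(Y_1-\mathbb{E}[Y_1]<y_p\leq Y_0-\mathbb{E}[Y_0],\ p=1,\dots,m\big)\,dy_1\dots dy_m,$$
which was already derived inside the proof of the central-moment identification theorem. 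This identity follows purely from the decomposition of $\{(Y_1-Y_0)-(\mathbb{E}[Y_1]-\mathbb{E}[Y_0])\}^m$ into its positive-ICE and negative-ICE parts together with taking expectations under the finiteness Assumption~\ref{exi3}; crucially, it does \emph{not} invoke the monotonicity Assumption~\ref{MONO2}, so it remains valid under the weaker hypotheses (Assumptions~\ref{ASEXO2} and \ref{exi3}) of the present theorem.

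With that identity fixed, I would substitute the Fr\'echet bounds supplied by the preceding lemma, namely $\bar{l}(y_1,\dots,y_m;1,0)\le \mathbb{P}(Y_0-\mathbb{E}[Y_0]<y_p\leq Y_1-\mathbb{E}[Y_1],\ p=1,\dots,m)\le \bar{u}(y_1,\dots,y_m;1,0)$ and the symmetric statement with the roles of $0$ and $1$ interchanged (i.e. the case $(i,j)=(0,1)$). Since all integrands are nonnegative and all integrals are finite by Assumption~\ref{exi3}, integrating these pointwise inequalities over $\Omega_Y^m$ is legitimate and transfers them directly to each of the two integral terms above.

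The only genuine branch in the argument is the parity of $m$, which fixes the sign $(-1)^m$ multiplying the second integral. When $m$ is even the two terms add, so I would lower-bound both probabilities by $\bar{l}$ and upper-bound both by $\bar{u}$, yielding the even-$m$ lower bound of Eq.~\eqref{eq132} and the matching upper bound $\bar{\sigma}_U^{(m)}$. When $m$ is odd the second term is subtracted: to obtain the lower bound I would lower-bound the first probability by $\bar{l}(\cdot;1,0)$ while \emph{upper}-bounding the subtracted probability by $\bar{u}(\cdot;0,1)$, and reverse both choices for the upper bound, producing the mixed $\bar{l}$/$\bar{u}$ combinations culminating in Eq.~\eqref{eq135}.

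I expect no substantive obstacle, since all the mathematical content already resides in the centered decomposition and in the Fr\'echet inequalities invoked by the preceding lemma. The only point demanding care is the sign bookkeeping in the odd-$m$ case, where one must pick $\bar{l}$ versus $\bar{u}$ on each term so that multiplying by $-1$ preserves the intended inequality direction; this is precisely the pattern already used for Theorem~\ref{theo2}, so I would follow it step for step.
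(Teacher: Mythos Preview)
Your proposal is correct and follows essentially the same approach as the paper: start from the centered integral decomposition of $\overline{\mu}^{(m)}$ (which requires only Assumption~\ref{exi3}), plug in the Fr\'echet bounds $\bar{l}$ and $\bar{u}$ from the preceding lemma, and handle the sign $(-1)^m$ by splitting into the even and odd cases exactly as in Theorem~\ref{theo2}. The paper's proof does precisely this, with no additional ideas.
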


\begin{proof}    
When $m$ is an even number, we have
\begin{align}
&\overline{\mu}^{(m)}=\int_{{\Omega_Y}^m} \mathbb{P}(Y_0-\mathbb{E}[Y_0]<y_1\leq Y_1-\mathbb{E}[Y_1],Y_0-\mathbb{E}[Y_0]<y_2\leq Y_1-\mathbb{E}[Y_1],\dots,\nonumber\\
&\hspace{8cm}Y_0-\mathbb{E}[Y_0]<y_m\leq Y_1-\mathbb{E}[Y_1])dy_1\dots dy_m\nonumber\\
&+(-1)^m\int_{{\Omega_Y}^m} \mathbb{P}(Y_1-\mathbb{E}[Y_1]<y_1\leq Y_0-\mathbb{E}[Y_0],Y_1-\mathbb{E}[Y_1]<y_2\leq Y_0-\mathbb{E}[Y_0],\dots,\nonumber\\
&\hspace{8cm}Y_1-\mathbb{E}[Y_1]<y_m\leq Y_0-\mathbb{E}[Y_0])dy_1\dots dy_m\nonumber\\
&=\int_{{\Omega_Y}^m} \mathbb{P}(Y_0-\mathbb{E}[Y_0]<y_1\leq Y_1-\mathbb{E}[Y_1],Y_0-\mathbb{E}[Y_0]<y_2\leq Y_1-\mathbb{E}[Y_1],\dots,\nonumber\\
&\hspace{8cm}Y_0-\mathbb{E}[Y_0]<y_m\leq Y_1-\mathbb{E}[Y_1])dy_1\dots dy_m\nonumber\\
&+\int_{{\Omega_Y}^m} \mathbb{P}(Y_1-\mathbb{E}[Y_1]<y_1\leq Y_0-\mathbb{E}[Y_0],Y_1-\mathbb{E}[Y_1-\mathbb{E}[Y_0]]<y_2\leq Y_0-\mathbb{E}[Y_0],\dots,\nonumber\\
&\hspace{8cm}Y_1-\mathbb{E}[Y_1]<y_m\leq Y_0-\mathbb{E}[Y_0])dy_1\dots dy_m\nonumber\\
&\geq \int_{{\Omega_Y}^m} \bar{l}(y_1,\dots,y_m;1,0)dy_1\dots dy_m+\int_{{\Omega_Y}^m} \bar{l}(y_1,\dots,y_m;0,1)dy_1\dots dy_m
\end{align}
and
\begin{align}
&\overline{\mu}^{(m)}=\int_{{\Omega_Y}^m} \mathbb{P}(Y_0-\mathbb{E}[Y_0]<y_1\leq Y_1-\mathbb{E}[Y_1],Y_0-\mathbb{E}[Y_0]<y_2\leq Y_1-\mathbb{E}[Y_1],\dots,\nonumber\\
&\hspace{8cm}Y_0-\mathbb{E}[Y_0]<y_m\leq Y_1-\mathbb{E}[Y_1])dy_1\dots dy_m\nonumber\\
&+(-1)^m\int_{{\Omega_Y}^m} \mathbb{P}(Y_1-\mathbb{E}[Y_1]<y_1\leq Y_0-\mathbb{E}[Y_0],Y_1-\mathbb{E}[Y_1]<y_2\leq Y_0-\mathbb{E}[Y_0],\dots,\nonumber\\
&\hspace{8cm}Y_1-\mathbb{E}[Y_1]<y_m\leq Y_0-\mathbb{E}[Y_0])dy_1\dots dy_m\nonumber\\
&=\int_{{\Omega_Y}^m} \mathbb{P}(Y_0-\mathbb{E}[Y_0]<y_1\leq Y_1-\mathbb{E}[Y_1],Y_0-\mathbb{E}[Y_0]<y_2\leq Y_1-\mathbb{E}[Y_1],\dots,\nonumber\\
&\hspace{8cm}Y_0-\mathbb{E}[Y_0]<y_m\leq Y_1-\mathbb{E}[Y_1])dy_1\dots dy_m\nonumber\\
&+\int_{{\Omega_Y}^m} \mathbb{P}(Y_1-\mathbb{E}[Y_1]<y_1\leq Y_0-\mathbb{E}[Y_0],Y_1-\mathbb{E}[Y_1]<y_2\leq Y_0-\mathbb{E}[Y_0],\dots,\nonumber\\
&\hspace{8cm}Y_1-\mathbb{E}[Y_1]<y_m\leq Y_0-\mathbb{E}[Y_0])dy_1\dots dy_m\nonumber\\
&\leq \int_{{\Omega_Y}^m} \bar{u}(y_1,\dots,y_m;1,0)dy_1\dots dy_m+\int_{{\Omega_Y}^m} \bar{u}(y_1,\dots,y_m;0,1)dy_1\dots dy_m.
\end{align}

When $m$ is an even number, we have
\begin{align}
&\overline{\mu}^{(m)}=\int_{{\Omega_Y}^m} \mathbb{P}(Y_0-\mathbb{E}[Y_0]<y_1\leq Y_1-\mathbb{E}[Y_1],Y_0-\mathbb{E}[Y_0]<y_2\leq Y_1-\mathbb{E}[Y_1],\dots,\nonumber\\
&\hspace{8cm}Y_0-\mathbb{E}[Y_0]<y_m\leq Y_1-\mathbb{E}[Y_1])dy_1\dots dy_m\nonumber\\
&+(-1)^m\int_{{\Omega_Y}^m} \mathbb{P}(Y_1-\mathbb{E}[Y_1]<y_1\leq Y_0-\mathbb{E}[Y_0],Y_1-\mathbb{E}[Y_1]<y_2\leq Y_0-\mathbb{E}[Y_0],\dots,\nonumber\\
&\hspace{8cm}Y_1-\mathbb{E}[Y_1]<y_m\leq Y_0-\mathbb{E}[Y_0])dy_1\dots dy_m\nonumber\\
&=\int_{{\Omega_Y}^m} \mathbb{P}(Y_0-\mathbb{E}[Y_0]<y_1\leq Y_1-\mathbb{E}[Y_1],Y_0-\mathbb{E}[Y_0]<y_2\leq Y_1-\mathbb{E}[Y_1],\dots,\nonumber\\
&\hspace{8cm}Y_0-\mathbb{E}[Y_0]<y_m\leq Y_1-\mathbb{E}[Y_1])dy_1\dots dy_m\nonumber\\
&-\int_{{\Omega_Y}^m} \mathbb{P}(Y_1-\mathbb{E}[Y_1]<y_1\leq Y_0-\mathbb{E}[Y_0],Y_1-\mathbb{E}[Y_1]<y_2\leq Y_0-\mathbb{E}[Y_0],\dots,\nonumber\\
&\hspace{8cm}Y_1-\mathbb{E}[Y_1]<y_m\leq Y_0-\mathbb{E}[Y_0])dy_1\dots dy_m\nonumber\\
&\geq \int_{{\Omega_Y}^m} \bar{l}(y_1,\dots,y_m;1,0)dy_1\dots dy_m-\int_{{\Omega_Y}^m} \bar{u}(y_1,\dots,y_m;0,1)dy_1\dots dy_m,
\end{align}
and
\begin{align}
&\overline{\mu}^{(m)}=\int_{{\Omega_Y}^m} \mathbb{P}(Y_0-\mathbb{E}[Y_0]<y_1\leq Y_1-\mathbb{E}[Y_1],Y_0-\mathbb{E}[Y_0]<y_2\leq Y_1-\mathbb{E}[Y_1],\dots,\nonumber\\
&\hspace{8cm}Y_0-\mathbb{E}[Y_0]<y_m\leq Y_1-\mathbb{E}[Y_1])dy_1\dots dy_m\nonumber\\
&+(-1)^m\int_{{\Omega_Y}^m} \mathbb{P}(Y_1-\mathbb{E}[Y_1]<y_1\leq Y_0-\mathbb{E}[Y_0],Y_1-\mathbb{E}[Y_1]<y_2\leq Y_0-\mathbb{E}[Y_0],\dots,\nonumber\\
&\hspace{8cm}Y_1-\mathbb{E}[Y_1]<y_m\leq Y_0-\mathbb{E}[Y_0])dy_1\dots dy_m\nonumber\\
&=\int_{{\Omega_Y}^m} \mathbb{P}(Y_0-\mathbb{E}[Y_0]<y_1\leq Y_1-\mathbb{E}[Y_1],Y_0-\mathbb{E}[Y_0]<y_2\leq Y_1-\mathbb{E}[Y_1],\dots,\nonumber\\
&\hspace{8cm}Y_0-\mathbb{E}[Y_0]<y_m\leq Y_1-\mathbb{E}[Y_1])dy_1\dots dy_m\nonumber\\
&-\int_{{\Omega_Y}^m} \mathbb{P}(Y_1-\mathbb{E}[Y_1]<y_1\leq Y_0-\mathbb{E}[Y_0],Y_1-\mathbb{E}[Y_1]<y_2\leq Y_0-\mathbb{E}[Y_0],\dots,\nonumber\\
&\hspace{8cm}Y_1-\mathbb{E}[Y_1]<y_m\leq Y_0-\mathbb{E}[Y_0])dy_1\dots dy_m\nonumber\\
&\leq \int_{{\Omega_Y}^m} \bar{u}(y_1,\dots,y_m;1,0)dy_1\dots dy_m-\int_{{\Omega_Y}^m} \bar{l}(y_1,\dots,y_m;0,1)dy_1\dots dy_m.
\end{align}
\end{proof}
We can calculate the bounds of the skewness as follows:
\begin{align}
&\frac{\bar{\sigma}_L^{(3)}}{\bar{\sigma}_U^{(2)\frac{3}{2}}}\mathbb{I}(\bar{\sigma}_L^{(3)}\geq 0)+\frac{\bar{\sigma}_L^{(3)}}{\bar{\sigma}_L^{(2)\frac{3}{2}}}\mathbb{I}(\bar{\sigma}_L^{(3)}< 0)\leq \frac{\overline{\mu}^{(3)}}{{\overline{\mu}^{(2)}}^{\frac{3}{2}}} \leq \frac{\bar{\sigma}_U^{(3)}}{\bar{\sigma}_L^{(2)\frac{3}{2}}}\mathbb{I}(\bar{\sigma}_U^{(3)}\geq 0)+\frac{\bar{\sigma}_U^{(3)}}{\bar{\sigma}_U^{(2)\frac{3}{2}}}\mathbb{I}(\bar{\sigma}_U^{(3)}< 0),
\end{align}
and the bounds of the kurtosis as follows:
\begin{align}
&\frac{\bar{\sigma}_L^{(4)}}{\bar{\sigma}_U^{(2)2}}\leq \frac{\overline{\mu}^{(4)}}{{\overline{\mu}^{(2)}}^{2}} \leq\frac{\bar{\sigma}_U^{(4)}}{\bar{\sigma}_L^{(2)2}}.
\end{align}

\section{Identification and Bounds of the Central Product Moments of Causal Effects}
\label{appC}

In this section, we discuss the central product moments of causal effects.
We make the following assumption:
\begin{assumption}[Existence of integrals]
\label{exi4}
Under SCM ${\cal M}$, given $m\geq 1$,
$\overline{\rho}_{i,j;k,h}<\infty$ and
$\int_{\Omega_Y}\int_{\Omega_Y} \mathbb{P}(Y_j-\mathbb{E}[Y_j]<y_1\leq Y_i-\mathbb{E}[Y_i],Y_h-\mathbb{E}[Y_h]<y_2\leq Y_k-\mathbb{E}[Y_k])dy_1 dy_2<\infty$ hold for any $i,j,k,h \in \{1,\dots,R\}$.
\end{assumption}

{\bf Identification of the central product moment of causal effects.}

\begin{theorem}[Identification of the central product moment of causal effects]
Under SCM ${\cal M}$ and Assumptions \ref{ASEXO2}, \ref{MONO2}, and \ref{exi4}, for any $i,j,k,h \in \{1,\dots,R\}$, $\overline{\rho}_{i,j;k,h}$ is identifiable by $\bar{\sigma}(i,j;k,h)$, where
\begin{align}
\label{eq65}
\bar{\sigma}(i,j;k,h)
&=\int_{{\Omega_Y}^2}\max\Big\{\min\{\mathbb{P}(Y-\mathbb{E}[Y|X=j]<y_1|X=j),\mathbb{P}(Y-\mathbb{E}[Y|X=h]<y_2|X=h)\}\nonumber\\
&\hspace{0.2cm}-\max\{\mathbb{P}(Y-\mathbb{E}[Y|X=i]<y_1|X=i),\mathbb{P}(Y-\mathbb{E}[Y|X=k]<y_2|X=k)\},0\Big\}dy_1dy_2\nonumber\\
&-\int_{{\Omega_Y}^2}\max\Big\{\min\{\mathbb{P}(Y-\mathbb{E}[Y|X=i]<y_1|X=i),\mathbb{P}(Y-\mathbb{E}[Y|X=h]<y_2|X=h)\}\nonumber\\
&\hspace{0.2cm}-\max\{\mathbb{P}(Y-\mathbb{E}[Y|X=j]<y_1|X=j),\mathbb{P}(Y-\mathbb{E}[Y|X=k]<y_2|X=k)\},0\Big\}dy_1dy_2\nonumber\\
&-\int_{{\Omega_Y}^2}\max\Big\{\min\{\mathbb{P}(Y-\mathbb{E}[Y|X=j]<y_1|X=j),\mathbb{P}(Y-\mathbb{E}[Y|X=k]<y_2|X=k)\}\nonumber\\
&\hspace{0.2cm}-\max\{\mathbb{P}(Y-\mathbb{E}[Y|X=i]<y_1|X=i),\mathbb{P}(Y-\mathbb{E}[Y|X=h]<y_2|X=h)\},0\Big\}dy_1dy_2\nonumber\\
&+\int_{{\Omega_Y}^2}\max\Big\{\min\{\mathbb{P}(Y-\mathbb{E}[Y|X=i]<y_1|X=i),\mathbb{P}(Y-\mathbb{E}[Y|X=k]<y_2|X=k)\}\nonumber\\
&\hspace{0.2cm}-\max\{\mathbb{P}(Y-\mathbb{E}[Y|X=j]<y_1|X=j),\mathbb{P}(Y-\mathbb{E}[Y|X=h]<y_2|X=h)\},0\Big\}dy_1dy_2.
\end{align}
\end{theorem}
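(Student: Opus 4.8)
The plan is to reduce this statement to the non-central product-moment machinery by recentering the potential outcomes. Define the centered potential outcomes $\tilde{Y}_x\defeq Y_x-\mathbb{E}[Y_x]$ for each $x$. Then the central product moment is exactly the non-central product moment of the centered effects,
\[
\overline{\rho}_{i,j;k,h}=\mathbb{E}\big[(\tilde{Y}_i-\tilde{Y}_j)(\tilde{Y}_k-\tilde{Y}_h)\big],
\]
which places us in the setting of Theorem~\ref{theo3} with $Y_x$ replaced throughout by $\tilde{Y}_x$. First I would apply the decomposition of Lemma~\ref{lem4} with the substitution $Y_x\mapsto\tilde{Y}_x$, writing the product as the four-term signed sum of double integrals of indicators $\mathbb{I}(\tilde{Y}_j<y_1\leq\tilde{Y}_i,\tilde{Y}_h<y_2\leq\tilde{Y}_k)$ and its three siblings. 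This verification is identical to that of Lemma~\ref{lem4}, since the underlying algebraic identity does not depend on what the random variables represent. Taking expectations of both sides—justified by Assumption~\ref{exi4} together with Fubini/Tonelli—converts the four indicator integrals into four integrals of joint probabilities of the form $\mathbb{P}(\tilde{Y}_j<y_1\leq\tilde{Y}_i,\tilde{Y}_h<y_2\leq\tilde{Y}_k)$, yielding the centered analogue of Eq.~\eqref{eq21}.

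Next I would identify each of these four joint distributions of centered potential outcomes. The key observation is that $\tilde{Y}_x$ is itself the potential outcome generated by the modified structural function $\tilde{f}_Y(x,U_Y)\defeq f_Y(x,U_Y)-\mathbb{E}[Y_x]$. Because $\mathbb{E}[Y_x]$ is a constant that does not depend on $U_Y$, subtracting it leaves Assumption~\ref{MONO2} (monotonicity in $U_Y$) intact, and exogeneity $\tilde{Y}_x\indep X$ inherits directly from $Y_x\indep X$ under Assumption~\ref{ASEXO2}. Hence Theorem~5.2 of \citep{Kawakami2024} applies verbatim to $\tilde{f}_Y$, identifying each joint probability as a $\max\{\min\{\cdots\}-\max\{\cdots\},0\}$ expression in the conditional CDFs of $\tilde{Y}$. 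By exogeneity, $\mathbb{P}(\tilde{Y}_x<y\mid X=x)=\mathbb{P}(Y-\mathbb{E}[Y\mid X=x]<y\mid X=x)$, which is precisely the conditional CDF appearing in Eq.~\eqref{eq65}. Matching the four terms with their argument orderings $(i,j,k,h),(i,j,h,k),(j,i,k,h),(j,i,h,k)$ and the corresponding signs from Lemma~\ref{lem4} then gives the claimed formula.

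The main obstacle—really the only conceptual step—is justifying that recentering preserves the monotonicity assumption, so that Theorem~5.2 of \citep{Kawakami2024} can legitimately be invoked on $\tilde{f}_Y$ rather than on $f_Y$; everything else carries over mechanically from the proofs of Lemma~\ref{lem4} and Theorem~\ref{theo3}. A secondary, purely bookkeeping concern is tracking which of the four conditional-CDF orderings pairs with which of the four signed integrals, but this is routine given the explicit four-way split in Lemma~\ref{lem4}.
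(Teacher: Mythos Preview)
Your proposal is correct and matches the paper's approach exactly: the paper's proof consists of the single observation that $\mathbb{P}(Y_j-\mathbb{E}[Y_j]<y_1\leq Y_i-\mathbb{E}[Y_i],Y_h-\mathbb{E}[Y_h]<y_2\leq Y_k-\mathbb{E}[Y_k])$ is identified by rewriting $f_Y$ as $f(x,U_Y)=Y_x-\mathbb{E}[Y_x]$ in Theorem~5.2 of \citep{Kawakami2024}, which is precisely your recentering argument. If anything, your version is more careful than the paper's, since you explicitly note why the monotonicity and exogeneity assumptions survive the shift by a constant.
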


\begin{proof}
$\mathbb{P}(Y_j-\mathbb{E}[Y_j]<y_1\leq Y_i-\mathbb{E}[Y_i],Y_h-\mathbb{E}[Y_h]<y_2\leq Y_k-\mathbb{E}[Y_k])$ are identifiable by $\max\{\min\{\mathbb{P}(Y-\mathbb{E}[Y|X=j]<y_1|X=j),\mathbb{P}(Y-\mathbb{E}[Y|X=h]<y_2|X=h)\}-\max\{\mathbb{P}(Y-\mathbb{E}[Y|X=i]<y_1|X=i),\mathbb{P}(Y-\mathbb{E}[Y|X=k]<y_2|X=k)\},0\}$ ($i,j,k,h \in \{1,\dots,R\}$) respectively by rewriting $f_Y$ as $f(x,U_Y)=Y_x-\mathbb{E}[Y_x]$ in Theorem 5.2 in \citep{Kawakami2024} under Assumptions \ref{ASEXO2} and \ref{MONO2}.
Then, we have Eq.~\eqref{eq65}.    
\end{proof}

{\bf Bounding the central product moments of causal effects.}

Assumption \ref{MONO2} may sometimes be considered implausible by researchers. Therefore, we derive bounds for the moments of causal effects without relying on Assumption \ref{MONO2}.

\begin{lemma}

Under SCM ${\cal M}$ and Assumptions \ref{ASEXO2} and \ref{exi4}, for any $i,j,k,h \in \{1,\dots,R\}$,
we have $l(y_1,y_3;i,j,k,h) \leq \mathbb{P}(Y_j-\mathbb{E}[Y_j]<y_1\leq Y_i-\mathbb{E}[Y_i],Y_h-\mathbb{E}[Y_h]<y_2\leq Y_k-\mathbb{E}[Y_k])\leq u(y_1,y_3;i,j,k,h)$, where
\begin{align}
&l(y_1,y_2;i,j,k,h)=\max\Big\{\mathbb{P}(Y-\mathbb{E}[Y|X=j]<y_1|X=j)-\mathbb{P}(Y-\mathbb{E}[Y|X=i] <y_1|X=i)\nonumber\\
&\hspace{4cm}+\mathbb{P}(Y-\mathbb{E}[Y|X=h]<y_2|X=h)-\mathbb{P}(Y-\mathbb{E}[Y|X=k]<y_2|X=k)-1,0\Big\},
\end{align}
\begin{align}
&u(y_1,y_2;i,j,k,h)=\min\Big\{\min\{\mathbb{P}(Y-\mathbb{E}[Y|X=j]<y_1|X=j),\mathbb{P}(Y-\mathbb{E}[Y|X=h]<y_2|X=h)\},\nonumber\\
&\hspace{4cm}1-\max\{\mathbb{P}(Y-\mathbb{E}[Y|X=i]<y_1|X=i),\mathbb{P}(Y-\mathbb{E}[Y|X=k]<y_2|X=k)\}\Big\}
\end{align}
for each $i,j,k,h \in \{1,\dots,R\}$ and $y_1, y_2 \in \Omega_Y$.
\end{lemma}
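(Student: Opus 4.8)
The plan is to mirror the proof of Lemma~\ref{lem6} verbatim, replacing each potential outcome $Y_x$ by its centered counterpart $Y_x-\mathbb{E}[Y_x]$, and to obtain both bounds from a single application of the Fréchet inequalities to a conjunction of four events. First I would split each half-open interval into two one-sided constraints and write
\begin{align*}
&\mathbb{P}\big(Y_j-\mathbb{E}[Y_j]<y_1\leq Y_i-\mathbb{E}[Y_i],\,Y_h-\mathbb{E}[Y_h]<y_2\leq Y_k-\mathbb{E}[Y_k]\big)\\
&=\mathbb{P}\big(A_1\wedge A_2\wedge A_3\wedge A_4\big),
\end{align*}
where $A_1=\{Y_j-\mathbb{E}[Y_j]<y_1\}$, $A_2=\{y_1\leq Y_i-\mathbb{E}[Y_i]\}$, $A_3=\{Y_h-\mathbb{E}[Y_h]<y_2\}$, and $A_4=\{y_2\leq Y_k-\mathbb{E}[Y_k]\}$, noting $\mathbb{P}(A_2)=1-\mathbb{P}(Y_i-\mathbb{E}[Y_i]<y_1)$ and $\mathbb{P}(A_4)=1-\mathbb{P}(Y_k-\mathbb{E}[Y_k]<y_2)$.

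For the lower bound I would apply the Fréchet lower inequality with $n=4$, giving $\max\{\sum_{p=1}^4\mathbb{P}(A_p)-3,0\}$; substituting the two complementary probabilities and cancelling the constants leaves $\max\{\mathbb{P}(Y_j-\mathbb{E}[Y_j]<y_1)-\mathbb{P}(Y_i-\mathbb{E}[Y_i]<y_1)+\mathbb{P}(Y_h-\mathbb{E}[Y_h]<y_2)-\mathbb{P}(Y_k-\mathbb{E}[Y_k]<y_2)-1,0\}$. For the upper bound I would apply the Fréchet upper inequality $\min_{p}\mathbb{P}(A_p)$ and regroup the four terms as $\min\{\min\{\mathbb{P}(A_1),\mathbb{P}(A_3)\},\min\{\mathbb{P}(A_2),\mathbb{P}(A_4)\}\}$, using the identity $\min\{1-a,1-b\}=1-\max\{a,b\}$ on the complementary pair $\{A_2,A_4\}$ to produce the term $1-\max\{\mathbb{P}(Y_i-\mathbb{E}[Y_i]<y_1),\mathbb{P}(Y_k-\mathbb{E}[Y_k]<y_2)\}$.

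The only remaining step converts the marginal laws of the centered potential outcomes into the conditional quantities in the statement. Under Assumption~\ref{ASEXO2} ($Y_x\indep X$) the law of $Y_x$ equals the conditional law of $Y$ given $X=x$, so $\mathbb{E}[Y_x]=\mathbb{E}[Y\mid X=x]$ and $\mathbb{P}(Y_x-\mathbb{E}[Y_x]<y)=\mathbb{P}(Y-\mathbb{E}[Y\mid X=x]<y\mid X=x)$; applying this identity to each of the four terms turns the two displayed bounds into $l$ and $u$. Assumption~\ref{exi4} guarantees all these probabilities and the downstream integrals are finite and well defined. I expect no deep obstacle: the entire content is the bookkeeping of the complementary events together with the $\min/\max$ regrouping, exactly as in Lemma~\ref{lem6}. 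The one place to be careful is the exogeneity rewrite, since it is the sole step invoking Assumption~\ref{ASEXO2} and is precisely what licenses replacing $Y_x-\mathbb{E}[Y_x]$ throughout by its observational, conditional-on-$X=x$ version.
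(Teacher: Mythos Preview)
Your proposal is correct and follows essentially the same approach as the paper: split the joint event into the conjunction of four one-sided events, apply the Fr\'echet inequalities with $n=4$, simplify via complementary probabilities (and the $\min\{1-a,1-b\}=1-\max\{a,b\}$ regrouping for the upper bound), and then invoke Assumption~\ref{ASEXO2} to rewrite each marginal of a centered potential outcome as the corresponding conditional-on-$X=x$ quantity.
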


\begin{proof}
From Fr\'{e}chet inequalities \citep{Frechet1935,Frechet1960},
for any $i,j,k,h \in \{1,\dots,R\}$,
we have 
\begin{align}
&\mathbb{P}(Y_j-\mathbb{E}[Y_j]<y_1\leq Y_i-\mathbb{E}[Y_i],Y_h-\mathbb{E}[Y_h]<y_2\leq Y_k-\mathbb{E}[Y_k])\nonumber\\
&=\mathbb{P}(Y_j-\mathbb{E}[Y_j]<y_1,y_1\leq Y_i-\mathbb{E}[Y_i],Y_h-\mathbb{E}[Y_h]<y_2,y_2\leq Y_k-\mathbb{E}[Y_k])\nonumber\\
&\geq \max\left\{\mathbb{P}(Y_j-\mathbb{E}[Y_j]<y_1)+\mathbb{P}(y_1\leq Y_i-\mathbb{E}[Y_i])+\mathbb{P}(Y_h-\mathbb{E}[Y_h]<y_2)+\mathbb{P}(y_2\leq Y_k-\mathbb{E}[Y_k])-3,0\right\}\nonumber\\
&= \max\left\{\mathbb{P}(Y_j-\mathbb{E}[Y_j]<y_1)-\mathbb{P}(Y_i-\mathbb{E}[Y_i] <y_1)+\mathbb{P}(Y_h-\mathbb{E}[Y_h]<y_2)-\mathbb{P}(Y_k-\mathbb{E}[Y_k]<y_2)-1,0\right\}\nonumber\\
&= \max\Big\{\mathbb{P}(Y-\mathbb{E}[Y|X=j]<y_1|X=j)-\mathbb{P}(Y-\mathbb{E}[Y|X=i] <y_1|X=i)\nonumber\\
&\hspace{3cm}+\mathbb{P}(Y-\mathbb{E}[Y|X=h]<y_2|X=h)-\mathbb{P}(Y-\mathbb{E}[Y|X=k]<y_2|X=k)-1,0\big\}
\end{align}
and 
\begin{align}
&\mathbb{P}(Y_j-\mathbb{E}[Y_j]<y_1\leq Y_i-\mathbb{E}[Y_i],Y_h-\mathbb{E}[Y_h]<y_2\leq Y_k-\mathbb{E}[Y_k])\nonumber\\
&=\mathbb{P}(Y_j-\mathbb{E}[Y_j]<y_1,y_1\leq Y_i-\mathbb{E}[Y_i],Y_h-\mathbb{E}[Y_h]<y_2,y_2\leq Y_k-\mathbb{E}[Y_k])\nonumber\\
&\leq \min\Big\{\min\{\mathbb{P}(Y_j-\mathbb{E}[Y_j]<y_1),\mathbb{P}(Y_h-\mathbb{E}[Y_h]<y_2)\},1-\max\{\mathbb{P}(Y_i-\mathbb{E}[Y_i] <y_1),\mathbb{P}(Y_k-\mathbb{E}[Y_k]<y_2)\}\Big\}\nonumber\\
&=\min\Big\{\min\{\mathbb{P}(Y-\mathbb{E}[Y|X=j]<y_1|X=j),\mathbb{P}(Y-\mathbb{E}[Y|X=h]<y_2|X=h)\},\nonumber\\
&\hspace{4cm}1-\max\{\mathbb{P}(Y-\mathbb{E}[Y|X=i]<y_1|X=i),\mathbb{P}(Y-\mathbb{E}[Y|X=k]<y_2|X=k)\}\Big\}.
\end{align}
\end{proof}

\begin{theorem}[Bounds of the central product moment of causal effects]
Under SCM ${\cal M}$ and Assumptions \ref{ASEXO2} and \ref{exi4}, for any $i,j,k,h \in \{1,\dots,R\}$, we have $\bar{\sigma}_L(i,j;k,h) \leq \overline{\rho}_{i,j;k,h}\leq \bar{\sigma}_U(i,j;k,h)$, where
\begin{align}
\label{eq191}
&\bar{\sigma}_L(i,j;k,h)=\int_{{\Omega_Y}^2}\bar{l}(y_1,y_2;i,j,k,h)dy_1dy_2-\int_{{\Omega_Y}^2}\bar{u}(y_1,y_2;j,i,k,h)dy_1dy_2\nonumber\\
&\hspace{3cm}-\int_{{\Omega_Y}^2}\bar{u}(y_1,y_2;i,j,h,k)dy_1dy_2+\int_{{\Omega_Y}^2}\bar{l}(y_1,y_2;j,i,h,k)dy_1dy_2,
\end{align}
\begin{align}
\label{eq192}
&\bar{\sigma}_U(i,j;k,h)=\int_{{\Omega_Y}^2}\bar{u}(y_1,y_2;i,j,k,h)dy_1dy_2-\int_{{\Omega_Y}^2}\bar{l}(y_1,y_2;j,i,k,h)dy_1dy_2\nonumber\\
&\hspace{3cm}-\int_{{\Omega_Y}^2}\bar{l}(y_1,y_2;i,j,h,k)dy_1dy_2+\int_{{\Omega_Y}^2}\bar{u}(y_1,y_2;j,i,h,k)dy_1dy_2.
\end{align}
\end{theorem}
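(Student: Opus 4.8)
The plan is to reduce this theorem to the already-established non-central bounds (Theorem~\ref{theo4}) by recentering the potential outcomes. Define $\tilde{Y}_x := Y_x - \mathbb{E}[Y_x]$ for each $x \in \{1,\dots,R\}$. Then the centered causal effect $(Y_i - Y_j) - (\mathbb{E}[Y_i] - \mathbb{E}[Y_j])$ equals $\tilde{Y}_i - \tilde{Y}_j$, so that
\begin{equation*}
\overline{\rho}_{i,j;k,h} = \mathbb{E}\big[(\tilde{Y}_i - \tilde{Y}_j)(\tilde{Y}_k - \tilde{Y}_h)\big]
\end{equation*}
has exactly the same algebraic form as the ordinary product moment $\rho_{i,j;k,h}$, but with $\tilde{Y}$ in place of $Y$. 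This observation transports the entire machinery of the non-central case.

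First I would apply the decomposition of Lemma~\ref{lem4} verbatim with $\tilde{Y}_x$ substituted for $Y_x$, splitting $(\tilde{Y}_i - \tilde{Y}_j)(\tilde{Y}_k - \tilde{Y}_h)$ into four sign-indexed pieces, each rewritten as a double integral of an indicator of the form $\mathbb{I}(\tilde{Y}_j < y_1 \leq \tilde{Y}_i,\ \tilde{Y}_h < y_2 \leq \tilde{Y}_k)$. Taking expectations on both sides --- legitimate under the finiteness Assumption~\ref{exi4}, which guarantees that each of the four integrals of $\mathbb{P}(\tilde{Y}_j < y_1 \leq \tilde{Y}_i,\ \tilde{Y}_h < y_2 \leq \tilde{Y}_k)$ converges and permits interchanging expectation and integration by Fubini --- yields the centered analogue of Eq.~\eqref{eq21}: a sum of four integrated joint probabilities carrying signs $+,-,-,+$.

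Next I would invoke the preceding lemma, which supplies the two-event Fr\'{e}chet bounds $\bar{l}(y_1,y_2;i,j,k,h) \leq \mathbb{P}(\tilde{Y}_j < y_1 \leq \tilde{Y}_i,\ \tilde{Y}_h < y_2 \leq \tilde{Y}_k) \leq \bar{u}(y_1,y_2;i,j,k,h)$ expressed through the conditional CDFs of $\tilde{Y}_x$ (identified from $\mathbb{P}(X,Y)$ under exogeneity alone). The final step is to bound the four-term expression monotonically and integrate termwise over $\Omega_Y^2$: for the lower bound $\bar{\sigma}_L$, I replace each positively signed joint probability by its lower bound $\bar{l}$ and each negatively signed one by its upper bound $\bar{u}$; for $\bar{\sigma}_U$ I do the reverse. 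Matching indices to the decomposition then reproduces exactly Eqs.~\eqref{eq191}--\eqref{eq192}.

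The only real subtlety --- and the step I would be most careful about --- is the bookkeeping of signs and index orderings: because two of the four integrals enter with a minus sign, the Fr\'{e}chet lower and upper bounds swap roles on those terms, so the label permutations $(i,j,k,h)$, $(j,i,k,h)$, $(i,j,h,k)$, $(j,i,h,k)$ and the $\bar{l}$-versus-$\bar{u}$ assignments must be matched precisely against the four pieces of the decomposition. Beyond this, the argument is a direct transcription of the proof of Theorem~\ref{theo4}, the recentering $\tilde{Y}_x = Y_x - \mathbb{E}[Y_x]$ being the only new ingredient; in particular no monotonicity (Assumption~\ref{MONO2}) is required, since the Fr\'{e}chet inequalities depend only on the marginals of $\tilde{Y}_x$.
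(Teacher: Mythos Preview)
Your proposal is correct and follows essentially the same approach as the paper: the paper writes $\overline{\rho}_{i,j;k,h}$ as the four-term integral decomposition in the centered variables $Y_x-\mathbb{E}[Y_x]$ (your $\tilde{Y}_x$), then applies the Fr\'{e}chet bounds from the preceding lemma termwise, replacing positively signed integrands by $\bar{l}$ and negatively signed ones by $\bar{u}$ for the lower bound and vice versa for the upper. Your framing of this as ``reduce to Theorem~\ref{theo4} via recentering'' is exactly the paper's argument, stated at a slightly higher level of abstraction.
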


\begin{proof}
For any $i,j,k,h \in \{1,\dots,R\}$,
we have
\begin{align}
&\overline{\rho}_{i,j;k,h}=\int_{\Omega_Y}\int_{\Omega_Y} \mathbb{P}(Y_j-\mathbb{E}[Y_j]<y_1\leq Y_i-\mathbb{E}[Y_i],Y_h-\mathbb{E}[Y_h]<y_2\leq Y_k-\mathbb{E}[Y_k])dy_1 dy_2\nonumber\\
&-\int_{\Omega_Y}\int_{\Omega_Y} \mathbb{P}(Y_i-\mathbb{E}[Y_i]<y_1\leq Y_j-\mathbb{E}[Y_j],Y_h-\mathbb{E}[Y_h]<y_2\leq Y_k-\mathbb{E}[Y_k])dy_1 dy_2\nonumber\\
&-\int_{\Omega_Y}\int_{\Omega_Y} \mathbb{P}(Y_j-\mathbb{E}[Y_j]<y_1\leq Y_i-\mathbb{E}[Y_i],Y_k-\mathbb{E}[Y_k]<y_2\leq Y_h-\mathbb{E}[Y_h])dy_1 dy_2\nonumber\\
&+\int_{\Omega_Y}\int_{\Omega_Y} \mathbb{P}(Y_i-\mathbb{E}[Y_i]<y_1\leq Y_j-\mathbb{E}[Y_j],Y_k-\mathbb{E}[Y_k]<y_2\leq Y_h-\mathbb{E}[Y_h])dy_1 dy_2\nonumber\\
&\geq \int_{{\Omega_Y}^2}\bar{l}(y_1,y_2;i,j,k,h)dy_1dy_2-\int_{{\Omega_Y}^2}\bar{u}(y_1,y_2;j,i,k,h)dy_1dy_2\nonumber\\
&\hspace{3cm}-\int_{{\Omega_Y}^2}\bar{u}(y_1,y_2;i,j,h,k)dy_1dy_2+\int_{{\Omega_Y}^2}\bar{l}(y_1,y_2;j,i,h,k)dy_1dy_2
\end{align}
and
\begin{align}
&\overline{\rho}_{i,j;k,h}=\int_{\Omega_Y}\int_{\Omega_Y} \mathbb{P}(Y_j-\mathbb{E}[Y_j]<y_1\leq Y_i-\mathbb{E}[Y_i],Y_h-\mathbb{E}[Y_h]<y_2\leq Y_k-\mathbb{E}[Y_k])dy_1 dy_2\nonumber\\
&-\int_{\Omega_Y}\int_{\Omega_Y} \mathbb{P}(Y_i-\mathbb{E}[Y_i]<y_1\leq Y_j-\mathbb{E}[Y_j],Y_h-\mathbb{E}[Y_h]<y_2\leq Y_k-\mathbb{E}[Y_k])dy_1 dy_2\nonumber\\
&-\int_{\Omega_Y}\int_{\Omega_Y} \mathbb{P}(Y_j-\mathbb{E}[Y_j]<y_1\leq Y_i-\mathbb{E}[Y_i],Y_k-\mathbb{E}[Y_k]<y_2\leq Y_h-\mathbb{E}[Y_h])dy_1 dy_2\nonumber\\
&+\int_{\Omega_Y}\int_{\Omega_Y} \mathbb{P}(Y_i-\mathbb{E}[Y_i]<y_1\leq Y_j-\mathbb{E}[Y_j],Y_k-\mathbb{E}[Y_k]<y_2\leq Y_h-\mathbb{E}[Y_h])dy_1 dy_2\nonumber\\
&\leq \int_{{\Omega_Y}^2}\bar{u}(y_1,y_2;i,j,k,h)dy_1dy_2-\int_{{\Omega_Y}^2}\bar{l}(y_1,y_2;j,i,k,h)dy_1dy_2\nonumber\\
&\hspace{3cm}-\int_{{\Omega_Y}^2}\bar{l}(y_1,y_2;i,j,h,k)dy_1dy_2+\int_{{\Omega_Y}^2}\bar{u}(y_1,y_2;j,i,h,k)dy_1dy_2.
\end{align}
\end{proof}

Denoting the bound of the moments of causal effects $Y_i-Y_j$ by $\bar{\sigma}_L^{(m)}(i,j)$ and $\bar{\sigma}_L^{(m)}(i,j)$, i.e., $\bar{\sigma}_U^{(m)}(i,j) \leq \mathbb{E}[\{(Y_i-Y_j)-(\mathbb{E}[Y_i]-\mathbb{E}[Y_j])\}^m] \leq \bar{\sigma}_U^{(m)}(i,j)$,
we can calculate the bounds of the correlation as follows: 
\begin{align}
&\frac{\bar{\sigma}_L(i,j;k,h)}{\sqrt{\bar{\sigma}_U^{(2)}(i,j)}\sqrt{\bar{\sigma}_U^{(2)}(k,h)}}\mathbb{I}(\bar{\sigma}_L(i,j;k,h)\geq 0)+\frac{\bar{\sigma}_L(i,j;k,h)}{\sqrt{\bar{\sigma}_L^{(2)}(i,j)}\sqrt{\bar{\sigma}_L^{(2)}(k,h)}}\mathbb{I}(\bar{\sigma}_L(i,j;k,h)< 0)\nonumber\\
&\hspace{0.5cm}\leq \overline{\tau}_{i,j;k,h} \leq \frac{\bar{\sigma}_U(i,j;k,h)}{\sqrt{\bar{\sigma}_L^{(2)}(i,j)}\sqrt{\bar{\sigma}_L^{(2)}(k,h)}}\mathbb{I}(\bar{\sigma}_U(i,j;k,h)\geq 0)+\frac{\bar{\sigma}_U(i,j;k,h)}{\sqrt{\bar{\sigma}_U^{(2)}(i,j)}\sqrt{\bar{\sigma}_U^{(2)}(k,h)}}\mathbb{I}(\bar{\sigma}_U(i,j;k,h)< 0).
\end{align}

\section{Conditional Moments of Causal Effects}
\label{appD}

We consider the following SCM, ${\cal M}$:
\begin{equation}
\vspace{0.1cm}
\begin{gathered}
Y:=f_Y(X,W,U_Y),\ \ X:=f_X(W,U_X),\ \ W:=f_W(U_W)
\end{gathered}
\end{equation}
where $U_Y$, $U_X$, and $U_W$ are latent exogenous variables.
$Y$ is a continuous variable, and $X$ and $W$ are binary, discrete, or continuous variables.
$W$ are the subject's covariates.
They also provided the identification assumptions of the PNS for continuous treatment and outcome as below:
\begin{assumption}[Conditional exogeneity]
\label{CASEXO2}
$Y_x\indep X|W$ for all $x \in \Omega_X$.
\end{assumption}
If $X$ is binary, i.e., $\Omega_X=\{0,1\}$, Assumption \ref{CASEXO2} reduces to $Y_1 \indep X|W$ and $Y_0 \indep X|W$.
\begin{assumption}[Conditional monotonicity over $f_Y$]
\label{CMONO2}
{The function $f_Y(x,w,U_Y)$ is either (i) monotonic increasing on $U_Y$
with $\leq$
for all $x \in \Omega_X$ and $w \in \Omega_W$ almost surely w.r.t. $\mathbb{P}_{U_Y}$, or (ii) monotonic decreasing on $U_Y$
for all $x \in \Omega_X$ and $w \in \Omega_W$
almost surely w.r.t. $\mathbb{P}_{U_Y}$.} 
\end{assumption}
\citet{Kawakami2024} provided another Assumption (conditional monotonicity over $Y_x$), which is equivalent to Assumption \ref{CMONO2} under Assumption 3.6 in \citep{Kawakami2024}.

We provide similar definitions of the conditional moments of causal effects with the subjects' covariates $W$.
\begin{definition}[The conditional moment of causal effects]
For each $m\geq 1$,
the conditional $m$-th moment of $Y_1-Y_0$ is defined as 
\begin{equation}
\mathbb{E}\left[(Y_1-Y_0)^m\Big|W=w\right].
\end{equation}
\end{definition}

Under SCM ${\cal M}$ and Assumptions \ref{exi1}, \ref{CASEXO2}, and \ref{CMONO2}, given $m\geq 1$,
it is identified by Eq.~\eqref{eq10} conditioned on $W=w$.
Under SCM ${\cal M}$ and Assumptions \ref{exi1} and \ref{CASEXO2}, given $m\geq 1$,
it is bounded by Eqs.~\eqref{eq13} $\sim$ \eqref{eq16} conditioned on $W=w$.

\begin{definition}[The conditional central moment of causal effects]
For each $m\geq 1$,
we define the conditional $m$-th central moment of $Y_1-Y_0$ as
\begin{equation}
\mathbb{E}\left[\Big\{(Y_1-Y_0)-(\mathbb{E}[Y_1]-\mathbb{E}[Y_0])\Big\}^m\Bigg|W=w\right].
\end{equation}
\end{definition}

Under SCM ${\cal M}$ and Assumptions \ref{exi3}, \ref{CASEXO2}, and \ref{CMONO2}, given $m\geq 1$,
it is identified by Eq.~\eqref{eq91} conditioned on $W=w$.
Under SCM ${\cal M}$ and Assumptions \ref{exi3} and \ref{CASEXO2}, given $m\geq 1$,
it is bounded by Eqs.~\eqref{eq132} $\sim$ \eqref{eq135} conditioned on $W=w$.

\begin{definition}[The conditional product moment of causal effects]
We define the conditional product moment of causal effects as
\begin{equation}
\mathbb{E}\left[(Y_i-Y_j)(Y_k-Y_h)\Big|W=w\right].
\end{equation}
\end{definition}

Under SCM ${\cal M}$ and Assumptions \ref{exi2}, \ref{CASEXO2}, and \ref{CMONO2}, given $m\geq 1$,
it is identified by Eq.~\eqref{eq22} conditioned on $W=w$.
Under SCM ${\cal M}$ and Assumptions \ref{exi2} and \ref{CASEXO2}, given $m\geq 1$,
it is bounded by Eqs.~\eqref{eq25} $\sim$ \eqref{eq26} conditioned on $W=w$.

\begin{definition}[The conditional covariance of causal effects]
We define the conditional covariance (central product moment of causal effects) as
\begin{equation}
\begin{aligned}
&\mathbb{E}\left[\Big\{(Y_i-Y_j)-(\mathbb{E}[Y_i]-\mathbb{E}[Y_j])\Big\}\Big\{(Y_k-Y_h)-(\mathbb{E}[Y_k]-\mathbb{E}[Y_h])\Big\}\Bigg|W=w\right].
\end{aligned}
\end{equation}
\end{definition}

Under SCM ${\cal M}$ and Assumptions \ref{exi4}, \ref{CASEXO2}, and \ref{CMONO2}, given $m\geq 1$,
it is identified by Eq.~\eqref{eq65} conditioned on $W=w$.
Under SCM ${\cal M}$ and Assumptions \ref{exi4} and \ref{CASEXO2}, given $m\geq 1$,
it is bounded by Eqs.~\eqref{eq191} $\sim$ \eqref{eq192} conditioned on $W=w$.

\section{Consistency of Estimators}
\label{appCon}
In this appendix, we provide the details and consistency of all estimators in the body of the paper.

{\bf Details of the estimators in the body of the paper.}
The empirical CDFs and expectations are given by
\begin{gather}
\hat{\mathbb{P}}(Y<y|X=x)=\frac{\sum_{i=1}^N\mathbb{I}(Y_i<y,X_i=x)}{\sum_{i=1}^N\mathbb{I}(X_i=x)},\, \, \, \, \,
\hat{\mathbb{E}}[Y|X=x]=\frac{\sum_{i=1}^NY_i\mathbb{I}(X_i=x)}{\sum_{i=1}^N\mathbb{I}(X_i=x)}
\end{gather}
for any $x \in \Omega_X$ and $y \in \Omega_Y$.
We generate $\{y^1_{k^1}\}_{k^1=1}^{N_1}, \{y^2_{k^2}\}_{k^2=1}^{N_2}, \dots, \{y^m_{k^m}\}_{k^m=1}^{N_m}$  by i.i.d. sampling from a uniform distribution $U[\text{min}(Y),\text{max}(Y)]$ for Monte Carlo integration.

For $m=1,\dots$, the estimators $\hat{\sigma}^{(m)}$ and $\hat{\bar{\sigma}}^{(m)}$ are 
\begin{align}
&\hat{\sigma}^{(m)}=\frac{(b-a)^m}{N_1\dots N_m}\sum_{k^1=1}^{N_1}\dots\sum_{k^m=1}^{N_m}
\max\Big\{\min_{p=1,\dots,m}\{\hat{\mathbb{P}}(Y<{y^p_{k^p}}|X=0)\}-\max_{p=1,\dots,m}\{\hat{\mathbb{P}}(Y<{y^p_{k^p}}|X=1)\},0\Big\}\nonumber\\
&+(-1)^m\frac{(b-a)^m}{N_1\dots N_m}\sum_{k^1=1}^{N_1}\dots\sum_{k^m=1}^{N_m}
\max\Big\{\min_{p=1,\dots,m}\{\hat{\mathbb{P}}(Y<{y^p_{k^p}}|X=1)\}-\max_{p=1,\dots,m}\{\hat{\mathbb{P}}(Y<{y^p_{k^p}}|X=0)\},0\Big\},\nonumber\\
&\hat{\bar{\sigma}}^{(m)}=\frac{(b-a)^m}{N_1\dots N_m}\sum_{k^1=1}^{N_1}\dots\sum_{k^m=1}^{N_m}
\max\Big\{\min_{p=1,\dots,m}\{\hat{\mathbb{P}}(Y-\hat{\mathbb{E}}[Y|X=0]<{y^p_{k^p}}|X=0)\}\nonumber\\
&\hspace{8cm}-\max_{p=1,\dots,m}\{\hat{\mathbb{P}}(Y-\hat{\mathbb{E}}[Y|X=1]<{y^p_{k^p}}|X=1)\},0\Big\}\nonumber\\
&+(-1)^m\frac{(b-a)^m}{N_1\dots N_m}\sum_{k^1=1}^{N_1}\dots\sum_{k^m=1}^{N_m}
\max\Big\{\min_{p=1,\dots,m}\{\hat{\mathbb{P}}(Y-\hat{\mathbb{E}}[Y|X=1]<{y^p_{k^p}}|X=1)\}\nonumber\\
&\hspace{8cm}-\max_{p=1,\dots,m}\{\hat{\mathbb{P}}(Y-\hat{\mathbb{E}}[Y|X=0]<{y^p_{k^p}}|X=0)\},0\Big\}.
\end{align}

For $m=1,\dots$, the estimators $\hat{\sigma}_L^{(m)}$ and $\hat{\sigma}_U^{(m)}$ are given below.

(A). When $m$ is an even number,
\begin{align}
&\hat{\sigma}_L^{(m)}=\frac{(b-a)^m}{N_1\dots N_m}\sum_{k^1=1}^{N_1}\dots\sum_{k^m=1}^{N_m} \Big\{\hat{l}(y^1_{k^1},\dots,y^m_{k^m};1,0)+\hat{l}(y^1_{k^1},\dots,y^m_{k^m};0,1)\Big\},\\
&\hat{\sigma}_U^{(m)}=\frac{(b-a)^m}{N_1\dots N_m}\sum_{k^1=1}^{N_1}\dots\sum_{k^m=1}^{N_m} \Big\{\hat{u}(y^1_{k^1},\dots,y^m_{k^m};1,0)+\hat{u}(y^1_{k^1},\dots,y^m_{k^m};0,1)\Big\},
\end{align}
where 
\begin{align}
&\hat{l}(y_1,\dots,y_m;i,j)=\max\Big\{\sum_{p=1,\dots,m}\hat{\mathbb{P}}(Y<y_p|X=j)-\sum_{p=1,\dots,m}\hat{\mathbb{P}}(Y<y_p|X=i)-m+1,0\Big\},\\
&\hat{u}(y_1,\dots,y_m;i,j)=\min\Big\{\min_{p=1,\dots,m}\{\hat{\mathbb{P}}(Y<y_p|X=j)\},1-\max_{p=1,\dots,m}\{\hat{\mathbb{P}}(Y<y_p|X=i)\}\Big\}.
\end{align}

(B). When $m$ is an odd number,
\begin{align}
&\hat{\sigma}_L^{(m)}=\frac{(b-a)^m}{N_1\dots N_m}\sum_{k^1=1}^{N_1}\dots\sum_{k^m=1}^{N_m} \Big\{\hat{l}(y^1_{k^1},\dots,y^m_{k^m};1,0)-\hat{u}(y^1_{k^1},\dots,y^m_{k^m};0,1)\Big\},\\
&\hat{\sigma}_U^{(m)}=\frac{(b-a)^m}{N_1\dots N_m}\sum_{k^1=1}^{N_1}\dots\sum_{k^m=1}^{N_m} \Big\{\hat{u}(y^1_{k^1},\dots,y^m_{k^m};1,0)-\hat{l}(y^1_{k^1},\dots,y^m_{k^m};0,1)\Big\}.
\end{align}

For $m=1,\dots$, the estimators $\hat{\bar{\sigma}}_L^{(m)}$ and $\hat{\bar{\sigma}}_U^{(m)}$ are given below.

(A). When $m$ is an even number,
\begin{align}
&\hat{\bar{\sigma}}_L^{(m)}=\frac{(b-a)^m}{N_1\dots N_m}\sum_{k^1=1}^{N_1}\dots\sum_{k^m=1}^{N_m} \Big\{\hat{\bar{l}}(y^1_{k^1},\dots,y^m_{k^m};1,0)+\hat{\bar{l}}(y^1_{k^1},\dots,y^m_{k^m};0,1)\Big\},\\
&\hat{\bar{\sigma}}_U^{(m)}=\frac{(b-a)^m}{N_1\dots N_m}\sum_{k^1=1}^{N_1}\dots\sum_{k^m=1}^{N_m} \Big\{\hat{\bar{u}}(y^1_{k^1},\dots,y^m_{k^m};1,0)+\hat{\bar{u}}(y^1_{k^1},\dots,y^m_{k^m};0,1)\Big\},
\end{align}
where 
\begin{align}
&\hat{\bar{l}}(y_1,\dots,y_m;i,j)=\max\Big\{\sum_{p=1,\dots,m}\hat{\mathbb{P}}(Y-\hat{\mathbb{E}}[Y|X=j]<y_p|X=j)\nonumber\\
&\hspace{6cm}-\sum_{p=1,\dots,m}\hat{\mathbb{P}}(Y-\hat{\mathbb{E}}[Y|X=i]<y_p|X=i)-m+1,0\Big\},\\
&\hat{\bar{u}}(y_1,\dots,y_m;i,j)=\min\Big\{\min_{p=1,\dots,m}\{\hat{\mathbb{P}}(Y-\hat{\mathbb{E}}[Y|X=j]<y_p|X=j)\},\nonumber\\
&\hspace{6cm}1-\max_{p=1,\dots,m}\{\hat{\mathbb{P}}(Y-\hat{\mathbb{E}}[Y|X=i]<y_p|X=i)\}\Big\}.
\end{align}

(B). When $m$ is an odd number,
\begin{align}
&\hat{\bar{\sigma}}_L^{(m)}=\frac{(b-a)^m}{N_1\dots N_m}\sum_{k^1=1}^{N_1}\dots\sum_{k^m=1}^{N_m} \Big\{\hat{\bar{l}}(y^1_{k^1},\dots,y^m_{k^m};1,0)-\hat{\bar{u}}(y^1_{k^1},\dots,y^m_{k^m};0,1)\Big\},\\
&\hat{\bar{\sigma}}_U^{(m)}=\frac{(b-a)^m}{N_1\dots N_m}\sum_{k^1=1}^{N_1}\dots\sum_{k^m=1}^{N_m} \Big\{\hat{\bar{u}}(y^1_{k^1},\dots,y^m_{k^m};1,0)-\hat{\bar{l}}(y^1_{k^1},\dots,y^m_{k^m};0,1)\Big\}.
\end{align}

The estimators $\hat{\sigma}(i,j;k,h)$ and $\hat{\bar{\sigma}}(i,j;k,h)$ are 
\begin{align}
\hat{\sigma}(i,j;k,h)&=\frac{(b-a)^2}{N_1N_2}\sum_{k^1=1}^{N_1}\sum_{k^2=1}^{N_2}\max\Big\{\min\{\hat{\mathbb{P}}(Y<y^1_{k^1}|X=j),\hat{\mathbb{P}}(Y<y^2_{k^2}|X=h)\}\nonumber\\
&\hspace{5cm}-\max\{\hat{\mathbb{P}}(Y<y^1_{k^1}|X=i),\hat{\mathbb{P}}(Y<y^2_{k^2}|X=k)\},0\Big\}\nonumber\\
&-\frac{(b-a)^2}{N_1N_2}\sum_{k^1=1}^{N_1}\sum_{k^2=1}^{N_2}\max\Big\{\min\{\hat{\mathbb{P}}(Y<y^1_{k^1}|X=i),\hat{\mathbb{P}}(Y<y^2_{k^2}|X=h)\}\nonumber\\
&\hspace{5cm}-\max\{\hat{\mathbb{P}}(Y<y^1_{k^1}|X=j),\hat{\mathbb{P}}(Y<y^2_{k^2}|X=k)\},0\Big\}\nonumber\\
&-\frac{(b-a)^2}{N_1N_2}\sum_{k^1=1}^{N_1}\sum_{k^2=1}^{N_2}\max\Big\{\min\{\hat{\mathbb{P}}(Y<y^1_{k^1}|X=j),\hat{\mathbb{P}}(Y<y^2_{k^2}|X=k)\}\nonumber\\
&\hspace{5cm}-\max\{\hat{\mathbb{P}}(Y<y^1_{k^1}|X=i),\hat{\mathbb{P}}(Y<y^2_{k^2}|X=h)\},0\Big\}\nonumber\\
&+\frac{(b-a)^2}{N_1N_2}\sum_{k^1=1}^{N_1}\sum_{k^2=1}^{N_2}\max\Big\{\min\{\hat{\mathbb{P}}(Y<y^1_{k^1}|X=i),\hat{\mathbb{P}}(Y<y^2_{k^2}|X=k)\}\nonumber\\
&\hspace{5cm}-\max\{\hat{\mathbb{P}}(Y<y^1_{k^1}|X=j),\hat{\mathbb{P}}(Y<y^2_{k^2}|X=h)\},0\Big\},
\end{align}
\begin{align}
\hat{\bar{\sigma}}(i,j;k,h)
&=\frac{(b-a)^2}{N_1N_2}\sum_{k^1=1}^{N_1}\sum_{k^2=1}^{N_2}\max\Big\{\min\{\hat{\mathbb{P}}(Y-\hat{\mathbb{E}}[Y|X=j]<y^1_{k^1}|X=j),\hat{\mathbb{P}}(Y-\hat{\mathbb{E}}[Y|X=h]<y^2_{k^2}|X=h)\}\nonumber\\
&\hspace{3cm}-\max\{\hat{\mathbb{P}}(Y-\hat{\mathbb{E}}[Y|X=i]<y^1_{k^1}|X=i),\hat{\mathbb{P}}(Y-\hat{\mathbb{E}}[Y|X=k]<y^2_{k^2}|X=k)\},0\Big\}\nonumber\\
&-\frac{(b-a)^2}{N_1N_2}\sum_{k^1=1}^{N_1}\sum_{k^2=1}^{N_2}\max\Big\{\min\{\hat{\mathbb{P}}(Y-\hat{\mathbb{E}}[Y|X=i]<y^1_{k^1}|X=i),\hat{\mathbb{P}}(Y-\hat{\mathbb{E}}[Y|X=h]<y^2_{k^2}|X=h)\}\nonumber\\
&\hspace{3cm}-\max\{\hat{\mathbb{P}}(Y-\hat{\mathbb{E}}[Y|X=j]<y^1_{k^1}|X=j),\hat{\mathbb{P}}(Y-\hat{\mathbb{E}}[Y|X=k]<y^2_{k^2}|X=k)\},0\Big\}\nonumber\\
&-\frac{(b-a)^2}{N_1N_2}\sum_{k^1=1}^{N_1}\sum_{k^2=1}^{N_2}\max\Big\{\min\{\hat{\mathbb{P}}(Y-\hat{\mathbb{E}}[Y|X=j]<y^1_{k^1}|X=j),\hat{\mathbb{P}}(Y-\hat{\mathbb{E}}[Y|X=k]<y^2_{k^2}|X=k)\}\nonumber\\
&\hspace{3cm}-\max\{\hat{\mathbb{P}}(Y-\hat{\mathbb{E}}[Y|X=i]<y^1_{k^1}|X=i),\hat{\mathbb{P}}(Y-\hat{\mathbb{E}}[Y|X=h]<y^2_{k^2}|X=h)\},0\Big\}\nonumber\\
&+\frac{(b-a)^2}{N_1N_2}\sum_{k^1=1}^{N_1}\sum_{k^2=1}^{N_2}\max\Big\{\min\{\hat{\mathbb{P}}(Y-\hat{\mathbb{E}}[Y|X=i]<y^1_{k^1}|X=i),\hat{\mathbb{P}}(Y-\hat{\mathbb{E}}[Y|X=k]<y^2_{k^2}|X=k)\}\nonumber\\
&\hspace{3cm}-\max\{\hat{\mathbb{P}}(Y-\hat{\mathbb{E}}[Y|X=j]<y^1_{k^1}|X=j),\hat{\mathbb{P}}(Y-\hat{\mathbb{E}}[Y|X=h]<y^2_{k^2}|X=h)\},0\Big\}.
\end{align}

The estimators $\hat{\sigma}_L(i,j;k,h)$ and $\hat{\sigma}_U(i,j;k,h)$ are 
\begin{align}
\hat{\sigma}_L(i,j;k,h)
&=\frac{(b-a)^2}{N_1N_2}\sum_{k^1=1}^{N_1}\sum_{k^2=1}^{N_2}\hat{l}(y^1_{k^1},y^2_{k^2};i,j,k,h)-\frac{(b-a)^2}{N_1N_2}\sum_{k^1=1}^{N_1}\sum_{k^2=1}^{N_2}\hat{u}(y^1_{k^1},y^2_{k^2};j,i,k,h)\nonumber\\
&\hspace{1cm}-\frac{(b-a)^2}{N_1N_2}\sum_{k^1=1}^{N_1}\sum_{k^2=1}^{N_2}\hat{u}(y^1_{k^1},y^2_{k^2};i,j,h,k)+\frac{(b-a)^2}{N_1N_2}\sum_{k^1=1}^{N_1}\sum_{k^2=1}^{N_2}\hat{l}(y^1_{k^1},y^2_{k^2};j,i,h,k),\\
\hat{\sigma}_U(i,j;k,h)
&=\frac{(b-a)^2}{N_1N_2}\sum_{k^1=1}^{N_1}\sum_{k^2=1}^{N_2}\hat{u}(y^1_{k^1},y^2_{k^2};i,j,k,h)-\frac{(b-a)^2}{N_1N_2}\sum_{k^1=1}^{N_1}\sum_{k^2=1}^{N_2}\hat{l}(y^1_{k^1},y^2_{k^2};j,i,k,h)\nonumber\\
&\hspace{1cm}-\frac{(b-a)^2}{N_1N_2}\sum_{k^1=1}^{N_1}\sum_{k^2=1}^{N_2}\hat{l}(y^1_{k^1},y^2_{k^2};i,j,h,k)+\frac{(b-a)^2}{N_1N_2}\sum_{k^1=1}^{N_1}\sum_{k^2=1}^{N_2}\hat{u}(y^1_{k^1},y^2_{k^2};j,i,h,k),
\end{align}
where
\begin{align}
&\hat{l}(y_1,y_2;i,j,k,h)\nonumber\\
&=\max\Big\{\hat{\mathbb{P}}(Y<y_1|X=j)-\hat{\mathbb{P}}(Y <y_1|X=i)+\hat{\mathbb{P}}(Y<y_2|X=h)-\hat{\mathbb{P}}(Y<y_2|X=k)-1,0\Big\},\\
&\hat{u}(y_1,y_2;i,j,k,h)\nonumber\\
&=\min\Big\{\min\{\hat{\mathbb{P}}(Y<y_1|X=j),\hat{\mathbb{P}}(Y<y_2|X=h)\},1-\max\{\hat{\mathbb{P}}(Y<y_1|X=i),\hat{\mathbb{P}}(Y<y_2|X=k)\}\Big\}.
\end{align}

The estimators $\hat{\bar{\sigma}}_L(i,j;k,h)$ and $\hat{\bar{\sigma}}_U(i,j;k,h)$ are 
\begin{align}
\hat{\bar{\sigma}}_L(i,j;k,h)
&=\frac{(b-a)^2}{N_1N_2}\sum_{k^1=1}^{N_1}\sum_{k^2=1}^{N_2}\hat{\bar{l}}(y^1_{k^1},y^2_{k^2};i,j,k,h)-\frac{(b-a)^2}{N_1N_2}\sum_{k^1=1}^{N_1}\sum_{k^2=1}^{N_2}\hat{\bar{u}}(y^1_{k^1},y^2_{k^2};j,i,k,h)\nonumber\\
&\hspace{1cm}-\frac{(b-a)^2}{N_1N_2}\sum_{k^1=1}^{N_1}\sum_{k^2=1}^{N_2}\hat{\bar{u}}(y^1_{k^1},y^2_{k^2};i,j,h,k)+\frac{(b-a)^2}{N_1N_2}\sum_{k^1=1}^{N_1}\sum_{k^2=1}^{N_2}\hat{\bar{l}}(y^1_{k^1},y^2_{k^2};j,i,h,k),\\
\hat{\bar{\sigma}}_U(i,j;k,h)
&=\frac{(b-a)^2}{N_1N_2}\sum_{k^1=1}^{N_1}\sum_{k^2=1}^{N_2}\hat{\bar{u}}(y^1_{k^1},y^2_{k^2};i,j,k,h)-\frac{(b-a)^2}{N_1N_2}\sum_{k^1=1}^{N_1}\sum_{k^2=1}^{N_2}\hat{\bar{l}}(y^1_{k^1},y^2_{k^2};j,i,k,h)\nonumber\\
&\hspace{1cm}-\frac{(b-a)^2}{N_1N_2}\sum_{k^1=1}^{N_1}\sum_{k^2=1}^{N_2}\hat{\bar{l}}(y^1_{k^1},y^2_{k^2};i,j,h,k)+\frac{(b-a)^2}{N_1N_2}\sum_{k^1=1}^{N_1}\sum_{k^2=1}^{N_2}\hat{\bar{u}}(y^1_{k^1},y^2_{k^2};j,i,h,k),
\end{align}
where
\begin{align}
\hat{\bar{l}}(y_1,y_2;i,j,k,h)
&=\max\Big\{\hat{\mathbb{P}}(Y-\hat{\mathbb{E}}[Y|X=j]<y_1|X=j)-\hat{\mathbb{P}}(Y-\hat{\mathbb{E}}[Y|X=i]<y_1|X=i)\nonumber\\
&\hspace{1cm}+\hat{\mathbb{P}}(Y-\hat{\mathbb{E}}[Y|X=h]<y_2|X=h)-\hat{\mathbb{P}}(Y-\hat{\mathbb{E}}[Y|X=k]<y_2|X=k)-1,0\Big\},\\
\hat{\bar{u}}(y_1,y_2;i,j,k,h)
&=\min\Big\{\min\{\hat{\mathbb{P}}(Y-\hat{\mathbb{E}}[Y|X=j]<y_1|X=j),\hat{\mathbb{P}}(Y-\hat{\mathbb{E}}[Y|X=h]<y_2|X=h)\},\nonumber\\
&\hspace{1cm}1-\max\{\hat{\mathbb{P}}(Y-\hat{\mathbb{E}}[Y|X=i]<y_1|X=i),\hat{\mathbb{P}}(Y-\hat{\mathbb{E}}[Y|X=k]<y_2|X=k)\}\Big\}.
\end{align}

{\bf Consistency of the estimators for the moments of causal effects.}
First, the empirical CDFs and expectations follow $\displaystyle O_p\left(1/\sqrt{N}\right)$ for any $x \in \Omega_X$ and $y \in \Omega_Y$.
From the delta method \citep{Doob1935,Fang2018},
both $\max\{\min_{p=1,\dots,m}\{\hat{\mathbb{P}}(Y<y_p|X=0)\}-\max_{p=1,\dots,m}\{\hat{\mathbb{P}}(Y<y_p|X=1)\},0\}$ and $\max\{\min_{p=1,\dots,m}\{\hat{\mathbb{P}}(Y<y_p|X=1)\}-\max_{p=1,\dots,m}\{\hat{\mathbb{P}}(Y<y_p|X=0)\},0\}$ follow $ O_p\left(1/\sqrt{N^m}\right)$, {almost surely} w.r.t. $\Omega_Y^m$.
For any $m=1,\dots$, we have
\begin{align}
&\Big|\hat{\sigma}^{(m)}-\sigma^{(m)}\Big|\nonumber\\
&=\Bigg|\Bigg\{\frac{(b-a)^m}{N_1\dots N_m}\sum_{k^1=1}^{N_1}\dots\sum_{k^m=1}^{N_m}
\max\Big\{\min_{p=1,\dots,m}\{\hat{\mathbb{P}}(Y<{y^p_{k^p}}|X=0)\}-\max_{p=1,\dots,m}\{\hat{\mathbb{P}}(Y<{y^p_{k^p}}|X=1)\},0\Big\}\nonumber\\
&+(-1)^m\frac{(b-a)^m}{N_1\dots N_m}\sum_{k^1=1}^{N_1}\dots\sum_{k^m=1}^{N_m}
\max\Big\{\min_{p=1,\dots,m}\{\hat{\mathbb{P}}(Y<{y^p_{k^p}}|X=1)\}-\max_{p=1,\dots,m}\{\hat{\mathbb{P}}(Y<{y^p_{k^p}}|X=0)\},0\Big\}\Bigg\}\nonumber\\
&-\Bigg\{\sigma^{(m)}=\int_{{\Omega_Y}^m} \max\Big\{\min_{p=1,\dots,m}\{\mathbb{P}(Y<y_p|X=0)\}-\max_{p=1,\dots,m}\{\mathbb{P}(Y<y_p|X=1)\},0\Big\}dy_1\dots dy_m\nonumber\\
&\hspace{1cm}+(-1)^m\int_{{\Omega_Y}^m} \max\Big\{\min_{p=1,\dots,m}\{\mathbb{P}(Y<y_p|X=1)\}-\max_{p=1,\dots,m}\{\mathbb{P}(Y<y_p|X=0)\},0\Big\}dy_1\dots dy_m\Bigg\}\Bigg|\nonumber\\
&=\Bigg|\Bigg\{\frac{(b-a)^m}{N_1\dots N_m}\sum_{k^1=1}^{N_1}\dots\sum_{k^m=1}^{N_m}
\max\Big\{\min_{p=1,\dots,m}\{\hat{\mathbb{P}}(Y<{y^p_{k^p}}|X=0)\}-\max_{p=1,\dots,m}\{\hat{\mathbb{P}}(Y<{y^p_{k^p}}|X=1)\},0\Big\}\nonumber\\
&+(-1)^m\frac{(b-a)^m}{N_1\dots N_m}\sum_{k^1=1}^{N_1}\dots\sum_{k^m=1}^{N_m}
\max\Big\{\min_{p=1,\dots,m}\{\hat{\mathbb{P}}(Y<{y^p_{k^p}}|X=1)\}-\max_{p=1,\dots,m}\{\hat{\mathbb{P}}(Y<{y^p_{k^p}}|X=0)\},0\Big\}\Bigg\}\nonumber\\
&-\Bigg\{\int_{{\Omega_Y}^m} \max\Big\{\min_{p=1,\dots,m}\{\hat{\mathbb{P}}(Y<y_p|X=0)\}-\max_{p=1,\dots,m}\{\hat{\mathbb{P}}(Y<y_p|X=1)\},0\Big\}dy_1\dots dy_m\nonumber\\
&\hspace{1cm}+(-1)^m\int_{{\Omega_Y}^m} \max\Big\{\min_{p=1,\dots,m}\{\hat{\mathbb{P}}(Y<y_p|X=1)\}-\max_{p=1,\dots,m}\{\hat{\mathbb{P}}(Y<y_p|X=0)\},0\Big\}dy_1\dots dy_m\Bigg\}\nonumber\\
&+\Bigg\{\int_{{\Omega_Y}^m} \max\Big\{\min_{p=1,\dots,m}\{\hat{\mathbb{P}}(Y<y_p|X=0)\}-\max_{p=1,\dots,m}\{\hat{\mathbb{P}}(Y<y_p|X=1)\},0\Big\}dy_1\dots dy_m\nonumber\\
&\hspace{1cm}+(-1)^m\int_{{\Omega_Y}^m} \max\Big\{\min_{p=1,\dots,m}\{\hat{\mathbb{P}}(Y<y_p|X=1)\}-\max_{p=1,\dots,m}\{\hat{\mathbb{P}}(Y<y_p|X=0)\},0\Big\}dy_1\dots dy_m\Bigg\}\nonumber\\
&-\Bigg\{\int_{{\Omega_Y}^m} \max\Big\{\min_{p=1,\dots,m}\{\mathbb{P}(Y<y_p|X=0)\}-\max_{p=1,\dots,m}\{\mathbb{P}(Y<y_p|X=1)\},0\Big\}dy_1\dots dy_m\nonumber\\
&\hspace{1cm}+(-1)^m\int_{{\Omega_Y}^m} \max\Big\{\min_{p=1,\dots,m}\{\mathbb{P}(Y<y_p|X=1)\}-\max_{p=1,\dots,m}\{\mathbb{P}(Y<y_p|X=0)\},0\Big\}dy_1\dots dy_m\Bigg\}\Bigg|\nonumber\\
&\leq\Bigg|\Bigg\{\frac{(b-a)^m}{N_1\dots N_m}\sum_{k^1=1}^{N_1}\dots\sum_{k^m=1}^{N_m}
\max\Big\{\min_{p=1,\dots,m}\{\hat{\mathbb{P}}(Y<{y^p_{k^p}}|X=0)\}-\max_{p=1,\dots,m}\{\hat{\mathbb{P}}(Y<{y^p_{k^p}}|X=1)\},0\Big\}\nonumber\\
&+(-1)^m\frac{(b-a)^m}{N_1\dots N_m}\sum_{k^1=1}^{N_1}\dots\sum_{k^m=1}^{N_m}
\max\Big\{\min_{p=1,\dots,m}\{\hat{\mathbb{P}}(Y<{y^p_{k^p}}|X=1)\}-\max_{p=1,\dots,m}\{\hat{\mathbb{P}}(Y<{y^p_{k^p}}|X=0)\},0\Big\}\Bigg\}\nonumber\\
&-\Bigg\{\int_{{\Omega_Y}^m} \max\Big\{\min_{p=1,\dots,m}\{\hat{\mathbb{P}}(Y<y_p|X=0)\}-\max_{p=1,\dots,m}\{\hat{\mathbb{P}}(Y<y_p|X=1)\},0\Big\}dy_1\dots dy_m\nonumber\\
&\hspace{1cm}+(-1)^m\int_{{\Omega_Y}^m} \max\Big\{\min_{p=1,\dots,m}\{\hat{\mathbb{P}}(Y<y_p|X=1)\}-\max_{p=1,\dots,m}\{\hat{\mathbb{P}}(Y<y_p|X=0)\},0\Big\}dy_1\dots dy_m\Bigg\}\Bigg|\nonumber\\
&+\Bigg|\Bigg\{\int_{{\Omega_Y}^m} \max\Big\{\min_{p=1,\dots,m}\{\hat{\mathbb{P}}(Y<y_p|X=0)\}-\max_{p=1,\dots,m}\{\hat{\mathbb{P}}(Y<y_p|X=1)\},0\Big\}dy_1\dots dy_m\nonumber\\
&\hspace{1cm}+(-1)^m\int_{{\Omega_Y}^m} \max\Big\{\min_{p=1,\dots,m}\{\hat{\mathbb{P}}(Y<y_p|X=1)\}-\max_{p=1,\dots,m}\{\hat{\mathbb{P}}(Y<y_p|X=0)\},0\Big\}dy_1\dots dy_m\Bigg\}\nonumber\\
&-\Bigg\{\int_{{\Omega_Y}^m} \max\Big\{\min_{p=1,\dots,m}\{\mathbb{P}(Y<y_p|X=0)\}-\max_{p=1,\dots,m}\{\mathbb{P}(Y<y_p|X=1)\},0\Big\}dy_1\dots dy_m\nonumber\\
&\hspace{1cm}+(-1)^m\int_{{\Omega_Y}^m} \max\Big\{\min_{p=1,\dots,m}\{\mathbb{P}(Y<y_p|X=1)\}-\max_{p=1,\dots,m}\{\mathbb{P}(Y<y_p|X=0)\},0\Big\}dy_1\dots dy_m\Bigg\}\Bigg|\nonumber\\
&=\Bigg|\Bigg\{\frac{(b-a)^m}{N_1\dots N_m}\sum_{k^1=1}^{N_1}\dots\sum_{k^m=1}^{N_m}
\max\Big\{\min_{p=1,\dots,m}\{\hat{\mathbb{P}}(Y<{y^p_{k^p}}|X=0)\}-\max_{p=1,\dots,m}\{\hat{\mathbb{P}}(Y<{y^p_{k^p}}|X=1)\},0\Big\}\nonumber\\
&+(-1)^m\frac{(b-a)^m}{N_1\dots N_m}\sum_{k^1=1}^{N_1}\dots\sum_{k^m=1}^{N_m}
\max\Big\{\min_{p=1,\dots,m}\{\hat{\mathbb{P}}(Y<{y^p_{k^p}}|X=1)\}-\max_{p=1,\dots,m}\{\hat{\mathbb{P}}(Y<{y^p_{k^p}}|X=0)\},0\Big\}\Bigg\}\nonumber\\
&-\Bigg\{\int_{{\Omega_Y}^m} \max\Big\{\min_{p=1,\dots,m}\{\hat{\mathbb{P}}(Y<y_p|X=0)\}-\max_{p=1,\dots,m}\{\hat{\mathbb{P}}(Y<y_p|X=1)\},0\Big\}dy_1\dots dy_m\nonumber\\
&\hspace{1cm}+(-1)^m\int_{{\Omega_Y}^m} \max\Big\{\min_{p=1,\dots,m}\{\hat{\mathbb{P}}(Y<y_p|X=1)\}-\max_{p=1,\dots,m}\{\hat{\mathbb{P}}(Y<y_p|X=0)\},0\Big\}dy_1\dots dy_m\Bigg\}\Bigg|\nonumber\\
&+\Bigg|\int_{{\Omega_Y}^m}\Bigg\{\max\Big\{\min_{p=1,\dots,m}\{\hat{\mathbb{P}}(Y<y_p|X=0)\}-\max_{p=1,\dots,m}\{\hat{\mathbb{P}}(Y<y_p|X=1)\},0\Big\}\nonumber\\
&\hspace{4cm}+(-1)^m\max\Big\{\min_{p=1,\dots,m}\{\hat{\mathbb{P}}(Y<y_p|X=1)\}-\max_{p=1,\dots,m}\{\hat{\mathbb{P}}(Y<y_p|X=0)\},0\Big\}\nonumber\\
&-\max\Big\{\min_{p=1,\dots,m}\{\mathbb{P}(Y<y_p|X=0)\}-\max_{p=1,\dots,m}\{\mathbb{P}(Y<y_p|X=1)\},0\Big\}\nonumber\\
&\hspace{2cm}-(-1)^m\max\Big\{\min_{p=1,\dots,m}\{\mathbb{P}(Y<y_p|X=1)\}-\max_{p=1,\dots,m}\{\mathbb{P}(Y<y_p|X=0)\},0\Big\}\Bigg\}dy_1\dots dy_m\Bigg|\nonumber\\
&= O_p\left(\sum_{i=1}^m\frac{1}{\sqrt{N_i}}\right)+ O_p\left(\frac{1}{\sqrt{N^m}}\right).
\end{align}
Then, under SCM ${\cal M}$ and Assumptions \ref{ASEXO2}, \ref{MONO2}, and \ref{exi1}, $\sigma^{(m)}$ follows $\displaystyle O_p\left(1/\sqrt{N^m}+\sum\nolimits_{i=1}^m1/\sqrt{N_i}\right)$.
Similarly, under SCM ${\cal M}$ and Assumptions \ref{ASEXO2}, \ref{MONO2}, and \ref{exi1}, $\displaystyle\sigma_L^{(m)}, \sigma_U^{(m)}$ follow $\displaystyle O_p\left(1/\sqrt{N^m}+\sum\nolimits_{i=1}^m1/\sqrt{N_i}\right)$.
Under SCM ${\cal M}$ and Assumptions \ref{ASEXO2}, \ref{MONO2}, and \ref{exi2}, $\displaystyle\sigma(i,j;k,h), \sigma_L(i,j;k,h), \sigma_U(i,j;k,h)$ follow $\displaystyle O_p\left(1/\sqrt{N^2}+\sum\nolimits_{i=1}^21/\sqrt{N_i}\right)$.
We can make $N_1,\dots,N_m$ as large as computational resources allow. 
Letting $N_1,\dots,N_m \rightarrow \infty$, $\displaystyle\sigma^{(m)}, \sigma_L^{(m)}, \sigma_U^{(m)}$ follow $\displaystyle O_p\left(1/\sqrt{N^m}\right)$ and $\displaystyle\sigma(i,j;k,h), \sigma_L(i,j;k,h), \sigma_U(i,j;k,h)$ follow $\displaystyle O_p\left(1/\sqrt{N^2}\right)$.

{\bf Consistency of the estimators for the central moments of causal effects.}
First, empirical CDFs and expectations follow $\displaystyle O_p\left(1/\sqrt{N}\right)$ for any $x \in \Omega_X$ and $y \in \Omega_Y$.
For the central moment, we make an additional assumption.
\begin{assumption}
\label{Lip}
$\mathbb{P}(Y<y|X=x)$ is differential in $y$ for any $x \in \Omega_X$, almost surely w.r.t. $\Omega_Y$.
\end{assumption}
Then, under SCM ${\cal M}$ and Assumptions \ref{ASEXO2}, \ref{MONO2}, \ref{exi3}, and \ref{Lip},
$\sigma^{(m)}, \sigma_L^{(m)}, \sigma_U^{(m)}$ follow $\displaystyle O_p\left(1/\sqrt{N^{2m}}+\sum\nolimits_{i=1}^m1/\sqrt{N_i}\right)$.
Under SCM ${\cal M}$ and Assumptions \ref{ASEXO2}, \ref{MONO2}, \ref{exi4}, and \ref{Lip},
$\sigma(i,j;k,h), \sigma_L(i,j;k,h), \sigma_U(i,j;k,h)$ follow $\displaystyle O_p\left(1/\sqrt{N^4}+\sum\nolimits_{i=1}^21/\sqrt{N_i}\right)$.
Letting $N_1,\dots,N_m \rightarrow \infty$, 
$\sigma^{(m)}, \sigma_L^{(m)}, \sigma_U^{(m)}$ follow $\displaystyle O_p\left(1/\sqrt{N^{2m}}\right)$ and $\sigma(i,j;k,h), \sigma_L(i,j;k,h), \sigma_U(i,j;k,h)$ follow $\displaystyle O_p\left(1/\sqrt{N^4}\right)$.

Thus, all estimators for the moments of causal effects are consistent.


\section{Additional Information about the Application to Real-World}
\label{appE}

In this section, we provide additional information about the application in the body of our paper.

{\bf Additional information about the moments of causal effects.}

We provide the bounds of variance, standard deviation, skewness, and kurtosis of causal effect $Y_2-Y_1$.
When the denominator of the estimator is zero, we replace it with $0.01$ to avoid numerical instability.
The estimated bounds of the causal effect $Y_2-Y_1$ are
\begin{center}
\textbf{Upper bound of variance}: $29.851$ (95\%CI: $[12.173,53.740]$),\\\vspace{0.1cm}
\textbf{Lower bound of variance}: $0.000$ (95\%CI: $[0.000,0.000]$),\\\vspace{0.1cm}
\textbf{Upper bound of standard deviation}: $5.463$ (95\%CI: $[3.489,7.33]$),\\\vspace{0.1cm}
\textbf{Lower bound of standard deviation}: $0.000$ (95\%CI: $[0.000,0.000]$),\\\vspace{0.1cm}
\textbf{Upper bound of skewness}: $71296.725$ (95\%CI: $[0.000,258849.124]$),\\\vspace{0.1cm}
\textbf{Lower bound of skewness}: $-13168.940$ (95\%CI: $[-64712.250,0.000]$),\\\vspace{0.1cm}
\textbf{Upper bound of kurtosis}: $6637894.937$ (95\%CI: $[0.000,35261058.567]$),\\\vspace{0.1cm}
\textbf{Lower bound of kurtosis}: $0.000$ (95\%CI: $[0.000,0.000]$).
\end{center}

We also provide the variance, skewness, and kurtosis of causal effect $Y_4-Y_2$.
The results are:
\begin{center}
\textbf{Mean}: $3.124$ (95\%CI: $[0.364,5.820]$),\\\vspace{0.1cm}
\textbf{Variance}: $3.104$ (95\%CI: $[0.297,8.610]$),\\\vspace{0.1cm}
\textbf{Skewness}: $-3.373$ (95\%CI: $[-28.287,8.821]$),\\\vspace{0.1cm}
\textbf{Kurtosis}: $18.765$ (95\%CI: $[0.000,203.983]$).
\end{center}
The estimated bounds of the causal effect $Y_4-Y_2$ are
\begin{center}
\textbf{Upper bound of variance}: $29.503$ (95\%CI: $[10.532,61.318]$),\\\vspace{0.1cm}
\textbf{Lower bound of variance}: $0.000$ (95\%CI: $[0.000,0.000]$),\\\vspace{0.1cm}
\textbf{Upper bound of skewness}: $34944.621$ (95\%CI: $[0.000,129424.512]$),\\\vspace{0.1cm}
\textbf{Lower bound of skewness}: $-79272.510$ (95\%CI: $[-268151.400,0.000]$),\\\vspace{0.1cm}
\textbf{Upper bound of kurtosis}: $3261648.987$ (95\%CI: $[0.000,22258543.128]$),\\\vspace{0.1cm}
\textbf{Lower bound of kurtosis}: $0.000$ (95\%CI: $[0.000,0.000]$).
\end{center}
We provide the variance, skewness, and kurtosis of causal effect $Y_4-Y_1$.
The results are:
\begin{center}
\textbf{Mean}: $6.551$ (95\%CI: $[4.549,9.034]$),\\\vspace{0.1cm}
\textbf{Variance}: $2.078$ (95\%CI: $[0.000,5.953]$),\\\vspace{0.1cm}
\textbf{Skewness}: $5.797$ (95\%CI: $[-24.688,65.521]$),\\\vspace{0.1cm}
\textbf{Kurtosis}: $14.338$ (95\%CI: $[0.000,242.5153]$).
\end{center}
The estimated bounds of the causal effect $Y_4-Y_1$ are
\begin{center}
\textbf{Upper bound of variance}: $25.160$ (95\%CI: $[9.034,42.910]$),\\\vspace{0.1cm}
\textbf{Lower bound of variance}: $0.000$ (95\%CI: $[0.000,0.000]$),\\\vspace{0.1cm}
\textbf{Upper bound of skewness}: $34135.710$ (95\%CI: $[0.000,113246.400]$),\\\vspace{0.1cm}
\textbf{Lower bound of skewness}: $-19251.891$ (95\%CI: $[-97068.380,0.000]$),\\\vspace{0.1cm}
\textbf{Upper bound of kurtosis}: $2291969,972$ (95\%CI: $[0.000,17630519.123]$),\\\vspace{0.1cm}
\textbf{Lower bound of kurtosis}: $0.000$ (95\%CI: $[0.000,0.000]$).
\end{center}

{\bf Additional information about the product moments of causal effects.}

We provide the bounds of covariance and correlation of causal effects $Y_2-Y_1$ and $Y_4-Y_2$.
When the denominator of the estimator is zero, it is replaced with $0.01$ to prevent numerical instability.
The estimated bounds of the covariance of causal effect $Y_2-Y_1$ and $Y_4-Y_2$ are
\begin{center}
\textbf{Upper bound of covariance}: $27.613$ (95\%CI: $[9.486,50.481]$),\\\vspace{0.1cm}
\textbf{Lower bound of covariance}: $-29.172$ (95\%CI: $[-52.263,-11.282]$),
\end{center}
and the estimated bounds of the correlation of causal effect $Y_2-Y_1$ and $Y_4-Y_2$ are
\begin{center}
\textbf{Upper bound of correlation}: $0.919$ (95\%CI: $[0.000,1.000]$),\\\vspace{0.1cm}
\textbf{Lower bound of correlation}: $-0.976$ (95\%CI: $[-1.000,-0.923]$).
\end{center}


\end{document}